\definecolor{armygreen}{rgb}{0.29, 0.33, 0.13}
\definecolor{auburn}{rgb}{0.43, 0.21, 0.1}
\definecolor{burgundy}{rgb}{0.5, 0.0, 0.13}
\definecolor{medium red}{rgb}{.490,.298,.337}
\definecolor{dark red}{rgb}{.235,.141,.161}
\definecolor{dark green}{rgb}{0.0,0.5,0.0}
\newtheorem{theorem}{Theorem}[section]
\newtheorem{proposition}{Proposition}[section]
\newtheorem{claim}{Claim}[section]
\newtheorem{corollary}{Corollary}[section]
\newtheorem{obs}{Observation}[section]
\theoremstyle{definition}
\newtheorem{definition}{Definition}[section]
\theoremstyle{definition}
\newtheorem{example}{Example}[section]
\theoremstyle{definition}
\newtheorem{remark}{Remark}[section]
\theoremstyle{definition}
\newtheorem{note}{Note}[section]
\newcommand*{\claimproofname}{Proof of Claim}
\newenvironment{claimproof}[1][\claimproofname]{\begin{proof}[#1]}{\end{proof}}
\tikzset{
	treenode/.style = {shape=rectangle, rounded corners,
		draw, align=center,
		top color=white, bottom color=blue!20},
	root/.style  = {shape=rectangle, rounded corners,
		draw, align=center,
		top color=white, bottom color=blue!20},
	root1/.style  = {shape=rectangle, rounded corners,
		draw, align=center,
		top color=white, bottom color=red!40},
	root2/.style  = {shape=rectangle, rounded corners,
		draw, align=center,
		top color=white, bottom color=red!20},
	env/.style  = {shape=rectangle, rounded corners,
		draw, align=center,
		top color=white, bottom color=blue!20},
}
\title{Simple dominance of fixed priority top trading cycles\thanks{I am grateful to Oishee Banerjee for helpful suggestions. I am particularly indebted to Szilvia P\'{a}pai whose insightful comments led to a significant improvement of the contents of this paper. Thanks also to Clayton Thomas for his questions and comments.}}
\author{Pinaki Mandal\thanks{E-mail: \textit{pinaki.mandal@asu.edu}} }
\affil{Department of Economics, Arizona State University, USA}
\date{ }
\begin{document} 	
	\maketitle
	
	\begin{abstract}
		We study the implementation of fixed priority top trading cycles (FPTTC) rules via simply dominant mechanisms \citep{pycia2019theory} in the context of assignment problems, where agents are to be assigned at most one indivisible object and monetary transfers are not allowed. We consider both models -- with and without outside options, and characterize all simply dominant FPTTC rules in both models. We further introduce the notion of \textit{simple strategy-proofness} to resolve the issue with agents being concerned about having time-inconsistent preferences, and discuss its relation with simple dominance.
	\end{abstract}
	
	\noindent \textbf{Keywords:} {Fixed priority top trading cycles; Assignment problem; Simple dominance; Simple strategy-proofness; Indivisible goods}
	
	\noindent \textbf{JEL Classification:} C78; D82
	
	\newpage

	\maketitle

	\section{Introduction}\label{section introduction}

	We consider the well-known \textit{assignment problem} (also known as \textit{house allocation problem} or \textit{resource allocation problem}) where a set of heterogeneous indivisible objects are to be allocated among a group of agents so that each agent receives at most one object and monetary transfers are not allowed. Such problems arise when, for instance, the Government wants to assign houses to the citizens, or hospitals to doctors, or a manager wants to allocate offices to employees, or tasks to workers, or a professor wants to assign projects to students. Agents are asked to report their strict preferences over the objects and remaining unassigned (also known as the outside option), and a mechanism selects an allocation (of the objects among the agents) based on the agents' reports. Note that an object is acceptable to an agent if and only if she prefers that object to the outside option. 
	
	An important consideration while designing mechanisms is to implement desirable outcomes when the participating agents are strategic. The standard notion of \textit{strategy-proofness} requires truth-telling to be a weakly dominant strategy, that is, no agent can be strictly better off by misreporting her (true) preference.
	
	\textit{Fixed priority top trading cycles (FPTTC) mechanism} \citep{papai2000strategyproof} is a well-known strategy-proof way to assign objects in the absence of transfers.\footnote{\citet{papai2000strategyproof} uses the term ``fixed endowment hierarchical exchange rule'' to refer to an FPTTC mechanism.} An FPTTC mechanism works in steps. At each step, the objects (available at that step) are owned by certain agents who then trade their objects by forming \textit{top trading cycles (TTC)}.\footnote{TTC is due to David Gale and discussed in \citet{shapley1974cores}.} Ownership of the objects at the start of each step is determined by an exogenously given \textit{priority structure} -- a collection of strict priority rankings of the agents indexed by the objects.
	
	Apart from strategy-proofness, FPTTC mechanisms also satisfy \textit{group strategy-proofness}, \textit{Pareto efficiency}, and \textit{non-bossiness} \citep{satterthwaite1981strategy}. Group strategy-proofness ensures that no group of agents can jointly manipulate their reports so that all of them weakly benefit from this manipulation, while at least one agent in the group strictly benefits. Pareto efficiency ensures that there is no other way to allocate the objects so that each agent is weakly better off (and hence some agent is strictly better-off). Notice that no agent is assigned an unacceptable object in a Pareto efficient allocation. Non-bossiness says that an agent cannot change the assignment of another one without changing her own assignment. However, despite their appealing theoretical properties, the use of FPTTC mechanisms in practice is rare as participating agents find it difficult to understand them, particularly the fact that these mechanisms are strategy-proof (see, for example, \citet{pathak2017really}).\footnote{Similar phenomena are also observed in other settings, see \citet{chen2006school}, \citet{hassidim2017mechanism}, \citet{hassidim2018strategic}, \citet{rees2018suboptimal}, and \citet{shorrer2018obvious} for details.}
	
	The notion of \textit{obvious strategy-proofness (OSP)} \citep{li2017obviously} has emerged as a remedy by strengthening strategy-proofness in a way so that it becomes transparent to the participating agents that a mechanism is not manipulable. The concept of OSP is based on the notion of \textit{obvious dominance} in an \textit{extensive-form game}. A strategy $s_i$ of an agent $i$ in an extensive-form game is obviously dominant if, whenever agent $i$ is called to play, even the worst possible final outcome from following $s_i$ is at least as good as the best possible outcome from following any deviating strategy $s'_i$ of agent $i$, where the best and worst cases are determined by considering all possible strategies that could be played by $i$'s opponents in the future, keeping her own strategy fixed. A mechanism is \textit{OSP-implementable} if one can construct an extensive-form game that has an equilibrium in obviously dominant strategies.
	
	While OSP relaxes the assumption that the participating agents fully comprehend how the strategies of opponents will affect outcomes, it still presumes that they understand how their own future actions will affect outcomes. In other words, when checking obvious dominance, the worst possible final outcome and the best possible final outcome are taken only over opponents' strategies $s_{-i}$, fixing the agent's own strategy $s_i$. \citet{pycia2019theory} argue that in this case the agents might be concerned about having time-inconsistent strategies or making a mistake while performing backward induction over their own future actions. As a remedy, they introduce a natural strengthening of OSP called \textit{strongly obvious strategy-proofness (SOSP)} by relaxing the assumption that the agents understand how their own future actions will affect outcomes. A strategic plan is strongly obviously dominant if, whenever an agent is called to play, even the worst possible final outcome from the prescribed action is at least as good as the best possible outcome from any other action, where what is possible may depend on all future actions, including actions by the agent's future-self. Thus, strongly obviously dominant strategies are those that are weakly better than all alternative actions even if the agent is concerned that she might have time-inconsistent strategies.\footnote{This verbal description of SOSP is adapted from \citet{pycia2019theory}.}

	\subsection{Our motivation and contribution}

	\citet{mandal2022obviously,mandal2022outside} study the OSP-implementability of FPTTC mechanisms. They introduce the notion of \textit{dual ownership} and show that it is both necessary and sufficient condition for an FPTTC mechanism to be OSP-implementable. An FPTTC mechanism satisfies dual ownership if, for each preference profile and each step of the FPTTC mechanism at that preference profile, there are at most two agents who own all the objects available at that step. Although dual ownership is an intuitive property (and thereby, is quite helpful for explaining it to the participating agents), it is not so convenient for the designer to check whether a given FPTTC mechanism satisfies this property or not. This is because, technically, one needs to check at \textit{every} preference profile and \textit{every} step of the FPTTC mechanism at that preference profile, whether at most two agents are owning all the available objects at that step or not. This motivates one natural question: \textit{Is there any equivalent property to dual ownership, which is easier for the designer to check?}
	
	To tackle this question, we present two conditions for FPTTC mechanisms, namely \textit{acyclicity} and \textit{strong acyclicity} \citep{troyan2019obviously}, and show that they are equivalent to dual ownership in models without and with outside options, respectively (Theorem \ref{theorem FPTTC is dual iff acyclic} and Theorem \ref{theorem dual strong acyclicity equivalent}).\footnote{\citet{troyan2019obviously} uses the term ``TTC mechanism'' to refer to an FPTTC mechanism, and the term ``weak acyclicity'' to refer to strong acyclicity.}$^{,}$\footnote{In a model without outside options, every object is acceptable to every agent, whereas each object need not be acceptable to an agent in a model with outside options.} Both acyclicity and strong acyclicity are technical properties, which, as the names suggest, ensure that certain types of cycles are \textit{not} present in the associated priority structure of an FPTTC mechanism. The advantage of checking these properties for an FPTTC mechanism is that they only involve the associated priority structure, and not anything about the state of the FPTTC mechanism at different steps at different preference profiles. Since dual ownership property is intuitive but not convenient for the designer to check, whereas (strong) acyclicity property is technical but easier to check, these two equivalent properties, in a sense, complement each other. It is worth noting that both acyclicity and strong acyclicity are weaker than \textit{Ergin-acyclicity} \citep{ergin2002efficient}.\footnote{\citet{ergin2002efficient} introduces this notion in a many-to-one matching model.}$^{,}$\footnote{\citet{ehlers2002strategy} characterize strategy-proof, Pareto efficient, and population-monotonic mechanisms as FPTTC mechanisms satisfying Ergin-acyclicity (they use the term ``restricted endowment inheritance rules'' to refer to such FPTTC mechanisms). Later, \citet{ehlers2004resource} show that these mechanisms are also the only mechanisms satisfying Pareto efficiency, independence of irrelevant objects, and resource-monotonicity when there are more objects than agents (they use the term ``mixed dictator-pairwise-exchange rules'' to refer to such FPTTC mechanisms).}
	
	Apart from FPTTC mechanisms, another well-known class of priority-based mechanisms is \textit{agent-proposing deferred acceptance (APDA) mechanisms} \citep{gale1962college}. \citet{ashlagi2018stable} study the OSP-implementability of these mechanisms, and show that Ergin-acyclicity is a sufficient condition for an APDA mechanism to be OSP-implementable. We derive \citet{ashlagi2018stable}'s result as a corollary of our results (Corollary \ref{corollary ashlagi}). We further show that strong acyclicity is a necessary condition for an APDA mechanism to be OSP-implementable (Proposition \ref{proposition OSP is strong}).
	
	Next, we characterize the structure of SOSP-implementable FPTTC mechanisms. We introduce the notion of \textit{weak serial dictatorships} for this purpose. An FPTTC mechanism is a weak serial dictatorship if, for any preference profile and any step of the FPTTC mechanism at that preference profile, if there are more than two objects available at that step, then there is exactly one agent who owns all those objects. We characterize all SOSP-implementable FPTTC mechanisms in a model without outside options as weak serial dictatorships (Theorem \ref{theorem SOSP FPTTC restricted}). As a corollary of this result, we obtain that in a model without outside options and with more objects than agents, the class of SOSP-implementable FPTTC mechanisms is characterized by the class of \textit{serial dictatorships} \citep{satterthwaite1981strategy} (Corollary \ref{corollary more objects}). In a serial dictatorship, the agents choose their top choices among the ``remaining'' objects and the outside option, according to an ordering of the agents. We further show that the serial dictatorships are the only SOSP-implementable FPTTC mechanisms in a model with outside options (Theorem \ref{theorem SOSP FPTTC unrestricted}).
	
	Finally, we introduce the notion of \textit{simple strategy-proofness} which strengthens OSP-implementation to resolve the issue with agents being concerned about having time-inconsistent preferences. Recall that \citet{pycia2019theory} introduce SOSP-implementation for the same purpose. The concept of simple strategy-proofness is based on obvious dominance in a \textit{simple} extensive-form game. An extensive-form game is simple if every agent is called to play at most once. A mechanism is simple strategy-proof if one can construct a simple extensive-form game that has an equilibrium in obviously dominant strategies. We show that simple strategy-proofness is stronger than SOSP-implementability in general (Proposition \ref{proposition SSP implies SOSP}). We further show that the class of simply strategy-proof FPTTC mechanisms is the same as the class of SOSP-implementable FPTTC mechanisms in both models -- with and without outside options (Theorem \ref{theorem SSP SOSP FPTTC}).

	\subsection{Additional related literature}

	There is a rapidly growing body of work on OSP-implementability and various related notions of strategic simplicity. \citet{troyan2019obviously} shows that strong acyclicity is a sufficient condition for an FPTTC rule to be OSP-implementable when there is an equal number of agents and objects.\footnote{Theorem 1 in \citet{troyan2019obviously} says that in a model without outside options, strong acyclicity is both necessary and sufficient condition for an FPTTC rule to be OSP-implementable when there is an equal number of agents and objects. Later, \citet{mandal2022obviously} point out that while strong acyclicity is a sufficient condition for the same, it is not necessary (see Footnote 22 in \citet{mandal2022obviously} for details).} \citet{mandal2022obviously} characterize OSP-implementable, Pareto efficient, and non-bossy assignment rules as \textit{hierarchical exchange rules} \citep{papai2000strategyproof} satisfying dual ownership.\footnote{\citet{papai2000strategyproof} characterizes all strategy-proof, Pareto efficient, non-bossy, and reallocation-proof assignment rules as hierarchical exchange rules. Later, \citet{pycia2017incentive} introduce the notion of \textit{trading cycles rules} as a generalization of hierarchical exchange rules and show that an assignment rule is strategy-proof, Pareto efficient, and non-bossy if and only if it is a trading cycles rule.} \citet{bade2017gibbardsatterthwaite} \textit{constructively} characterize OSP-implementable and Pareto efficient assignment rules as the ones that can be implemented via a mechanism they call \textit{sequential barter with lurkers}. \citet{pycia2019theory} characterize the full class of OSP mechanisms in environments without transfers as \textit{millipede games with greedy strategies}. They also characterize the full class of SOSP mechanisms as \textit{sequential price mechanisms with greedy strategies}. \citet{thomas2020classification} provides a necessary and sufficient condition for an APDA rule to be OSP-implementable when there is an equal number of agents and objects. \citet{ashlagi2018stable} consider two-sided matching with one strategic side and show that for general preferences, no mechanism that implements a stable matching is obviously strategy-proof for any side of the market.

	\subsection{Organization of the paper}

	The organization of this paper is as follows. In Section \ref{section prelim}, we introduce basic notions and notations that we use throughout the paper, define assignment rules, and introduce the notions of OSP-implementation and SOSP-implementation. Section \ref{section FPTTC} introduces the notion of FPTTC rules. In Section \ref{section OSP FPTTC}, we introduce dual ownership property of an FPTTC rule and present prior results on the OSP-implementability of FPTTC rules by means of this property. We also introduce acyclicity property and strong acyclicity property of an FPTTC rule, and discuss their relationship with dual ownership property. In Section \ref{section SOSP FPTTC}, we discuss the SOSP-implementability of FPTTC rules. In Section \ref{section SSP}, we introduce the notion of simple strategy-proofness, and discuss its relation with SOSP-implementation. We further characterize all simply strategy-proof FPTTC rules. All omitted proofs are collected in the Appendix.

	\section{Preliminaries}\label{section prelim}

	\subsection{Basic notions and notations}\label{subsection notions}

	Let $N = \{1, \ldots, n\}$ be a finite set of agents, and $A$ be a non-empty and finite set of objects. Let $a_0$ denote the \textbf{\textit{outside option}}. An \textit{\textbf{allocation}} is a function $\mu: N \to A \cup \{a_0\}$ such that $|\mu^{-1}(a)| \leq 1$ for all $a \in A$. Here, $\mu(i) = a$ means agent $i$ is assigned object $a$ under $\mu$, and $\mu(i) = a_0$ means agent $i$ is not assigned any object under $\mu$. Notice that the outside option $a_0$ can be assigned to any number of agents and that not all objects in $A$ have to be assigned. We denote by $\mathcal{M}$ the set of all allocations.
	
	Let $\mathbb{L}(A \cup \{a_0\})$ denote the set of all strict linear orders over $A \cup \{a_0\}$.\footnote{A \textbf{\textit{strict linear order}} is a semiconnex, asymmetric, and transitive binary relation.} An element of $\mathbb{L}(A \cup \{a_0\})$ is called a \textbf{\textit{preference}} over $A \cup \{a_0\}$. For a preference $P$, let $R$ denote the weak part of $P$, that is, for all $a, b \in A \cup \{a_0\}$, $aRb$ if and only if \big[$aPb$ or $a = b$\big].
	For a preference $P \in \mathbb{L}(A \cup \{a_0\})$ and non-empty $A' \subseteq A \cup \{a_0\}$, let $\tau(P, A')$ denote the most preferred element in $A'$ according to $P$, that is, $\tau(P,A') = a$ if and only if \big[$a \in A'$ and $a P b$ for all $b \in A' \setminus \{a\}$\big]. For ease of presentation, we denote $\tau(P, A \cup \{a_0\})$ by $\tau(P)$.
	
	For each object $a \in A$, we define the \textbf{\textit{priority}} of $a$ as a ``preference'' $\succ_a$ over $N$.\footnote{In other words, $\succ_a \in \mathbb{L}(N)$ for all $a \in A$.} Following our notational convention, for a priority $\succ \in \mathbb{L}(N)$ and non-empty $N' \subseteq N$, let $\tau(\succ, N')$ denote the most preferred agent in $N'$ according to $\succ$. For ease of presentation, we denote $\tau(\succ, N)$ by $\tau(\succ)$. For a priority $\succ \in \mathbb{L}(N)$ and an agent $i \in N$, by $U(i, \succ)$ we denote the \textit{strict upper contour set} $\{j \in N \mid j \succ i\}$ of $i$ at $\succ$. 
	Furthermore, for a priority $\succ \in \mathbb{L}(N)$, an agent $i \in N$, and non-empty $N' \subseteq N \setminus \{i\}$, we write $i \succ N'$ to mean that $i \succ j$ for all $j \in N'$.
	
	We call a collection $\succ_A = (\succ_a)_{a \in A}$ a \textit{\textbf{priority structure}}. Let $N' \subseteq N$, $A' \subseteq A$, and $\succ_A$ be a priority structure. The \textit{\textbf{reduced priority structure $\succ_{A'}^{N'}$}} is the collection $(\succ_{a}^{N'})_{a \in A'}$ such that for all $a \in A'$, (i) $\succ_{a}^{N'} \in \mathbb{L}(N')$ and (ii) for all $i,j \in N'$, $i \succ_a^{N'} j$ if and only if $i \succ_a j$. Thus, the reduced priority structure $\succ_{A'}^{N'}$ is the restriction of $\succ_{A}$ to the reduced market $(N', A')$.\footnote{Thus, $\succ_{A}^{N} = \succ_{A}$.} Furthermore, let $\mathcal{T}(\succ_{A'}^{N'}) = \{i \mid \tau(\succ_a, N') = i \mbox{ for some } a \in A'\}$ be the set of agents who have the highest priority in $N'$ for at least one object in $A'$ according to $\succ_A$.

	\subsection{Types of domains}

	We denote by $\mathcal{P}_{i} \subseteq \mathbb{L}(A \cup \{a_0\})$ the set of admissible preferences of agent $i$. A \textbf{\textit{preference profile}}, denoted by $P_N = (P_1, \ldots,P_n)$, is an element of $\mathcal{P}_N = \underset{i=1}{\overset{n}{\prod}} \hspace{1 mm}\mathcal{P}_{i}$, that represents a collection of preferences -- one for each agent. We say an object $a$ is \textbf{\textit{acceptable}} to agent $i$ if $a P_i a_0$.
	
	Our framework encompasses both models -- with and without outside options, as special cases.
	\begin{enumerate}[(i)]
		\item \textbf{Model with outside options:} This model can be captured in our framework by taking $\mathcal{P}_{i} = \mathbb{L}(A \cup \{a_0\})$ for all $i \in N$.
		
		\item \textbf{Model without outside options:} Every object is acceptable to every agent in this model, which can be captured in our framework by considering \[\mathcal{P}_{i} = \{P \in \mathbb{L}(A \cup \{a_0\}) \mid aPa_0 \mbox{ for all } a \in A\}\] for all $i \in N$. With abuse of notation, let $\mathbb{L}(A) = \{P \in \mathbb{L}(A \cup \{a_0\}) \mid aPa_0 \mbox{ for all } a \in A\}$ be the set of all preferences where the outside option $a_0$ is the least preferred element in $A \cup \{a_0\}$.
	\end{enumerate}

	\subsection{Assignment rules and their properties}\label{subsection simple dominance}

	An \textit{\textbf{assignment rule}} is a function $f: \mathcal{P}_N \to \mathcal{M}$. For an assignment rule $f: \mathcal{P}_N \to \mathcal{M}$ and a preference profile $P_N \in \mathcal{P}_N$, let $f_i(P_N)$ denote the assignment of agent $i$ by $f$ at $P_N$.
	
	\citet{pycia2019theory} introduce the notion of \textit{simple dominance}. In this paper, we discuss two types of simple dominance in the context of assignment rules, namely \textit{obviously strategy-proofness (OSP)} and \textit{strongly obviously strategy-proofness (SOSP)}. We use the following notions and notations to present these.
	
	We denote a rooted (directed) tree by $T$. For a rooted tree $T$, we denote its set of nodes by $V(T)$, set of edges by $E(T)$, root by $r(T)$, and set of leaves (terminal nodes) by $L(T)$. For a node $v \in V(T)$, let $E^{out}(v)$ denote the set of outgoing edges from $v$. For an edge $e \in E(T)$, let $s(e)$ denote its source node. A \textit{path} in a tree is a sequence of nodes such that every two consecutive nodes form an edge.

	\begin{definition}
		An \textit{\textbf{extensive-form mechanism}}, or simply a \textit{\textbf{mechanism}} on $\mathcal{P}_N$, is defined as a tuple $G= \langle T, \eta^{LA}, \eta^{NA}, \eta^{EP} \rangle $, where 
		\begin{enumerate}[(i)]
			\item $T$ is a rooted tree,
			
			\item $\eta^{LA}: L(T) \to \mathcal{M}$ is a leaves-to-allocations function,
			
			\item $\eta^{NA}: V(T) \setminus L(T) \to N$ is a nodes-to-agents function, and
			
			\item $\eta^{EP}: E(T) \to 2^{\mathbb{L}(A \cup \{a_0\})} \setminus \{\emptyset\}$ is an edges-to-preferences function such that
			\begin{enumerate}[(a)]
				\item for all distinct $e, e' \in E(T)$ with $s(e) = s(e')$, we have $\eta^{EP}(e) \cap \eta^{EP}(e') = \emptyset$, and
				
				\item for any $v \in V(T) \setminus L(T)$,
				\begin{enumerate}[(1)]
					\item if there exists a path $(v^1, \ldots, v^t)$ from $r(T)$ to $v$ and some $1 \leq t' < t$ such that $\eta^{NA}(v^{t'}) = \eta^{NA}(v)$ and $\eta^{NA}(v^s) \neq \eta^{NA}(v)$ for all $s = t'+1, \ldots, t-1$, then $\underset{e \in E^{out}(v)}{\cup} \eta^{EP}(e) = \eta^{EP}(v^{t'}, v^{t'+1})$, and
					
					\item if there is no such path, then $\underset{e \in E^{out}(v)}{\cup} \eta^{EP}(e) = \mathcal{P}_{\eta^{NA}(v)}$.
				\end{enumerate}
			\end{enumerate}
		\end{enumerate}
	\end{definition}

	For a given mechanism $G$ on $\mathcal{P}_N$, every preference profile $P_N \in \mathcal{P}_N$ identifies a unique path from the root to some leaf in $T$ in the following manner: from each node $v$, follow the outgoing edge $e$ from $v$ such that $\eta^{EP}(e)$ contains the preference $P_{\eta^{NA}(v)}$. If a node $v$ lies in such a path, then we say that the preference profile \textit{$P_N$ passes through the node $v$}. Furthermore, we say two preferences $P_i$ and $P_i'$ of some agent $i$ \textit{diverge at a node $v \in V(T) \setminus L(T)$} if $\eta^{NA}(v) = i$ and there are two distinct outgoing edges $e$ and $e'$ in $E^{out}(v)$ such that $P_i \in \eta^{EP}(e)$ and $P'_i \in \eta^{EP}(e')$.
	
	For a given mechanism $G$ on $\mathcal{P}_N$, the \textit{\textbf{assignment rule $f^G : \mathcal{P}_N \rightarrow \mathcal{M}$ implemented by $G$}} is defined as follows: for all preference profiles $P_N \in \mathcal{P}_N$, $f^G(P_N) = \eta^{LA}(l)$, where $l$ is the leaf that appears at the end of the unique path characterized by $P_N$.

	\begin{definition}\label{def updated OSP}
		A mechanism $G$ on $\mathcal{P}_N$ is \textit{\textbf{OSP}} if for all $i \in N$, all nodes $v$ such that $\eta^{NA}(v) = i$, and all $P_N, \tilde{P}_N \in \mathcal{P}_N$ passing through $v$ such that $P_i$ and $\tilde{P}_i$ diverge at $v$, we have $f_i^G(P_N) R_i f_i^G(\tilde{P}_N)$. 
		
		An assignment rule $f : \mathcal{P}_N \rightarrow \mathcal{M}$ is \textit{\textbf{OSP-implementable}} (on $\mathcal{P}_N$) if there exists an OSP mechanism $G$ on $\mathcal{P}_N$ such that $f = f^G$.
	\end{definition}

	In other words, obvious strategy-proofness of an assignment rule implies that, whenever an agent $i$ is called to play, the worst possible final outcome from following her true preference $P_i$ is at least as good as the best possible outcome from following any deviating preference $P'_i$, where the best and worst cases are determined by considering all possible preferences of $i$'s opponents in the future, keeping her own preference fixed.

	\begin{definition}
		A mechanism $G$ on $\mathcal{P}_N$ is \textit{\textbf{SOSP}} if for all $i \in N$, all nodes $v$ such that $\eta^{NA}(v) = i$, and all $P_N, P'_N, \tilde{P}_N \in \mathcal{P}_N$ passing through $v$ such that (i) $P_i$ and $P'_i$ do not diverge at $v$ and (ii) $P_i$ and $\tilde{P}_i$ diverge at $v$, we have $f_i^G(P'_N) R_i f_i^G(\tilde{P}_N)$.\footnote{In other words, there exists an edge $e \in E^{out}(v)$ such that $P_i, P'_i \in \eta^{EP}(e)$ and $\tilde{P}_i \notin \eta^{EP}(e)$.} 
		
		An assignment rule $f : \mathcal{P}_N \rightarrow \mathcal{M}$ is \textit{\textbf{SOSP-implementable}} (on $\mathcal{P}_N$) if there exists an SOSP mechanism $G$ on $\mathcal{P}_N$ such that $f = f^G$.
	\end{definition}

	In other words, strongly obvious strategy-proofness of an assignment rule implies that, whenever an agent $i$ is called to play, the worst possible final outcome from following her true preference $P_i$ is at least as good as the best possible outcome from following any deviating preference $P'_i$, where the best and worst cases are determined by considering all possible preferences of \textit{all} agents in the future.

	\begin{remark}\label{remark SOSP implies OSP}
		For an arbitrary domain of preference profiles $\mathcal{P}_N$, every SOSP-implementable assignment rule is OSP-implementable.
	\end{remark}

	In Example \ref{example OSP but not SOSP}, we explain the difference between OSP-implementability and SOSP-implementability.

	\begin{example}\label{example OSP but not SOSP}
		Consider an allocation problem with three agents $N = \{1,2,3\}$ and three objects $A = \{a_1,a_2,a_3\}$. In Figure \ref{Tree Representation for OSP}, we provide a mechanism $G$ on $\mathbb{L}^3(A)$.\footnote{Recall that $\mathbb{L}(A) = \{P \in \mathbb{L}(A \cup \{a_0\}) \mid aPa_0 \mbox{ for all } a \in A\}$.} We use the following notations in Figure \ref{Tree Representation for OSP}: by $(v_1: 1)$ we mean that the node $v_1$ is assigned to agent $1$, by $\cdot a_1 \cdot a_2 \cdot$ we denote the set of preferences where $a_1$ is preferred to $a_2$, by $a_1 a_2 a_3 a_0$ we denote the preference that ranks $a_1$ first, $a_2$ second, $a_3$ third, and $a_0$ fourth, and we denote the allocation $[(1,a_1), (2,a_2), (3,a_3)]$ by $(a_1, a_2, a_3)$.

		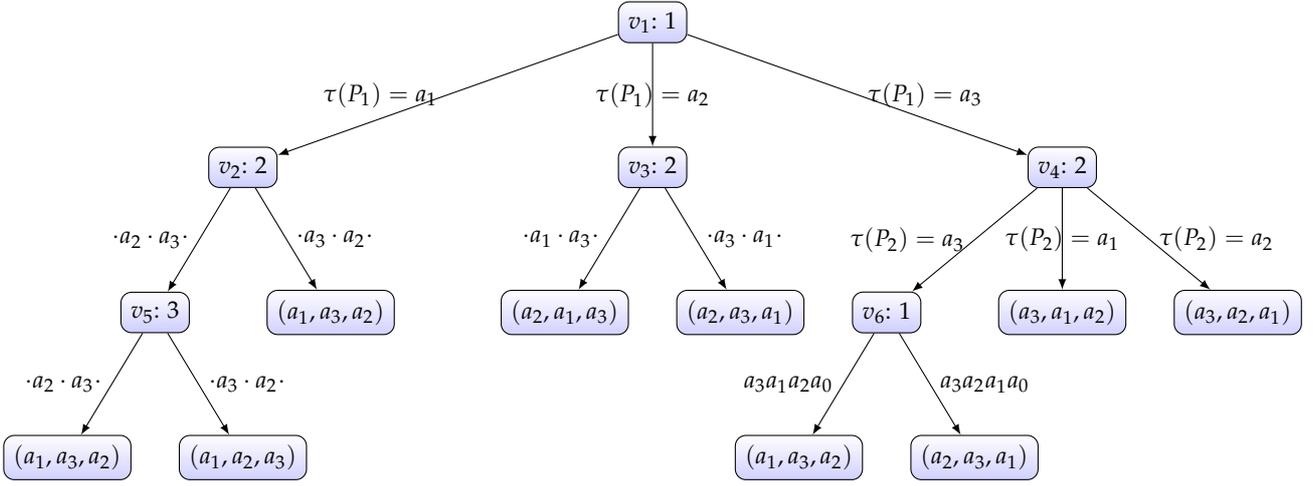
\begin{figure}[H]
			\centering
			\begin{tikzpicture}
				[
				grow                    = down,
				sibling distance        = 14em,
				level distance          = 5em,
				edge from parent/.style = {draw, -latex},
				every node/.style       = {font=\footnotesize}
				]
				\node [root] {$v_1$: 1}
				child { node [root] {$v_2$: 2}
					[
					grow                    = down,
					sibling distance        = 6em,
					level distance          = 5em,
					edge from parent/.style = {draw, -latex},
					every node/.style       = {font=\footnotesize}
					]
					child { node [root] {$v_5$: 3}
						[
						grow                    = down,
						sibling distance        = 6em,
						level distance          = 5em,
						edge from parent/.style = {draw, -latex},
						every node/.style       = {font=\footnotesize}
						]
						child { node [env] {$(a_1, a_3, a_2)$}
							edge from parent node [left] {$\cdot a_2 \cdot a_3 \cdot$} }
						child { node [env] {$(a_1, a_2, a_3)$}
							edge from parent node [right] {$\cdot a_3 \cdot a_2 \cdot$} }
						edge from parent node [left] {$\cdot a_2 \cdot a_3 \cdot$} }
					child { node [env] {$(a_1, a_3, a_2)$}
						edge from parent node [right] {$\cdot a_3 \cdot a_2 \cdot$} }
					edge from parent node [left] {$\tau(P_1) = a_1$} }
				child { node [root] {$v_3$: 2}
					[
					grow                    = down,
					sibling distance        = 6em,
					level distance          = 5em,
					edge from parent/.style = {draw, -latex},
					every node/.style       = {font=\footnotesize}
					]
					child { node [env] {$(a_2, a_1, a_3)$}
						edge from parent node [left] {$\cdot a_1 \cdot a_3 \cdot$} }
					child { node [env] {$(a_2, a_3, a_1)$}
						edge from parent node [right] {$\cdot a_3 \cdot a_1 \cdot$} }
					edge from parent node {$\tau(P_1) = a_2$} }
				child { node [root] {$v_4$: 2}
					[
					grow                    = down,
					sibling distance        = 6em,
					level distance          = 5em,
					edge from parent/.style = {draw, -latex},
					every node/.style       = {font=\footnotesize}
					]
					child { node [root] {$v_6$: 1}
						[
						grow                    = down,
						sibling distance        = 6em,
						level distance          = 5em,
						edge from parent/.style = {draw, -latex},
						every node/.style       = {font=\footnotesize}
						]
						child { node [env] {$(a_1, a_3, a_2)$}
							edge from parent node [left] {$a_3 a_1 a_2 a_0$} }
						child { node [env] {$(a_2, a_3, a_1)$}
							edge from parent node [right] {$a_3 a_2 a_1 a_0$} }
						edge from parent node [left] {$\tau(P_2) = a_3$} }
					child { node [env] {$(a_3, a_1, a_2)$}
						edge from parent node {$\tau(P_2) = a_1$} }
					child { node [env] {$(a_3, a_2, a_1)$}
						edge from parent node [right] {$\tau(P_2) = a_2$} }
					edge from parent node [right] {$\tau(P_1) = a_3$} };
			\end{tikzpicture}
			\caption{Mechanism $G$ for Example \ref{example OSP but not SOSP}}
			\label{Tree Representation for OSP}
		\end{figure}

		{
			
			\def\OldComma{,}{ }
			\catcode`\,=13
			\def,{%
				\ifmmode%
				\OldComma\discretionary{}{}{}%
				\else%
				\OldComma%
				\fi%
			}%
			
			First, we argue that $G$ is OSP. To see this, consider, for instance, the preference profiles $P_N^1 = (a_3 a_1 a_2 a_0, a_3 a_1 a_2 a_0, a_3 a_1 a_2 a_0)$ and $P_N^2 = (a_3 a_2 a_1 a_0, a_1 a_2 a_3 a_0, a_3 a_1 a_2 a_0)$.\footnote{We denote by $(a_1 a_2 a_3 a_0, a_2 a_1 a_3 a_0, a_3 a_1 a_2 a_0)$ a preference profile where agents $1, 2$, and $3$ have preferences $a_1 a_2 a_3 a_0$, $a_2 a_1 a_3 a_0$, and $a_3 a_1 a_2 a_0$, respectively.} Note that both of them pass through the node $v_4$ at which $P_2^1$ and $P_2^2$ diverge. Further note that $f^G_2(P_N^1) = a_3$ and $f^G_2(P_N^2) = a_1$, which means $G$ satisfies OSP property for this instance. Similarly, one can check that $G$ satisfies OSP property for other instances.
			
		}
		
		Now we argue that $G$ is not SOSP. Consider the preference profile $P_N = (a_3 a_1 a_2 a_0, a_3 a_1 a_2 a_0, a_3 a_1 a_2 a_0)$. Note that at node $v_1$, the worst possible final assignment for agent $1$ (according to $P_1$) from following the edge $(v_1, v_4)$ is $a_2$, whereas the best possible final assignment from not following the edge $(v_1, v_4)$ is $a_1$. Since $a_1 P_1 a_2$, this implies $G$ does not satisfy SOSP property.
		\hfill
		$\Diamond$
	\end{example}

	\section{Fixed priority top trading cycles rules}\label{section FPTTC}

	\textit{Fixed priority top trading cycles (FPTTC) rules} are well-known in the literature. We explain how such a rule works by means of an example.
	
	We begin by explaining the notion of \textit{TTC procedure} with respect to a given endowments of the objects over the agents. Suppose that each object is owned by exactly one agent. Note that an agent may own more than one object. A directed graph is constructed in the following manner. The set of nodes is the same as the set of agents. There is a directed edge from agent $i$ to agent $j$ if agent $j$ owns agent $i$'s most preferred acceptable object, and a directed edge from agent $i$ to herself if every object is unacceptable to her. Notice that each agent has exactly one outgoing edge (though possibly many incoming edges), and there is a directed edge from an agent to herself if and only if she owns her most preferred acceptable object or every object is unacceptable to her. Note that such a graph will always have at least a cycle. Such a cycle is called a \textit{top trading cycle (TTC)}. After forming a TTC, each agent in the TTC is assigned her most preferred element among all the objects and the outside option $a_0$.

	\begin{example}\label{example FPTTC}
		An FPTTC rule is based on an exogenously given priority structure. Consider an allocation problem with four agents $N = \{1,2,3,4\}$ and four objects $A = \{a_1,a_2,a_3,a_4\}$, and consider the priority structure given in Table \ref{example priority structure for example}. Here, for instance, $\succ_{a_1}$ ranks agent $1$ first, agent $4$ second, agent $3$ third, and agent $2$ fourth.
		\begin{table}[H]
			\centering
			\begin{tabular}{@{}cccc@{}}
				\hline
				$\succ_{a_1}$ & $\succ_{a_2}$ & $\succ_{a_3}$ & $\succ_{a_4}$ \\ \hline
				\hline
				$1$ & $2$ & $3$ & $4$ \\
				$4$ & $1$ & $2$ & $3$ \\
				$3$ & $3$ & $4$ & $2$ \\
				$2$ & $4$ & $1$ & $1$ \\
				\hline
			\end{tabular}
			\caption{Priority structure for Example \ref{example FPTTC}}
			\label{example priority structure for example}
		\end{table}
		
		Consider the FPTTC rule based on the priority structure given in Table \ref{example priority structure for example} and consider the preference profile $P_N$ such that $a_0 P_1 a_2 P_1 a_1 P_1 a_3 P_1 a_4$, $a_1 P_2 a_0 P_2 a_2 P_2 a_3 P_2 a_4$, $a_1 P_3 a_2 P_3 a_3 P_3 a_0 P_3 a_4$, and $a_3 P_4 a_2 P_4 a_1 P_4 a_4 P_4 a_0$. The outcome is computed through a number of steps. At each step, endowments of the agents are determined by means of the priority structure, and TTC procedure is performed with respect to the endowments.\medskip
		
		\textbf{\textit{Step 1.}}	
		At Step 1, the ``owner'' of an object $a$ is the most preferred agent according to $\succ_a$. Thus, object $a_1$ is owned by agent $1$, object $a_2$ is owned by agent $2$, object $a_3$ is owned by agent $3$, and object $a_4$ is owned by agent $4$. TTC procedure is performed with respect to these endowments to decide the outcome of Step 1, and assigned agents and assigned objects leave the market. It can be verified that for the given preference profile $P_N$, only agent $1$ is assigned at Step 1 and her assignment is the outside option $a_0$. So, only agent $1$ leaves the market.\medskip
		
		\textbf{\textit{Step 2.}}	
		The reduced market at Step 2 consists of agents $2,3,4$, and objects $a_1, a_2, a_3, a_4$. As at Step 1, the endowments of the remaining agents are decided first and then TTC procedure is performed with respect to the endowments. The owner of a remaining object $a$ is the most preferred remaining agent according to $\succ_a$. Thus, objects $a_1$ and $a_4$ are owned by agent $4$, object $a_2$ is owned by agent $2$, and object $a_3$ is owned by agent $3$. Once the endowments are decided for Step 2, TTC procedure is performed with respect to the endowments to decide the outcome of this step, and assigned agents and assigned objects leave the market. It can be verified that for the current example, agents $3$ and $4$ get objects $a_1$ and $a_3$, respectively at this step. So, agents $3$ and $4$ leave the market with objects $a_1$ and $a_3$.\medskip
		
		\textbf{\textit{Step 3.}}	
		Step 3 is followed on the reduced market in a similar way as Step 2. The reduced market at Step 3 consists of agent $2$, and objects $a_2$ and $a_4$. Both of the objects $a_2$ and $a_4$ are owned by agent $2$ at Step 3. It can be verified that agent $2$ is assigned the outside option $a_0$ at Step 3, and leaves the market.\medskip
		
		Every agent is assigned by the end of Step 3 and hence the algorithm stops at this step. Thus, agents $1$ and $2$ are assigned the outside option $a_0$, and agents $3$ and $4$ are assigned objects $a_1$ and $a_3$, respectively, at the outcome of the FPTTC rule at $P_N$.
		\hfill
		$\Diamond$	
	\end{example}

	In what follows, we present a formal description of FPTTC rules. For a given priority structure $\succ_A$, the \textit{\textbf{FPTTC rule $T^{\succ_A}$ associated with $\succ_A$}} is defined by an iterative procedure as follows. Consider an arbitrary preference profile $P_N \in \mathbb{L}^n(A \cup \{a_0\})$.

	\begin{itemize}[leftmargin = 1.35cm]
		\item[\textbf{\textit{Step s.}}] Let $N_s(P_N) \subseteq N$ be the set of agents that remain after Step $s-1$ and $A_s(P_N) \subseteq A$ be the set of objects that remain after Step $s-1$.\footnote{Notice that for all $P_N \in \mathbb{L}^n(A \cup \{a_0\})$, $N_1(P_N) = N$ and $A_1(P_N) = A$.} Each object $a \in A_s(P_N)$ is owned by its most preferred agent in $N_s(P_N)$ according to $\succ_a$. TTC procedure is performed on the reduced market $(N_s(P_N), A_s(P_N))$ with respect to these endowments. Remove all assigned agents and assigned objects at this step.
	\end{itemize}

	This procedure is repeated iteratively until either all agents are assigned or all objects are assigned. The final outcome is obtained by combining all the assignments at all steps.

	\section{OSP-implementability of FPTTC rules}\label{section OSP FPTTC}

	\citet{mandal2022obviously} study the OSP-implementability of FPTTC rules on the restricted domain $\mathbb{L}^n(A)$. For this purpose, they introduce the notion of \textit{dual ownership} and show that it is both necessary and sufficient condition for an FPTTC rule to be OSP-implementable on $\mathbb{L}^n(A)$. Later, \citet{mandal2022outside} extend this result to the unrestricted domain $\mathbb{L}^n(A \cup \{a_0\})$. In what follows, we present the notion of dual ownership, as well as the prior results on the OSP-implementability of FPTTC rules using this notion. 
	
	Before defining dual ownership formally, let us first recall a couple of notations used in the context of FPTTC rules: for a preference profile $P_N \in \mathbb{L}^n(A \cup \{a_0\})$ and an FPTTC rule, $N_s(P_N)$ is the set of remaining agents after Step $s-1$ and $A_s(P_N)$ is the set of remaining objects after Step $s-1$.

	\begin{definition}\label{def dual ownership}
		On a domain $\mathcal{P}_N$, an FPTTC rule $T^{\succ_A}$ satisfies \textbf{\textit{dual ownership}} if for all $P_N \in \mathcal{P}_N$, we have $|\mathcal{T}(\succ_{A_s(P_N)}^{N_s(P_N)})| \leq 2$ for all $s$.
	\end{definition}

	In other words, an FPTTC rule satisfies dual ownership on an arbitrary domain of preference profiles $\mathcal{P}_N$ if, for any preference profile in $\mathcal{P}_N$ and any step of the FPTTC rule at that preference profile, there are at most two agents who own all the objects that remain in the reduced market at that step.

	\begin{note}\label{remark dual ownership depends on domains}
		We define dual ownership as a property of FPTTC rules because it depends on the choice of domain. More precisely, an FPTTC rule satisfying dual ownership on a smaller domain, may not satisfy dual ownership on a larger domain (see Example \ref{example FPTTC dual on restricted not on unrestricted} for details).
	\end{note}

	\begin{example}\label{example FPTTC dual on restricted not on unrestricted}
		Consider an allocation problem with four agents $N = \{i_1,i_2,i_3,i_4\}$ and four objects $A = \{a_1,a_2,a_3,a_4\}$. Let $\succ_A$ be as follows:
		\begin{table}[H]
			\centering
			\begin{tabular}{@{}cccc@{}}
				\hline
				$\succ_{a_1}$ & $\succ_{a_2}$ & $\succ_{a_3}$ & $\succ_{a_4}$ \\ \hline
				\hline
				$i_1$ & $i_1$ & $i_4$ & $i_4$ \\
				$i_2$ & $i_2$ & $i_2$ & $i_3$ \\
				$i_3$ & $i_3$ & $i_3$ & $i_2$ \\
				$i_4$ & $i_4$ & $i_1$ & $i_1$ \\
				\hline
			\end{tabular}
			\caption{Priority structure for Example \ref{example FPTTC dual on restricted not on unrestricted}}
			\label{table for priority structure dual ownership but not dual dictatorship}
		\end{table}
		
		Consider the FPTTC rule $T^{\succ_A}$ associated with the priority structure given in Table \ref{table for priority structure dual ownership but not dual dictatorship}. First, we argue that it satisfies dual ownership on the restricted domain $\mathbb{L}^n(A)$. Since either agent $i_1$ or agent $i_4$ appears at the top position in each priority, it follows that for any preference profile in $\mathbb{L}^n(A)$, agents $i_1$ and $i_4$ will own all the objects at Step 1 of $T^{\succ_A}$. Moreover, since there are only four agents in the original market, for any preference profile in $\mathbb{L}^n(A)$, at any step from Step 3 onward of $T^{\succ_A}$, there will remain at most two agents in the corresponding reduced market and hence dual ownership will be vacuously satisfied. In what follows, we show that dual ownership will also be satisfied at Step $2$ for any preference profile in $\mathbb{L}^n(A)$. We distinguish three cases based on the possible assignments at Step 1. 
		\begin{enumerate}[(i)]
			\item\label{item counter 1} Suppose only agent $i_1$ is assigned some object at Step 1. No matter whether agent $i_1$ is assigned object $a_1$ or object $a_2$, agents $i_2$ and $i_4$ will own all the remaining objects at Step 2.
			
			\item\label{item counter 2} Suppose only agent $i_4$ is assigned some object at Step 1. 
			\begin{enumerate}
				\item If $i_4$ is assigned object $a_3$, then agents $i_1$ and $i_3$ will own all the remaining objects at Step 2.
				
				\item If $i_4$ is assigned object $a_4$, then agents $i_1$ and $i_2$ will own all the remaining objects at Step 2.		
			\end{enumerate}
			
			\item\label{item counter 3} Suppose both $i_1$ and $i_4$ are assigned some objects at Step 1. Since there are only four agents in the original market, only two agents will remain in the reduced market at Step 2. 
		\end{enumerate} 
		Since Cases \ref{item counter 1}, \ref{item counter 2}, and \ref{item counter 3} are exhaustive, it follows that $T^{\succ_A}$ satisfies dual ownership on $\mathbb{L}^n(A)$. 
		
		We now proceed to show that $T^{\succ_A}$ does not satisfy dual ownership on the unrestricted domain $\mathbb{L}^n(A \cup \{a_0\})$. Consider the preference profile $P_N \in \mathbb{L}^n(A \cup \{a_0\})$ such that $\tau(P_{i_1}) = \tau(P_{i_2}) = \tau(P_{i_3}) = a_3$ and $\tau(P_{i_4}) = a_0$. It follows from the construction of $P_N$ and the definition of $T^{\succ_A}$ that at Step 2 of $T^{\succ_A}$ at $P_N$, objects $a_1$ and $a_2$ are owned by agent $i_1$, object $a_3$ is owned by agent $i_2$, and object $a_4$ is owned by agent $i_3$, and hence $T^{\succ_A}$ violates dual ownership at $P_N$.
		\hfill
		$\Diamond$
	\end{example}

	\begin{theorem}[\citealp{mandal2022obviously, mandal2022outside}]\label{theorem prior results}
		Suppose $\mathcal{P}_N \in \Big\{ \mathbb{L}^n(A), \mathbb{L}^n(A \cup \{a_0\}) \Big\}$. On the domain $\mathcal{P}_N$, dual ownership is a necessary and sufficient condition for an FPTTC rule to be OSP-implementable.
	\end{theorem}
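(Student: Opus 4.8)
The plan is to prove the two implications separately, running the arguments for $\mathbb{L}^n(A)$ and $\mathbb{L}^n(A\cup\{a_0\})$ in parallel; the outside option requires only minor bookkeeping — treat $a_0$ as an object at the bottom of every priority that everyone owns (hence can always clinch), and let an agent whose remaining objects are all unacceptable self-loop in the TTC graph. The one elementary fact I would record first and use in both directions is the following guarantee: \emph{if agent $i$ owns the object set $O$ at step $s$ of $T^{\succ_A}$ at $P_N$, then $T^{\succ_A}_i(P_N)\, R_i\, \tau(P_i,O)$.} This is because an object owned by $i$ can leave the reduced market only through a trade that simultaneously removes $i$, so every object of $O$ survives as long as $i$ does; when $i$ is finally assigned, at some step $t\ge s$, she gets $\tau\big(P_i, A_t(P_N)\big)$, and $O\subseteq A_t(P_N)$ makes this weakly $P_i$-above $\tau(P_i,O)$.

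\textbf{Sufficiency.} Suppose $T^{\succ_A}$ satisfies dual ownership on $\mathcal{P}_N$. I would build an OSP mechanism $G$ that unfolds the FPTTC procedure, handling its steps one at a time while tracking the reduced market $(N_s(P_N),A_s(P_N))$; by dual ownership each step has at most two owners. If the current step has a single owner $i$, then every other remaining agent points to $i$ (or self-loops), so $i$ is the only agent who trades; the mechanism queries $i$, lets her clinch her favorite remaining object (which she owns, hence is guaranteed), removes $i$ and that object, and recurses. If the current step has two owners $i,j$, partition the remaining objects into the sets $O_i,O_j$ that they own and query $i$ with the menu ``clinch any object of $O_i$, or pass''; by the guarantee above, passing is safe, so the obviously dominant action is to clinch her top object if it lies in $O_i$ and to pass otherwise, and symmetrically for $j$. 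If $i$ clinches, recurse; if $i$ passes and $j$ then clinches $c\in O_j$, the step's TTC is the singleton $\{j\}$, so remove $j$ and $c$ and recurse; if both pass, then $i$'s top lies in $O_j$ and $j$'s in $O_i$, the TTC is the two-cycle $\{i,j\}$, and the mechanism lets $i$ clinch her top in $O_j$ and $j$ clinch her top in $O_i$ and recurses. It remains to check — routinely, from this case analysis — that $G$ implements $T^{\succ_A}$ and that every prescribed action is obviously dominant; dual ownership enters exactly to guarantee that the ``at most two owners'' structure persists at every step of the recursion.

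\textbf{Necessity.} Suppose $T^{\succ_A}$ is implemented by an OSP mechanism $G$ but dual ownership fails, and pick a witnessing pair $(P_N,s)$ with three distinct owners $i_1,i_2,i_3$ of $A_s(P_N)$, choosing $s$ minimal and then $|N_s(P_N)|$ minimal. Freezing every agent other than $i_1,i_2,i_3$ so that the path reaches step $s$, I would show that $G$ must somewhere separate, for one of these owners — say $i_1$ — two of her preferences, one topped by an object $b$ she does not own but $i_2$ does, and one topped by an object $b'$ she does not own but $i_3$ does, at a node $v$ with $\eta^{NA}(v)=i_1$, in such a way that the two profiles below can both be routed to $v$. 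Along the $b$-edge I would complete $i_2$'s and $i_3$'s preferences so that they form a two-cycle between themselves consuming $b$ and $b'$, which by the guarantee leaves $i_1$ with only $\tau(P_{i_1},O_{i_1})$; along the $b'$-edge I would instead have $i_3$ cycle back to $i_1$, giving $i_1$ the object $b'$. Choosing $P_{i_1}$ with $b' \,P_{i_1}\, \tau(P_{i_1},O_{i_1})$ then contradicts OSP. The main obstacle is making this rigorous: one must exhibit such a separating node $v$ and verify that the ``worst-case'' and ``best-case'' completions can genuinely be routed through it — equivalently, that $G$ cannot pre-resolve the cyclic three-way conflict among $i_1,i_2,i_3$ before all three have acted — and this is where the minimality of the counterexample and the known structure of OSP mechanisms implementing Pareto efficient, non-bossy rules (which every FPTTC rule is) do the work.

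Alternatively, a shorter route is available: since every FPTTC rule is Pareto efficient, non-bossy, and a hierarchical exchange rule, the theorem follows directly from the characterizations of \citet{mandal2022obviously,mandal2022outside} of OSP-implementable, Pareto efficient, non-bossy rules as hierarchical exchange rules satisfying dual ownership, on $\mathbb{L}^n(A)$ and on $\mathbb{L}^n(A\cup\{a_0\})$ respectively; the argument above is the self-contained version.
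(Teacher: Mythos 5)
The paper does not prove this theorem at all: it is stated as an imported result, attributed to \citet{mandal2022obviously} and \citet{mandal2022outside}, and your closing paragraph --- invoking their characterization of OSP-implementable, Pareto efficient, non-bossy rules as hierarchical exchange rules with dual ownership, of which FPTTC rules are a special case --- is precisely the route the paper takes. On that reading your proposal is correct and matches the paper.

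Your self-contained sketch is a different and more ambitious route, and it is worth separating its two halves. The sufficiency half is essentially sound: the ownership-persistence guarantee ($T^{\succ_A}_i(P_N)\, R_i\, \tau(P_i,O)$ when $i$ owns $O$) is correct because re-endowment at each step preserves an unassigned owner's ownership, and the clinch-or-pass menu is obviously dominant exactly because passing is insured by that guarantee while clinching from $O_i$ can yield at most $\tau(P_i,O_i)$. Two details you wave at but should fix: (a) after both owners pass you must still elicit \emph{which} object of $O_j$ agent $i$ points to, via a further (trivially dominant, since she then gets whatever she names) query; (b) your mechanism clears one agent at a time while FPTTC clears all cycles of a step simultaneously, so you need the standard observation that the FPTTC outcome is invariant to this sequencing. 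The necessity half, however, has a genuine gap that you yourself flag: you assert that $G$ ``must somewhere separate'' two suitably chosen preferences of one of the three owners at a node where the adversarial completions can both still be routed, but exhibiting such a node is the entire content of the argument --- one must rule out that the mechanism pre-resolves the three-way conflict by querying the owners in some order that never creates an obvious-dominance violation. The cited papers (and, in miniature, the appendix proofs of Proposition \ref{proposition OSP is strong} and Theorem \ref{theorem SOSP FPTTC restricted} in this paper) handle this by fixing a small finite subdomain of carefully chosen profiles and doing an exhaustive case analysis over which of $i_1,i_2,i_3$ owns the first branching node; without that (or an appeal to the minimality/non-bossiness structure you gesture at, made precise), the necessity direction is not yet a proof.
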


	\begin{remark}
		It follows from Theorem \ref{theorem prior results} and Example \ref{example FPTTC dual on restricted not on unrestricted} that the set of FPTTC rules satisfying OSP-implementability on the restricted domain $\mathbb{L}^n(A)$ is a \textit{strict} superset of those satisfying OSP-implementability on the unrestricted domain $\mathbb{L}^n(A \cup \{a_0\})$.
	\end{remark}

	Although dual ownership is an intuitive property, it is somewhat time-consuming to check whether a given FPTTC rule satisfies this property. This is because, technically, one needs to check at every preference profile whether at most two agents are owning all the (remaining) objects at every step of the FPTTC rule. In view of this observation, we introduce equivalent properties to dual ownership, which involve the priority structure only (and not the preference profiles), and thus, is more convenient to be checked.

	\subsection{Results on the restricted domain $\mathbb{L}^n(A)$}

	In this subsection, we introduce the notion of \textit{acyclicity} for FPTTC rules, and show that it is equivalent to dual ownership property on the restricted domain $\mathbb{L}^n(A)$.
	
	An FPTTC rule is acyclic if the associated priority structure does not contain any \textit{priority cycle}. We begin with a verbal description of a priority cycle. A tuple $[(i_1,i_2,i_3),(a_1,a_2,a_3)]$ where $i_1,i_2,i_3 \in N$ and $a_1,a_2,a_3 \in A$ are all distinct, constitutes a priority cycle in one of two ways. In the first way, $i_h$ is the most preferred agent of $\succ_{a_h}$ for all $h = 1, 2, 3$. To explain the second way, let us present a specific instance where agents $i_1,i_2,i_3$ and objects $a_1,a_2,a_3$ form a priority cycle. Suppose there exist distinct agents $i_4, i_5 \in N \setminus \{i_1, i_2, i_3\}$ and distinct objects $a_4, a_5 \in A \setminus \{a_1,a_2,a_3\}$. For $h = 1, \ldots, 5$, let $\succ_{a_h}$ be as given below (the dots indicate that all preferences for the corresponding parts are irrelevant and can be chosen arbitrarily).
	\begin{table}[H]
		\centering
		\begin{tabular}{@{}ccccc@{}}
			\hline
			$\succ_{a_1}$ & $\succ_{a_2}$ & $\succ_{a_3}$ & $\succ_{a_4}$ & $\succ_{a_5}$ \\ \hline
			\hline
			$i_4$ & $i_4$ & $i_4$ & $i_5$ & $i_5$ \\
			$i_1$ & $i_2$ & $i_3$ & $i_4$ & $\vdots$ \\
			$\vdots$ & $\vdots$ & $\vdots$ & $\vdots$ & $ $ \\
			\hline
		\end{tabular}
		\caption{Priority structure with a priority cycle}
		\label{table for priority cycle}
	\end{table}
	
	The priority structure in Table \ref{table for priority cycle} has the property that for all $h = 1, \ldots, 5$, the strict upper contour set of agent $i_h$ at $\succ_{a_h}$ is a subset of $\{i_4, i_5\}$. For instance, the strict upper contour set of agent $i_1$ is the singleton set $\{i_4\}$. In this case, the tuple $[(i_1,i_2,i_3),(a_1,a_2,a_3)]$ is called a priority cycle. In general, a tuple $[(i_1,i_2,i_3),(a_1,a_2,a_3)]$ is a priority cycle if one can get hold of agents $i_4, \ldots, i_t$ and objects $a_4, \ldots, a_t$ such that for all $h = 1, \ldots, t$, the strict upper contour set of agent $i_h$ at $\succ_{a_h}$ is a subset of $\{i_4, \ldots, i_t\}$. In what follows, we present a formal definition.

	\begin{definition}\label{def acyclicity}
		A tuple $[(i_1,i_2,i_3),(a_1,a_2,a_3)]$, where $i_1,i_2,i_3 \in N$ and $a_1,a_2,a_3 \in A$ are all distinct, is called a \textit{\textbf{priority cycle}} in a priority structure $\succ_A$ if either $\tau(\succ_{a_h}) = i_h$ for all $h = 1,2,3$, or there exist distinct agents $i_4, \ldots, i_t \in N \setminus \{i_1,i_2,i_3\}$ and distinct objects $a_4, \ldots, a_t \in A \setminus \{a_1,a_2,a_3\}$ such that for all $h = 1, \ldots, t$, we have $U(i_h, \succ_{a_h}) \subseteq \{i_4,\ldots,i_t\}$.\footnote{Notice that there is a ``cycle'' in a priority cycle $[(i_1,i_2,i_3),(a_1,a_2,a_3)]$ in the sense that $i_1 \succ_{a_1} \{i_2, i_3\}$, $i_2 \succ_{a_2} \{i_1, i_3\}$, and $i_3 \succ_{a_3} \{i_1, i_2\}$.}
		
		We call a priority structure \textit{\textbf{acyclic}} if it contains no priority cycles, and call an FPTTC rule \textit{\textbf{acyclic}} if it is associated with an acyclic priority structure.
	\end{definition}

	\begin{theorem}\label{theorem FPTTC is dual iff acyclic}
		On the domain $\mathbb{L}^n(A)$, an FPTTC rule $T^{\succ_A}$ satisfies dual ownership if and only if it is acyclic.
	\end{theorem}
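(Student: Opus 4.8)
The plan is to prove both directions by contraposition, exploiting the characterization of dual ownership in terms of the set $\mathcal{T}(\succ_{A_s(P_N)}^{N_s(P_N)})$ having size at most two. Throughout, I work on the restricted domain $\mathbb{L}^n(A)$, so every object is acceptable to every agent and the TTC procedure never points an agent to herself; in particular, at any step $s$ the owners of the remaining objects are exactly the agents in $\mathcal{T}(\succ_{A_s(P_N)}^{N_s(P_N)})$, i.e.\ the agents with top priority in the reduced market for at least one remaining object.

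\textbf{Acyclic $\Rightarrow$ dual ownership.} Suppose $T^{\succ_A}$ violates dual ownership on $\mathbb{L}^n(A)$; I want to exhibit a priority cycle. So there is a profile $P_N$ and a step $s$ with $|\mathcal{T}(\succ_{A_s(P_N)}^{N_s(P_N)})| \geq 3$, i.e.\ three distinct agents $i_1, i_2, i_3 \in N_s(P_N)$ and three distinct objects $a_1, a_2, a_3 \in A_s(P_N)$ such that $i_h$ has the highest priority among $N_s(P_N)$ for $a_h$, i.e.\ $U(i_h, \succ_{a_h}) \subseteq N \setminus N_s(P_N)$ for each $h \in \{1,2,3\}$. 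If $N \setminus N_s(P_N) = \emptyset$ then $s = 1$ and $\tau(\succ_{a_h}) = i_h$ for all $h$, giving a priority cycle directly. Otherwise let $N \setminus N_s(P_N) = \{j_1, \ldots, j_k\}$ be the agents assigned before step $s$. The key step is to check that each such $j_\ell$ left the market with some object, so that I can attach those objects: since on $\mathbb{L}^n(A)$ every object is acceptable to everyone, an agent is assigned at a given step if and only if she lies on a top trading cycle, and every agent on a TTC receives an object (the outside option is never chosen). Hence I obtain distinct objects $b_1, \ldots, b_k \in A$ with $j_\ell$ assigned $b_\ell$ before step $s$, and crucially these $b_\ell$ are all distinct from $a_1, a_2, a_3$ because the $a_h$ still remain at step $s$. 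Relabeling $\{j_1,\dots,j_k\}$ as $i_4,\dots,i_t$ and $\{b_1,\dots,b_k\}$ as $a_4,\dots,a_t$, I have $U(i_h, \succ_{a_h}) \subseteq N \setminus N_s(P_N) = \{i_4,\dots,i_t\}$ for $h = 1,2,3$, and for $h \geq 4$ I need $U(i_h, \succ_{a_h}) \subseteq \{i_4,\dots,i_t\}$ as well — this follows because $i_h$ owned $a_h$ at the step she was assigned (she received it via a TTC, so $a_h$ was her endowment or pointed into her cycle... I should instead argue: $i_h$ received $a_h$ via TTC at some step $s' < s$, and an agent can only receive via trade an object whose current owner she points to; tracing the cycle, $a_h$'s owner at step $s'$ had top priority among the then-remaining agents, which include all of $N_s(P_N) \cup \{i_h, \dots\}$). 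The cleanest route is: order the pre-step-$s$ assignments by the step at which they occur and induct, showing that when $i_h$ is assigned $a_h$, everyone with higher $\succ_{a_h}$-priority than $i_h$ was already assigned, hence lies among $\{i_4, \ldots, i_t\}$. This yields exactly a priority cycle $[(i_1,i_2,i_3),(a_1,a_2,a_3)]$.

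\textbf{Dual ownership $\Rightarrow$ acyclic.} Conversely, suppose $\succ_A$ contains a priority cycle $[(i_1,i_2,i_3),(a_1,a_2,a_3)]$; I construct a profile $P_N \in \mathbb{L}^n(A)$ and a step at which three agents own all remaining objects. If the cycle is of the first type ($\tau(\succ_{a_h}) = i_h$ for all $h$), take any profile where each $i_h$ top-ranks $a_h$ and no TTC among $\{i_1,i_2,i_3\}$ forms using only $a_1,a_2,a_3$ at step 1 — for instance... actually the simplest construction is to make all other agents rank some fourth object first (or arrange preferences so that at step 1 no one is assigned, or only irrelevant agents are), forcing a later step where $i_1,i_2,i_3$ own $a_1,a_2,a_3$ along with possibly more. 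If the cycle is of the second type, with witnesses $i_4,\dots,i_t$ and $a_4,\dots,a_t$, the plan is: have each $i_h$ (for $h \le t$) top-rank $a_h$; route the ``helper'' agents $i_4, \ldots, i_t$ through TTCs that assign them $a_4,\dots,a_t$ respectively in early steps (this is possible precisely because $U(i_h,\succ_{a_h}) \subseteq \{i_4,\ldots,i_t\}$, so once the helpers with higher priority than $i_h$ for $a_h$ have left, $i_h$ owns $a_h$); meanwhile ensure $i_1, i_2, i_3$ are not assigned during those early steps by making their top objects among $a_1, a_2, a_3$ (which are owned by each other, not forming a cycle yet because some helper still blocks, or because the cyclic structure $i_1 \succ_{a_1} \{i_2,i_3\}$ etc.\ means at step 1 agent $i_1$ points to whoever owns $a_1$...). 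The main obstacle, and where I'd spend the most care, is this construction: I must simultaneously (a) get the helpers cleared out in the right order so that each $a_h$ becomes $i_h$-owned, and (b) prevent $i_1,i_2,i_3$ and the residual objects from resolving prematurely, so that at the relevant step all of $a_1,a_2,a_3$ are still present and owned by the three distinct agents $i_1,i_2,i_3$. Handling agents outside $\{i_1,\dots,i_t\}$ and objects outside $\{a_1,\dots,a_t\}$ — making them mutually point at each other so they clear off or stay out of the way — is routine but needs to be spelled out. Once such $P_N$ and step $s$ are in hand, $\{i_1,i_2,i_3\} \subseteq \mathcal{T}(\succ_{A_s(P_N)}^{N_s(P_N)})$, so $|\mathcal{T}(\succ_{A_s(P_N)}^{N_s(P_N)})| \geq 3$, contradicting dual ownership.
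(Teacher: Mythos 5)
Your overall strategy (contraposition in both directions, same case split as the paper) is the right one, but each half contains a step that is false as stated, and in each case the repair is precisely the substantive content of the paper's proof.

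In the direction ``acyclic $\Rightarrow$ dual ownership,'' you pair each previously assigned agent with the object she \emph{received}, and then propose to close the gap by proving inductively that ``when $i_h$ is assigned $a_h$, everyone with higher $\succ_{a_h}$-priority than $i_h$ was already assigned.'' That claim is false: in a TTC an agent acquires an object by \emph{trading} for it and may have arbitrarily low priority for the object she receives. Concretely, take $N=\{1,\dots,5\}$, $A=\{a,b,c,d,e\}$ with $\succ_a=(4,1,\dots)$, $\succ_b=(4,2,\dots)$, $\succ_c=(4,3,\dots)$, $\succ_d=(4,\dots)$, $\succ_e=(5,1,2,3,4)$, and a profile in $\mathbb{L}^5(A)$ where agent $4$ top-ranks $e$ and agents $1,2,3,5$ top-rank $d$. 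At Step 1 agents $4$ and $5$ trade and agent $4$ receives $e$, yet $U(4,\succ_e)=\{5,1,2,3\}$ contains the three still-unassigned agents, so your induction fails at the very first step and the tuple you build does not satisfy Definition \ref{def acyclicity}. The correct pairing attaches to each assigned agent the unique object she \emph{owned} at the step of her assignment and which was assigned at that step (here $d$ for agent $4$ and $e$ for agent $5$); for that object the condition $U(i_h,\succ_{a_h})\subseteq\{i_4,\dots,i_t\}$ holds automatically, because ownership means everyone ranked above her has already left the market.

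In the direction ``dual ownership $\Rightarrow$ acyclic,'' your construction has $i_1,i_2,i_3$ top-rank $a_1,a_2,a_3$. This can destroy the witness triple before the critical step: nothing in the definition of a priority cycle prevents, say, $U(i_1,\succ_{a_1})=\emptyset$ while $U(i_2,\succ_{a_2})$ and $U(i_3,\succ_{a_3})$ are nonempty, in which case $i_1$ owns and top-ranks $a_1$ at Step 1, forms a self-cycle, and leaves immediately, taking $a_1$ with her. The working construction instead has \emph{every} agent --- including $i_1,i_2,i_3$ and all agents outside the cycle --- rank the helper objects $\{a_4,\dots,a_t\}$ above everything else (with each helper $i_h$ top-ranking her own $a_h$). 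Then every agent always points to a helper, so every trading cycle consists only of helpers, each assigned helper receives exactly her $a_h$, and only after all helpers have departed do $i_1,i_2,i_3$ simultaneously own $a_1,a_2,a_3$. (Also note that in the first-type case no construction is needed: $\tau(\succ_{a_h})=i_h$ for $h=1,2,3$ violates dual ownership at Step 1 of \emph{every} profile, since dual ownership constrains ownership, not which trades execute; your worry about ``no TTC forming'' is beside the point.) You correctly flagged this construction as the delicate part, but the route you sketch is the one that breaks.
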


	The proof of this theorem is relegated to Appendix \ref{appendix proof of theo FPTTC is dual iff acyclic}.
	
	Since dual ownership is both necessary and sufficient condition for an FPTTC rule to be OSP-implementable on the restricted domain $\mathbb{L}^n(A)$ (see Theorem \ref{theorem prior results}), we obtain the following corollary from Theorem \ref{theorem FPTTC is dual iff acyclic}.\footnote{As we have mentioned earlier, Theorem 1 in \citet{troyan2019obviously} is not correct. Corollary \ref{coro FPTTC OSP acyclic} is a correct version of Theorem 1 in \citet{troyan2019obviously} (in fact, our result is a general result for arbitrary values of the number of agents and the number of objects).}

	\begin{corollary}\label{coro FPTTC OSP acyclic}
		On the domain $\mathbb{L}^n(A)$, an FPTTC rule $T^{\succ_A}$ is OSP-implementable if and only if it is acyclic.
	\end{corollary}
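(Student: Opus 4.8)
The plan is essentially a two-step chaining of results already established in the excerpt; the corollary carries no content beyond combining them. First I would invoke Theorem~\ref{theorem prior results} in the special case $\mathcal{P}_N = \mathbb{L}^n(A)$: this gives that an FPTTC rule $T^{\succ_A}$ is OSP-implementable on $\mathbb{L}^n(A)$ if and only if it satisfies dual ownership on $\mathbb{L}^n(A)$. Then I would invoke Theorem~\ref{theorem FPTTC is dual iff acyclic}, which states that on the same domain $\mathbb{L}^n(A)$, $T^{\succ_A}$ satisfies dual ownership if and only if it is acyclic. Transitivity of ``if and only if'' then yields: $T^{\succ_A}$ is OSP-implementable on $\mathbb{L}^n(A)$ $\iff$ $T^{\succ_A}$ satisfies dual ownership $\iff$ $T^{\succ_A}$ is acyclic, which is exactly the claim.

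Since the argument for the corollary is a pure syllogism, there is no genuine obstacle in the corollary itself; the substance lives entirely in the two inputs. Theorem~\ref{theorem prior results} is quoted from \citet{mandal2022obviously, mandal2022outside} and is taken as given. The real work --- and the step I would flag as the crux if one had to argue from scratch --- is Theorem~\ref{theorem FPTTC is dual iff acyclic}, namely the translation between the ``static'' combinatorial condition (absence of priority cycles, a statement purely about the priority structure $\succ_A$) and the ``dynamic'' condition (at most two owners of the available objects at every step of every run of the FPTTC procedure, a statement quantifying over all profiles in $\mathbb{L}^n(A)$). That equivalence is what Theorem~\ref{theorem FPTTC is dual iff acyclic} delivers, and its proof is relegated to the appendix; here I would simply cite it.

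One small point I would make explicit in the writeup: the domain restriction to $\mathbb{L}^n(A)$ (no outside options, i.e.\ every object acceptable to every agent) is used in both inputs, so the corollary is stated for exactly that domain. On the unrestricted domain $\mathbb{L}^n(A \cup \{a_0\})$ the analogous equivalence would instead pair OSP-implementability with \emph{strong} acyclicity via Theorem~\ref{theorem dual strong acyclicity equivalent} together with Theorem~\ref{theorem prior results}, but that is a separate statement and not needed here.
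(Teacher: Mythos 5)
Your proposal is correct and matches the paper exactly: the corollary is obtained by chaining Theorem~\ref{theorem prior results} (OSP-implementability $\iff$ dual ownership on $\mathbb{L}^n(A)$) with Theorem~\ref{theorem FPTTC is dual iff acyclic} (dual ownership $\iff$ acyclicity on $\mathbb{L}^n(A)$), which is precisely how the paper derives it. Your observation that the substantive content lives in the two cited theorems, and that the domain restriction matters, is also consistent with the paper's presentation.
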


	\subsection{Results on the unrestricted domain $\mathbb{L}^n(A \cup \{a_0\})$}

	\citet{troyan2019obviously} works on the OSP-implementable FPTTC rules on the restricted domain $\mathbb{L}^n(A)$ when there is an equal number of agents and objects. For this purpose, he introduces the notion of \textit{strong acyclicity}, which we present first.

	\begin{definition}\label{def strong acyclicity}
		A tuple $[(i_1,i_2,i_3),(a_1,a_2,a_3)]$, where $i_1,i_2,i_3 \in N$ and $a_1,a_2,a_3 \in A$ are all distinct, is called a \textit{\textbf{weak cycle}} in a priority structure $\succ_A$ if $i_1 \succ_{a_1} \{i_2, i_3\}$, $i_2 \succ_{a_2} \{i_1, i_3\}$, and $i_3 \succ_{a_3} \{i_1, i_2\}$.
		
		We call a priority structure \textit{\textbf{strongly acyclic}} if it contains no weak cycles, and call an FPTTC rule \textit{\textbf{strongly acyclic}} if it is associated with a strongly acyclic priority structure.
	\end{definition}

	\begin{remark}\label{remark priority cycle properties}
		Every strongly acyclic priority structure is acyclic. To see this, note that if a priority structure $\succ_A$ contains a priority cycle $[(i_1,i_2,i_3),(a_1,a_2,a_3)]$, then $i_1 \succ_{a_1} \{i_2, i_3\}$, $i_2 \succ_{a_2} \{i_1, i_3\}$, and $i_3 \succ_{a_3} \{i_1, i_2\}$. Therefore, every priority cycle is a weak cycle, and hence strong acyclicity implies acyclicity. However, the converse is not true (see Example \ref{example counterexample} for details).
	\end{remark}

	\begin{example}\label{example counterexample}
		Consider an allocation problem with four agents $N = \{i,j,k,l\}$ and four objects $A = \{a,b,c,d\}$. Let $\succ_A$ be as follows: 
		\begin{table}[H]
			\centering
			\begin{tabular}{@{}cccc@{}}
				\hline
				$\succ_a$ & $\succ_b$ & $\succ_c$ & $\succ_d$ \\ \hline
				\hline
				$i$ & $l$ & $l$ & $i$ \\
				$j$ & $j$ & $k$ & $j$ \\
				$k$ & $k$ & $j$ & $k$ \\
				$l$ & $i$ & $i$ & $l$ \\
				\hline
			\end{tabular}
			\caption{Priority structure for Example \ref{example counterexample}}
			\label{table for priority structure counter}
		\end{table}
		
		Note that $i \succ_a \{j, k\}$, $j \succ_b \{i, k\}$, and $k \succ_c \{i, j\}$, which means $[(i,j,k), (a,b,c)]$ is a weak cycle in $\succ_A$. Furthermore, it is straightforward to verify that $\succ_A$ is acyclic.
		\hfill
		$\Diamond$
	\end{example}

	Our next theorem says that dual ownership and strong acyclicity are equivalent properties of an FPTTC rule on the unrestricted domain $\mathbb{L}^n(A \cup \{a_0\})$.

	\begin{theorem}\label{theorem dual strong acyclicity equivalent}
		On the domain $\mathbb{L}^n(A \cup \{a_0\})$, an FPTTC rule $T^{\succ_A}$ satisfies dual ownership if and only if it is strongly acyclic.
	\end{theorem}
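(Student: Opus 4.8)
The plan is to prove both implications by contraposition. The implication ``strongly acyclic $\Rightarrow$ dual ownership'' will be essentially immediate, while almost all the work lies in ``dual ownership $\Rightarrow$ strongly acyclic'', which needs a witnessing preference profile and is where the unrestricted domain matters.

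\textbf{Strongly acyclic $\Rightarrow$ dual ownership.} Suppose $T^{\succ_A}$ violates dual ownership on $\mathbb{L}^n(A\cup\{a_0\})$, so that some $P_N$ and some step $s$ satisfy $|\mathcal{T}(\succ_{A_s(P_N)}^{N_s(P_N)})|\geq 3$. Since $\mathcal{T}(\succ_{A_s(P_N)}^{N_s(P_N)})$ is exactly the image of the map $a\mapsto\tau(\succ_a,N_s(P_N))$ on $A_s(P_N)$, one can pick three distinct objects $a_1,a_2,a_3\in A_s(P_N)$ with distinct images $i_h:=\tau(\succ_{a_h},N_s(P_N))$. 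As $i_1,i_2,i_3\in N_s(P_N)$ and $i_h$ is $\succ_{a_h}$-maximal in $N_s(P_N)$, we have $i_h\succ_{a_h}i_{h'}$ for $h'\neq h$; thus $[(i_1,i_2,i_3),(a_1,a_2,a_3)]$ is a weak cycle and $\succ_A$ is not strongly acyclic. (This direction uses nothing about the domain.)

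\textbf{Dual ownership $\Rightarrow$ strongly acyclic.} Suppose $\succ_A$ has a weak cycle $[(i_1,i_2,i_3),(a_1,a_2,a_3)]$, and put $S_h:=U(i_h,\succ_{a_h})$; the weak-cycle inequalities give $S_1\cup S_2\cup S_3\subseteq R:=N\setminus\{i_1,i_2,i_3\}$, and by definition $\tau(\succ_{a_h},N)\in\{i_h\}\cup S_h$. If $S_1=S_2=S_3=\emptyset$, then $\tau(\succ_{a_h},N)=i_h$ for all $h$, so $\{i_1,i_2,i_3\}\subseteq\mathcal{T}(\succ_{A_1(P_N)}^{N_1(P_N)})$ for every $P_N$ and dual ownership already fails at Step~1. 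Otherwise I would take $P_N\in\mathbb{L}^n(A\cup\{a_0\})$ in which every agent of $R$ ranks $a_0$ first, $i_1$ ranks $a_2$ first, $i_2$ ranks $a_3$ first, and $i_3$ ranks $a_1$ first, with the remaining ranks arbitrary, and then run $T^{\succ_A}$ at $P_N$. At Step~1 every agent of $R$ forms a self-loop and leaves with $a_0$, carrying no object. The key claim is that, because some $S_h\neq\emptyset$, none of $i_1,i_2,i_3$ lies on a trading cycle at Step~1, and hence none of $a_1,a_2,a_3$ is assigned there either. Granting this, $N_2(P_N)=\{i_1,i_2,i_3\}$ and $\{a_1,a_2,a_3\}\subseteq A_2(P_N)$; since $i_h\succ_{a_h}i_{h'}$ for $h'\neq h$, we get $\tau(\succ_{a_h},\{i_1,i_2,i_3\})=i_h$, so $\{i_1,i_2,i_3\}\subseteq\mathcal{T}(\succ_{A_2(P_N)}^{N_2(P_N)})$ and dual ownership fails at Step~2.

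The main obstacle is the key claim. Here one checks that at Step~1 agent $i_1$ points to the owner of $a_2$, namely $\tau(\succ_{a_2},N)\in\{i_2\}\cup S_2$, while $i_2$ points to $\tau(\succ_{a_3},N)\in\{i_3\}\cup S_3$ and $i_3$ points to $\tau(\succ_{a_1},N)\in\{i_1\}\cup S_1$. If any of these three owners lies in the corresponding $S_h\subseteq R$, then the chain $i_1\to i_2\to i_3$ reaches a self-looping agent of $R$ and cannot close, so no trading cycle through $i_1$, $i_2$, or $i_3$ forms at Step~1; the only trading cycle among them that could arise is $i_1\to i_2\to i_3\to i_1$, which requires $\tau(\succ_{a_h},N)=i_h$, i.e.\ $S_h=\emptyset$, for all three $h$ --- the case already handled. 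Each of $a_1,a_2,a_3$ is then owned at Step~1 either by a self-looping agent of $R$ or by one of $i_1,i_2,i_3$, who trades nothing, so it stays in the market, completing the claim. This is also the step where the unrestricted domain is essential: it is precisely the freedom to point the ``blocking'' agents of $S_h$ at $a_0$ and clear them in one step that makes strong acyclicity --- rather than the generally weaker acyclicity of Theorem~\ref{theorem FPTTC is dual iff acyclic} --- the relevant condition on $\mathbb{L}^n(A\cup\{a_0\})$.
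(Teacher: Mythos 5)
Your proof is correct, but it takes a genuinely different route from the paper. The paper does not argue directly: it chains two cited results, namely that dual ownership is equivalent to \emph{dual dictatorship} (the combinatorial condition $|\mathcal{T}(\succ_{A'}^{N'})|\leq 2$ for \emph{all} $N'\subseteq N$, $A'\subseteq A$) on the unrestricted domain \citep{mandal2022outside}, and that dual dictatorship is equivalent to strong acyclicity \citep{troyan2019obviously}, with the observation that Troyan's proof extends verbatim to unequal numbers of agents and objects. You instead prove both contrapositives from scratch: the easy direction (three owners at some step immediately yield a weak cycle, with no domain assumption) is essentially the same observation that underlies the cited equivalences, and for the hard direction you construct an explicit witnessing profile in which every agent of $R=N\setminus\{i_1,i_2,i_3\}$ ranks $a_0$ first and $i_1,i_2,i_3$ point cyclically at $a_2,a_3,a_1$. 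Your key claim --- that in Case B no trading cycle through $i_1,i_2,i_3$ can close at Step 1 because each agent has a unique outgoing edge and the only candidate cycle $i_1\to i_2\to i_3\to i_1$ forces $S_1=S_2=S_3=\emptyset$ --- is sound, and the resulting two-step violation of dual ownership is verified correctly. What your approach buys is a self-contained, constructive argument that also makes transparent \emph{why} the unrestricted domain matters (the $a_0$-top preferences clear the blocking agents in one step, which is exactly what fails on $\mathbb{L}^n(A)$ and forces the weaker acyclicity condition there); what the paper's approach buys is brevity and the intermediate equivalence with dual dictatorship, which is of independent interest. No gaps.
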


	The proof of this theorem is relegated to Appendix \ref{appendix proof of theo dual strong acyclicity equivalent}.
	
	Since dual ownership is both necessary and sufficient condition for an FPTTC rule to be OSP-implementable on the unrestricted domain (see Theorem \ref{theorem prior results}), we obtain the following corollary from Theorem \ref{theorem dual strong acyclicity equivalent}.

	\begin{corollary}\label{corollary FPTTC OSP strong acyclic}
		On the domain $\mathbb{L}^n(A \cup \{a_0\})$, an FPTTC rule $T^{\succ_A}$ is OSP-implementable if and only if it is strongly acyclic.
	\end{corollary}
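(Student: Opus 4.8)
The plan is to deduce the corollary directly by chaining two biconditionals already established. First I would invoke Theorem \ref{theorem prior results} with $\mathcal{P}_N = \mathbb{L}^n(A \cup \{a_0\})$: on this domain an FPTTC rule $T^{\succ_A}$ is OSP-implementable if and only if it satisfies dual ownership. Next I would invoke Theorem \ref{theorem dual strong acyclicity equivalent}: on the same domain $T^{\succ_A}$ satisfies dual ownership if and only if it is strongly acyclic. Since both statements concern the identical domain $\mathbb{L}^n(A \cup \{a_0\})$, composing the two equivalences immediately yields the claim: $T^{\succ_A}$ is OSP-implementable on $\mathbb{L}^n(A \cup \{a_0\})$ precisely when it is strongly acyclic.

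At the level of the corollary there is essentially no obstacle; the substantive work lives entirely in the two results being combined. The genuinely nontrivial ingredient is Theorem \ref{theorem dual strong acyclicity equivalent} — that, on the unrestricted domain, forbidding weak cycles in the priority structure is equivalent to never having three or more distinct owners of the remaining objects at any step of any run of the rule — whose proof is deferred to the appendix; the other ingredient, Theorem \ref{theorem prior results}, is the prior characterization of \citet{mandal2022obviously, mandal2022outside} and is taken as given. The one point that requires care is that dual ownership is domain-dependent (cf. Note \ref{remark dual ownership depends on domains} and Example \ref{example FPTTC dual on restricted not on unrestricted}), so one must ensure the two theorems are applied on the \emph{same} domain; since both are stated for $\mathbb{L}^n(A \cup \{a_0\})$, this is satisfied.

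If instead one wanted a direct argument bypassing dual ownership, the route would be: for necessity, show that no OSP mechanism can implement $T^{\succ_A}$ when its priority structure admits a weak cycle, by constructing a profile (built from the weak cycle and padded with outside options, in the spirit of Example \ref{example FPTTC dual on restricted not on unrestricted}) at which some step has at least three owners, and then running the standard OSP-obstruction argument showing that three owners at a step forces a violation of obvious dominance; for sufficiency, show that strong acyclicity permits building an OSP mechanism by sequencing the (at most two) owners at each step in a millipede-style game with greedy strategies in the sense of \citet{pycia2019theory}. But given the theorems stated above, the one-line composition is all that is needed.
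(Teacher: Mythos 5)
Your proposal is correct and matches the paper's own derivation: the corollary is obtained exactly by composing Theorem \ref{theorem prior results} (OSP-implementability $\Leftrightarrow$ dual ownership on $\mathbb{L}^n(A \cup \{a_0\})$) with Theorem \ref{theorem dual strong acyclicity equivalent} (dual ownership $\Leftrightarrow$ strong acyclicity on the same domain). Your added caution about the domain-dependence of dual ownership is apt but not an issue here, since both cited results are stated for the unrestricted domain.
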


	\subsection{Relation with Ergin-acyclicity and an illustrative corollary}

	\subsubsection{Ergin-acyclicity}

	\textit{Ergin-acyclicity} is a well-known acyclicity condition in the literature. In what follows, we present this notion, and show that it is stronger than strong acyclicity.

	\begin{definition}
		A tuple $[(i_1,i_2,i_3),(a_1,a_2)]$, where $i_1,i_2,i_3 \in N$ and $a_1,a_2 \in A$ are all distinct, is called an \textit{\textbf{Ergin-cycle}} in a priority structure $\succ_A$ if $i_1 \succ_{a_1} i_2 \succ_{a_1} i_3 \succ_{a_2} i_1$.
		
		We call a priority structure \textit{\textbf{Ergin-acyclic}} if it contains no Ergin-cycles, and call an FPTTC rule \textbf{\textit{Ergin-acyclic}} if it is associated with an Ergin-acyclic priority structure.
	\end{definition}

	\begin{proposition}\label{proposition ergin acyclicity implies strong acyclicity}
		Every Ergin-acyclic priority structure is strongly acyclic.
	\end{proposition}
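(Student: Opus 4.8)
The plan is to prove the contrapositive: if a priority structure $\succ_A$ is not strongly acyclic, then it is not Ergin-acyclic. So suppose $\succ_A$ contains a weak cycle $[(i_1,i_2,i_3),(a_1,a_2,a_3)]$, which by definition means $i_1 \succ_{a_1} \{i_2,i_3\}$, $i_2 \succ_{a_2} \{i_1,i_3\}$, and $i_3 \succ_{a_3} \{i_1,i_2\}$, where $i_1,i_2,i_3$ are distinct agents and $a_1,a_2,a_3$ are distinct objects. The goal is to extract an Ergin-cycle from these six strict comparisons.

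The key observation is that $\succ_{a_1}$ is a strict linear order on $N$ and $i_2 \neq i_3$, so exactly one of $i_2 \succ_{a_1} i_3$ or $i_3 \succ_{a_1} i_2$ holds. I would split into these two cases. In the first case, combining $i_1 \succ_{a_1} i_2$ with $i_2 \succ_{a_1} i_3$ yields the chain $i_1 \succ_{a_1} i_2 \succ_{a_1} i_3$; together with the comparison $i_3 \succ_{a_3} i_1$ already available from the weak cycle, this exhibits $[(i_1,i_2,i_3),(a_1,a_3)]$ as an Ergin-cycle. In the second case, combining $i_1 \succ_{a_1} i_3$ with $i_3 \succ_{a_1} i_2$ yields the chain $i_1 \succ_{a_1} i_3 \succ_{a_1} i_2$; together with $i_2 \succ_{a_2} i_1$ this exhibits $[(i_1,i_3,i_2),(a_1,a_2)]$ as an Ergin-cycle. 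Either way $\succ_A$ contains an Ergin-cycle, so it is not Ergin-acyclic, which completes the contrapositive.

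There is essentially no obstacle here beyond bookkeeping. The one point to verify in each case is that the triple of agents and the pair of objects witnessing the Ergin-cycle are genuinely distinct, as required by the definition of an Ergin-cycle; but this distinctness is inherited directly from the distinctness built into the weak cycle. (One could alternatively remark that the defining conditions of a weak cycle are invariant under cyclically rotating the labelled pairs $(i_1,a_1),(i_2,a_2),(i_3,a_3)$, so there is no loss of generality in working with $\succ_{a_1}$; but the two-case split above settles the matter without appealing to this symmetry.)
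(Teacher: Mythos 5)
Your proof is correct and follows essentially the same route as the paper's: both argue the contrapositive and split on whether $i_2 \succ_{a_1} i_3$ or $i_3 \succ_{a_1} i_2$, extracting the Ergin-cycles $[(i_1,i_2,i_3),(a_1,a_3)]$ and $[(i_1,i_3,i_2),(a_1,a_2)]$ respectively. The only difference is your explicit remark about distinctness of the witnesses, which the paper leaves implicit.
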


	\begin{proof}[\textbf{Proof of Proposition \ref{proposition ergin acyclicity implies strong acyclicity}.}]
		We prove the contrapositive. Let $\succ_A$ be a priority structure that contains a weak cycle $[(i_1,i_2,i_3),(a_1,a_2,a_3)]$. By the definition of a weak cycle, we have $i_1 \succ_{a_1} \{i_2, i_3\}$, $i_2 \succ_{a_2} i_1$, and $i_3 \succ_{a_3} i_1$. We distinguish the following two cases.
		\begin{enumerate}[(i)]
			\item\label{item ergin 1} Suppose $i_1 \succ_{a_1} i_2 \succ_{a_1} i_3$. 
			\\
			The facts $i_1 \succ_{a_1} i_2 \succ_{a_1} i_3$ and $i_3 \succ_{a_3} i_1$ together imply that $[(i_1,i_2,i_3),(a_1,a_3)]$ is an Ergin-cycle in $\succ_A$.
			
			\item\label{item ergin 2} Suppose $i_1 \succ_{a_1} i_3 \succ_{a_1} i_2$.
			\\
			The facts $i_1 \succ_{a_1} i_3 \succ_{a_1} i_2$ and $i_2 \succ_{a_2} i_1$ together imply that $[(i_1,i_3,i_2),(a_1,a_2)]$ is an Ergin-cycle in $\succ_A$.
		\end{enumerate}
		Since Cases \ref{item ergin 1} and \ref{item ergin 2} are exhaustive, it follows that $\succ_A$ is not Ergin-acyclic. This completes the proof of Proposition \ref{proposition ergin acyclicity implies strong acyclicity}.
	\end{proof}

	\begin{remark}
		For an arbitrary domain of preference profiles $\mathcal{P}_N$, every Ergin-acyclic FPTTC rule is OSP-implementable. To see this, note that every Ergin-acyclic FPTTC rule is strongly acyclic (see Proposition \ref{proposition ergin acyclicity implies strong acyclicity}) and every strongly acyclic FPTTC rule is OSP-implementable on the unrestricted domain $\mathbb{L}^n(A \cup \{a_0\})$ (see Corollary \ref{corollary FPTTC OSP strong acyclic}). Therefore, every Ergin-acyclic FPTTC rule is OSP-implementable on any domain of preference profiles.
	\end{remark}

	It is worth noting that the converse of Proposition \ref{proposition ergin acyclicity implies strong acyclicity} is not true. Below, Example \ref{example counterexample ergin} presents a priority structure, which is strongly acyclic but not Ergin-acyclic.

	\begin{example}\label{example counterexample ergin}
		Consider an allocation problem with three agents $N = \{i,j,k\}$ and three objects $A = \{a,b,c\}$. Let $\succ_A$ be as follows: 
		\begin{table}[H]
			\centering
			\begin{tabular}{@{}ccc@{}}
				\hline
				$\succ_a$ & $\succ_b$ & $\succ_c$ \\ \hline
				\hline
				$i$ & $i$ & $j$ \\
				$j$ & $k$ & $k$ \\
				$k$ & $j$ & $i$ \\
				\hline
			\end{tabular}
			\caption{Priority structure for Example \ref{example counterexample ergin}}
			\label{table for priority structure counter ergin}
		\end{table}
		
		Note that $i \succ_b k \succ_b j$ and $j \succ_c i$, which means $[(i,k,j), (b,c)]$ is an Ergin-cycle in $\succ_A$. Furthermore, it is straightforward to verify that $\succ_A$ is strongly acyclic.
		\hfill
		$\Diamond$
	\end{example}

	\subsubsection{OSP-implementability of agent-proposing deferred acceptance rules}

	Apart from FPTTC rules, another well-known class of priority-based assignment rules is \textit{agent-proposing deferred acceptance (APDA) rules}. Below, we present a brief description of such a rule for the sake of completeness. This description follows \citet{kesten2006two}. 
	
	For a given priority structure $\succ_A$, the \textit{\textbf{APDA rule $D^{\succ_A}$ associated with $\succ_A$}} is defined by an iterative procedure as follows. Consider an arbitrary preference profile $P_N \in \mathbb{L}^n(A \cup \{a_0\})$.

	\begin{itemize}[leftmargin = 1.35cm]
		\item[\textbf{\textit{Step 1.}}] Each agent ``applies to'' her favorite element among all the objects and the outside option $a_0$. If the outside option is the favorite element of an agent, then she is assigned the outside option. If the number of agents who apply to an object, say $a$, is at least one, then the most preferred agent among them according to $\succ_a$ tentatively holds $a$. The remaining agents are ``rejected''.
		
		\item[\textbf{\textit{Step $2$.}}] Each agent, who is rejected by an object at Step $1$, applies to her next favorite element. If the outside option is the next favorite element of an agent, then she is assigned the outside option. If the total number of agents who apply to an object, say $a$, and who is tentatively holding $a$ from earlier steps, is at least one, then the most preferred agent among them according to $\succ_a$ tentatively holds $a$. The remaining agents are rejected.
	\end{itemize}

	This procedure is repeated iteratively until all agents are assigned. At this point, the tentative holding of an object becomes permanent. This completes the description of an APDA rule.\medskip
	
	\citet{ashlagi2018stable} study the OSP-implementability of APDA rules, and show that Ergin-acyclicity is a sufficient condition for an APDA rule to be OSP-implementable. We derive their result as a corollary of our results.

	\begin{corollary}[Theorem 2 in \citet{ashlagi2018stable}]\label{corollary ashlagi}
		On the domain $\mathbb{L}^n(A \cup \{a_0\})$, an APDA rule $D^{\succ_A}$ is OSP-implementable if $\succ_A$ is Ergin-acyclic.
	\end{corollary}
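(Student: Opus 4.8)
The plan is to reduce the statement to what has already been proved for FPTTC rules, by showing that an Ergin-acyclic APDA rule is, as an assignment rule, identical to the FPTTC rule built from the \emph{same} priority structure. Granting the equality $D^{\succ_A} = T^{\succ_A}$ on $\mathbb{L}^n(A \cup \{a_0\})$, the corollary is immediate: if $\succ_A$ is Ergin-acyclic then by Proposition \ref{proposition ergin acyclicity implies strong acyclicity} it is strongly acyclic, so by Corollary \ref{corollary FPTTC OSP strong acyclic} there is an OSP mechanism $G$ on $\mathbb{L}^n(A \cup \{a_0\})$ with $f^G = T^{\succ_A}$; since $f^G = T^{\succ_A} = D^{\succ_A}$, the same mechanism $G$ OSP-implements $D^{\succ_A}$, as required.

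Thus the real content is the claim that Ergin-acyclicity of $\succ_A$ forces $D^{\succ_A}(P_N) = T^{\succ_A}(P_N)$ for every profile $P_N \in \mathbb{L}^n(A \cup \{a_0\})$. The quickest route is to invoke \citet{kesten2006two}, who shows that deferred acceptance and top trading cycles run from a common priority structure yield the same allocation at every profile exactly when that structure is acyclic; in the unit-capacity environment of this paper his acyclicity condition is just the Ergin-acyclicity of Definition \ref{def acyclicity} (the ``scarcity'' part of his cycle definition asks only for empty index sets and is automatically satisfied), and the coincidence is unaffected by the outside option, since an agent who ranks $a_0$ above every remaining object is assigned $a_0$ under both procedures. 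If a self-contained argument is preferred instead, I would proceed by induction on the steps of $T^{\succ_A}$: one shows that at the first step the agents lying on a top trading cycle -- and hence receiving their most preferred acceptable object, or $a_0$ -- are precisely the agents whom APDA never rejects, the point being that Ergin-acyclicity rules out exactly the priority configurations that would let APDA displace such an agent; one then deletes these agents together with their assigned objects, notes that the restriction of $\succ_A$ to the reduced market is again Ergin-acyclic, and repeats.

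The main obstacle is this equivalence $D^{\succ_A} = T^{\succ_A}$ itself. If it is imported from \citet{kesten2006two}, the proof of the corollary collapses to the short deduction of the first paragraph; if it is to be established here, the delicate point is showing that the sequence of top-trading-cycle assignments and the sequence of APDA tentative holdings coincide step by step, which is precisely the content that Ergin-acyclicity -- the absence of any pattern $i_1 \succ_{a_1} i_2 \succ_{a_1} i_3 \succ_{a_2} i_1$ -- is there to guarantee.
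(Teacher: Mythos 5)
Your proposal is correct and follows essentially the same route as the paper: both invoke \citet{kesten2006two}'s equivalence $D^{\succ_A} = T^{\succ_A}$ under Ergin-acyclicity, then combine Proposition \ref{proposition ergin acyclicity implies strong acyclicity} with Corollary \ref{corollary FPTTC OSP strong acyclic}. (Only a cosmetic slip: the Ergin-cycle notion is not the one in Definition \ref{def acyclicity}.)
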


	\begin{proof}[\textbf{Proof of Corollary \ref{corollary ashlagi}.}]
		\citet{kesten2006two} shows that for a priority structure $\succ_A$, the associated APDA rule $D^{\succ_A}$ and the associated FPTTC rule $T^{\succ_A}$ are equivalent if and only if $\succ_A$ is Ergin-acyclic (see Theorem 1 in \citet{kesten2006two}).\footnote{\citet{kesten2006two} introduces the notion of \textit{Kesten-acyclicity} in a many-to-one matching model, and shows that it is both necessary and sufficient condition for the associated APDA rule and the associated FPTTC rule to be equivalent. It is well-known that Kesten-acyclicity and Ergin-acyclicity are equivalent in the context of one-to-one matching.} Since Ergin-acyclicity implies strong acyclicity (see Proposition \ref{proposition ergin acyclicity implies strong acyclicity}), Corollary \ref{corollary ashlagi} follows from Corollary \ref{corollary FPTTC OSP strong acyclic} and \citet{kesten2006two}'s result.
	\end{proof}

	Since strong acyclicity is weaker than Ergin-acyclicity (see Proposition \ref{proposition ergin acyclicity implies strong acyclicity}), in view of Corollary \ref{corollary ashlagi}, one may wonder about the connection between strong acyclicity and the OSP-implementability of APDA rules: \textit{Like Ergin-acyclicity, is strong acyclicity also a sufficient condition for an APDA rule to be OSP-implementable? Or is it necessary?} In what follows, we show that strong acyclicity is not a sufficient condition for an APDA rule to be OSP-implementable (Example \ref{example strong not OSP}), but is a necessary condition (Proposition \ref{proposition OSP is strong}).

	\begin{example}\label{example strong not OSP}
		Consider an allocation problem with three agents $N = \{i,j,k\}$ and three objects $A = \{a,b,c\}$. Let $\succ_A$ be as follows: 
		\begin{table}[H]
			\centering
			\begin{tabular}{@{}ccc@{}}
				\hline
				$\succ_a$ & $\succ_b$ & $\succ_c$ \\ \hline
				\hline
				$i$ & $i$ & $k$ \\
				$j$ & $j$ & $i$ \\
				$k$ & $k$ & $j$ \\
				\hline
			\end{tabular}
			\caption{Priority structure for Example \ref{example strong not OSP}}
			\label{table for priority structure strong not OSP}
		\end{table}
		
		It is straightforward to verify that $\succ_A$ is strongly acyclic. Furthermore, using similar arguments as for Proposition 1 in \citet{ashlagi2018stable}, it follows that the APDA rule $D^{\succ_A}$ is not OSP-implementable on the restricted domain $\mathbb{L}^3(A)$, and consequently is not OSP-implementable on the unrestricted domain $\mathbb{L}^3(A \cup \{a_0\})$.
		\hfill
		$\Diamond$
	\end{example}

	\begin{proposition}\label{proposition OSP is strong}
		On the domain $\mathbb{L}^n(A \cup \{a_0\})$, an APDA rule $D^{\succ_A}$ is OSP-implementable only if it is strongly acyclic.
	\end{proposition}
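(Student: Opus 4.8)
The plan is to prove the contrapositive: if $\succ_A$ is not strongly acyclic, then $D^{\succ_A}$ is not OSP-implementable on $\mathbb{L}^n(A \cup \{a_0\})$. So suppose $\succ_A$ contains a weak cycle $[(i_1,i_2,i_3),(a_1,a_2,a_3)]$, meaning $i_1 \succ_{a_1} \{i_2,i_3\}$, $i_2 \succ_{a_2} \{i_1,i_3\}$, and $i_3 \succ_{a_3} \{i_1,i_2\}$. The goal is to exhibit two preference profiles that force any purported OSP mechanism implementing $D^{\succ_A}$ into a contradiction with Definition~\ref{def updated OSP}. The natural candidate for the manipulating agent is $i_1$: I would construct profiles in which, depending on a report of $i_1$ that cannot yet be distinguished from her true report when she first moves, the outcome for $i_1$ varies between receiving $a_1$ and receiving something strictly worse (say $a_2$ or the outside option), while one of $i_2,i_3$ absorbs the object $i_1$ did not take.

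The key steps, in order, would be: (1) Fix an OSP mechanism $G$ with $f^G = D^{\succ_A}$ and consider the first node $v$ at which $i_1$ is called to play along some relevant path. Since the edges out of $v$ partition $\mathcal{P}_{i_1}$ (or a subset of it, by the mechanism definition), identify an edge $e$ out of $v$ whose label $\eta^{EP}(e)$ must contain at least two of the preferences I want to use for $i_1$ — this is the crux, because OSP only constrains $i_1$'s payoff across preferences that \emph{diverge}, so I need the ``good'' and ``bad'' scenarios to be reachable from the \emph{same} edge, forcing them to be comparisons of $i_1$'s outcome versus outcomes after a genuine divergence. (2) Build a profile $P_N$ where $i_1$ ranks $a_1$ first and, via the deferred-acceptance dynamics dictated by the weak cycle, $D^{\succ_A}$ assigns $a_1$ to $i_1$; here the priorities $i_2\succ_{a_2}i_1$ and $i_3\succ_{a_3}i_1$ are used to make sure $i_1$ cannot get $a_2$ or $a_3$ if she is bumped, so $a_1$ is genuinely her best shot. (3) Build a second profile $\tilde P_N$ agreeing with $P_N$ on $i_1$ up through node $v$ (so $P_{i_1}$ and $\tilde P_{i_1}$ do not diverge at $v$, or the divergence is arranged to be ``after'' $v$), but in which the other agents' reports are changed so that under $D^{\succ_A}$ agent $i_1$ is rejected from $a_1$ and ends up with a strictly worse object. (4) Conclude that at the node where $i_1$'s two relevant reports diverge, the worst outcome from the truthful branch is dominated by the best from the deviating branch, contradicting OSP. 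Essentially this mirrors the logic used for Proposition~1 in \citet{ashlagi2018stable}, adapted so that the obstruction is exactly a weak cycle rather than an Ergin-cycle.

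The main obstacle is step~(1) combined with correctly designing the profiles so that the bad outcome for $i_1$ is forced by \emph{deferred acceptance} and not merely by some other mechanism — I must trace the APDA application-and-rejection chains carefully, using all three inequalities of the weak cycle, to guarantee that when the ``right'' competitor applies, $i_1$ is displaced from $a_1$ and, because of $i_2\succ_{a_2}i_1$ and $i_3\succ_{a_3}i_1$, also cannot recover by applying to $a_2$ or $a_3$. A secondary subtlety is handling the position of the node $v$: I need the two profiles to pass through a common node at which the relevant preferences of $i_1$ either diverge (to invoke Definition~\ref{def updated OSP} directly) or, if they do not diverge there, to push the argument to the first node where they do, while ensuring both constructed profiles still pass through it — this requires an appeal to the structure of OSP mechanisms (that along any path, an agent's later moves only refine earlier ones) rather than any special feature of APDA.
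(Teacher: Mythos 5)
Your overall strategy---prove the contrapositive, restrict to a small subdomain built from the weak cycle, and derive a contradiction at the first branching node of a purported OSP mechanism---is the right one and matches the paper's. But there is a genuine gap: you commit to agent $i_1$ as ``the'' manipulating agent and only analyze the first node at which $i_1$ is called to play. An OSP mechanism is free to assign the first branching node to $i_2$ or $i_3$ instead, and to resolve all the relevant uncertainty through their reports before $i_1$ ever faces a divergence; in that case $i_1$ never encounters a node at which two of your chosen preferences diverge, and your step (4) has nothing to bite on. The paper's proof must (and does) handle all three possibilities for $\eta^{NA}(v)$ at the first node with at least two outgoing edges, exhibiting for \emph{each} of $i_1,i_2,i_3$ a pair of profiles whose non-divergence is forced by OSP, until the branching itself becomes impossible. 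Your proposal gives no argument for the $i_2$ and $i_3$ branches.

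A second, related omission: the weak cycle only fixes that $i_h$ beats the other two at $\succ_{a_h}$; the APDA outcomes you need in steps (2)--(3) depend on where the \emph{third} agent sits in each priority (e.g., whether $i_2 \succ_{a_1} i_3$ or $i_3 \succ_{a_1} i_2$), because that determines who wins the rejection chains. The paper splits into five configurations (two up to relabeling) and uses \emph{different} witness profiles in each---four profiles in one case, three in another. Your remark that you ``must trace the application-and-rejection chains carefully'' gestures at this but does not supply the case split, and no single family of profiles works uniformly across configurations. Without both the three-agent analysis at the branching node and the configuration case analysis, the proof does not go through.
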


	The proof of this proposition is relegated to Appendix \ref{appendix proof of proposition OSP is strong}.

	\section{SOSP-implementability of FPTTC rules}\label{section SOSP FPTTC}

	Before proceeding with our next results, we first present a well-known class of assignment rules, namely \textit{serial dictatorships}. In a \textbf{\textit{serial dictatorship}}, agents are ordered, and the first agent in the ordering gets her most preferred element among all the objects and the outside option $a_0$, the second agent in the ordering gets her most preferred element among the remaining objects and the outside option $a_0$, etc.
	
	Note that serial dictatorships are special cases of FPTTC rules. For example, consider an allocation problem with three agents $N = \{1, 2, 3\}$ and three objects $A = \{a_1, a_2, a_3\}$. The priority structure associated with the FPTTC rule that corresponds to the serial dictatorship with the exogenously given ordering $(1, 2, 3)$ is as follows:
	\begin{table}[H]
		\centering
		\begin{tabular}{@{}ccc@{}}
			\hline
			$\succ_{a_1}$ & $\succ_{a_2}$ & $\succ_{a_3}$ \\ \hline
			\hline
			$1$ & $1$ & $1$ \\
			$2$ & $2$ & $2$ \\
			$3$ & $3$ & $3$ \\
			\hline
		\end{tabular}
	\end{table}
	
	Further note that serial dictatorships are SOSP-implementable. For example, consider the allocation problem with three agents $N = \{1, 2, 3\}$ and three objects $A = \{a_1, a_2, a_3\}$. On the restricted domain $\mathbb{L}^3(A)$, the SOSP mechanism in Figure \ref{tree sd are sosp} implements the serial dictatorship with the exogenously given ordering $(1, 2, 3)$.

	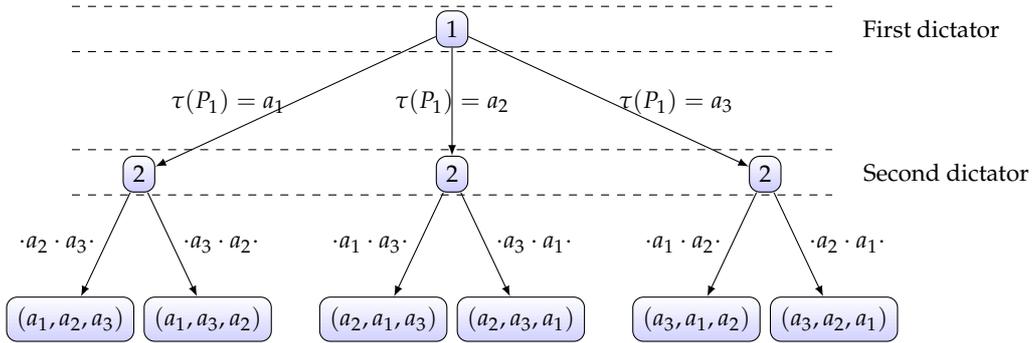
\begin{figure}[H]
		\centering
		\begin{tikzpicture}
			[
			grow                    = down,
			sibling distance        = 10.7em,
			level distance          = 5em,
			edge from parent/.style = {draw, -latex},
			every node/.style       = {font=\footnotesize}
			]
			\node [root] {1}
			child { node [root] {2}
				[
				grow                    = down,
				sibling distance        = 4.7em,
				level distance          = 5em,
				edge from parent/.style = {draw, -latex},
				every node/.style       = {font=\footnotesize}
				]
				child { node [env] {$(a_1, a_2, a_3)$}
					edge from parent node [left] {$\cdot a_2 \cdot a_3 \cdot$} }
				child { node [env] {$(a_1, a_3, a_2)$}
					edge from parent node [right] {$\cdot a_3 \cdot a_2 \cdot$} }
				edge from parent node [left] {$\tau(P_1) = a_1$} }	
			child { node [root] {2}
				[
				grow                    = down,
				sibling distance        = 4.7em,
				level distance          = 5em,
				edge from parent/.style = {draw, -latex},
				every node/.style       = {font=\footnotesize}
				]
				child { node [env] {$(a_2, a_1, a_3)$}
					edge from parent node [left] {$\cdot a_1 \cdot a_3 \cdot$} }
				child { node [env] {$(a_2, a_3, a_1)$}
					edge from parent node [right] {$\cdot a_3 \cdot a_1 \cdot$} }
				edge from parent node {$\tau(P_1) = a_2$} }
			child { node [root] {2}
				[
				grow                    = down,
				sibling distance        = 4.7em,
				level distance          = 5em,
				edge from parent/.style = {draw, -latex},
				every node/.style       = {font=\footnotesize}
				]
				child { node [env] {$(a_3, a_1, a_2)$}
					edge from parent node [left] {$\cdot a_1 \cdot a_2 \cdot$} }
				child { node [env] {$(a_3, a_2, a_1)$}
					edge from parent node [right] {$\cdot a_2 \cdot a_1 \cdot$} }
				edge from parent node [right] {$\tau(P_1) = a_3$} };
			
			\path (-5,.3) edge[dashed] node [right] {} (5,.3);
			\path (-5,-.3) edge[dashed] node [right] {} (5,-.3);
			\path (-5,-1.6) edge[dashed] node [right] {} (5,-1.6);
			\path (-5,-2.2) edge[dashed] node [right] {} (5,-2.2);
			
			\node at (6.3,0) {First dictator};
			\node at (6.5,-1.9) {Second dictator};
		\end{tikzpicture}
		\caption{SOSP mechanism that implements the serial dictatorship with ordering $(1, 2, 3)$ on $\mathbb{L}^3(A)$}
		\label{tree sd are sosp}
	\end{figure}

	Using a similar logic as for Figure \ref{tree sd are sosp}, one can construct an SOSP mechanism that implements a serial dictatorship on the unrestricted domain $\mathbb{L}^n(A \cup \{a_0\})$. We concisely sum up the above discussion as follows.

	\begin{remark}\label{remark sd are SOSP FPTTC}
		For an arbitrary domain of preference profiles $\mathcal{P}_N$, every serial dictatorship is an SOSP-implementable FPTTC rule.
	\end{remark}

	Remark \ref{remark sd are SOSP FPTTC} arises a natural question: \textit{Apart from serial dictatorships, are there any other SOSP-implementable FPTTC rules?} Theorem \ref{theorem SOSP FPTTC restricted} and Theorem \ref{theorem SOSP FPTTC unrestricted} provide answers to this question for the domains $\mathbb{L}^n(A)$ and $\mathbb{L}^n(A \cup \{a_0\})$, respectively.

	\subsection{Results on the restricted domain $\mathbb{L}^n(A)$}

	To characterize all SOSP-implementable FPTTC rules on the restricted domain $\mathbb{L}^n(A)$, we introduce a subclass of FPTTC rules, namely \textit{weak serial dictatorships}, in this subsection.

	\begin{definition}
		An FPTTC rule $T^{\succ_A}$ is a \textbf{\textit{weak serial dictatorship}} if for all $P_N \in \mathbb{L}^n(A)$ and all $s$, 
		\begin{equation*}
			|A_s(P_N)| > 2 \implies |\mathcal{T}(\succ_{A_s(P_N)}^{N_s(P_N)})| = 1.
		\end{equation*}
	\end{definition}

	In other words, an FPTTC rule is a weak serial dictatorship if, for any preference profile in $\mathbb{L}^n(A)$ and any step of the FPTTC rule at that preference profile, if there are more than two objects remaining in the reduced market at that step, then there is exactly one agent who owns all those remaining objects at that step. Notice that on the restricted domain $\mathbb{L}^n(A)$, a weak serial dictatorship has a similar procedure to a serial dictatorship as long as there are more than two objects remaining in the market.

	\begin{remark}
		Every serial dictatorship is a weak serial dictatorship.
	\end{remark}

	Our next theorem says that on the restricted domain $\mathbb{L}^n(A)$, weak serial dictatorships are the only SOSP-implementable FPTTC rules.

	\begin{theorem}\label{theorem SOSP FPTTC restricted}
		On the domain $\mathbb{L}^n(A)$, an FPTTC rule $T^{\succ_A}$ is SOSP-implementable if and only if it is a weak serial dictatorship.
	\end{theorem}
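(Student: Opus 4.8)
The plan is to prove the two implications separately.

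For the ``if'' direction, assuming $T^{\succ_A}$ is a weak serial dictatorship, I would construct an SOSP mechanism $G$ with $f^G=T^{\succ_A}$ by recursion on the reduced market $(N_s(P_N),A_s(P_N))$. While more than two objects remain, the defining property yields a unique owner $d$ of all of them; in $G$ I call $d$, let her clinch her favourite remaining object, delete $d$ and that object, and recurse. This reproduces what $T^{\succ_A}$ does (the only top trading cycle at such a step is the self-loop $\{d\}$, since everyone else points to $d$), and the node is SOSP because a clinch makes $d$'s best and worst outcomes from the prescribed move coincide while every deviation gives a less preferred object. When exactly two objects $\{x,y\}$ remain there are at most two owners. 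If one agent owns both she is again a dictator over $\{x,y\}$, after which the last object is clinched by its re-owner and the remaining agents receive $a_0$; all these are clinch nodes. If distinct agents $d_x,d_y$ own $x,y$, I call $d_x$ with two moves: ``take $x$'' (clinch $x$), or ``request $y$'', after which $d_y$ is offered a free choice of $\{x,y\}$ — if $d_y$ takes $y$ then $d_x$ gets $x$, if $d_y$ takes $x$ they trade and $d_x$ gets $y$. One checks this reproduces the TTC outcome on $\{d_x,d_y\}$, that $d_y$'s node is a clinch node, and that $d_x$'s node is SOSP by a two-line check: if $x P_{d_x} y$ the prescribed move clinches $x$ and the only deviation cannot do better than $y$, with $x P_{d_x} y$; if $y P_{d_x} x$ the prescribed move has worst outcome $x$ and the only deviation also clinches $x$. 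Since no agent is ever called twice in $G$, the edge-labelling conditions in the definition of a mechanism hold automatically.

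For the ``only if'' direction I argue the contrapositive. If $T^{\succ_A}$ is not acyclic then by Corollary~\ref{coro FPTTC OSP acyclic} it is not OSP-implementable, hence not SOSP-implementable by Remark~\ref{remark SOSP implies OSP}; so assume $T^{\succ_A}$ is acyclic, equivalently (Theorem~\ref{theorem FPTTC is dual iff acyclic}) satisfies dual ownership. If $T^{\succ_A}$ is not a weak serial dictatorship, choose $P_N$ and a step $s$ with $|A_s(P_N)|\geq 3$; dual ownership forces exactly two agents $p,q$ to own all of $A_s(P_N)$, and without loss of generality $p$ owns two objects $a_1,a_2$ and $q$ owns an object $b$. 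The technical core is then to establish that the reduced market $(N_s(P_N),A_s(P_N))$, together with its ownership, can be reached by profiles in which $p$'s and $q$'s preferences over $A_s(P_N)$ — and, when $|N_s(P_N)|\geq 3$, the preference of one auxiliary agent over $A_s(P_N)$ — are prescribed at will; this uses that $p$ and $q$, surviving to step $s$, can be routed through the earlier steps using only objects removed before step $s$, and uses dual ownership to control how $a_1$ and $a_2$ are re-inherited once $p$ leaves.

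Given this, let $G$ be any mechanism with $f^G=T^{\succ_A}$ and let $v$ be the first node along the relevant path at which the play can depend on $p$'s or $q$'s preference over $A_s(P_N)$; then $v$ is a node of $p$ or of $q$. If $v$ is a node of $p$: take $p$'s true preference to rank, within $A_s(P_N)$, $b$ first, then $a_1$, then $a_2$. Depending on how finely $v$ separates $p$'s preferences, either her greedy move has worst outcome $a_2$ (reached when $q$ also wants $b$, so $p$ is left to choose among $\{a_1,a_2\}$ and a continuation compatible with $v$ names $a_2$) while the deviation ``rank $a_1$ first'' clinches $a_1$, with $a_1 P_p a_2$; or her greedy move has worst outcome $a_1$ while the deviation ``rank $b$ first, then $a_2$'' delivers $b$ whenever $q$ does not want $b$, with $b P_p a_1$. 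Either way the worst outcome of the prescribed move is strictly dominated by the best outcome of a deviation, contradicting SOSP. If $v$ is a node of $q$: take $q$'s true preference with $a_1 P_q b P_q a_2$ on $A_s(P_N)$; the greedy move can only guarantee an element of $\{a_2,b\}$ (namely when $p$ wants $a_1$), whereas the deviation making $q$'s declared favourite in $A_s(P_N)$ equal to $a_2$, combined with a profile in which $p$ wants $a_2$ and $a_1$ is re-inherited so that $q$ can obtain it (directly if $N_s(P_N)=\{p,q\}$, via the auxiliary agent — who wants $b$ — otherwise), lets $q$ end with $a_1$, which $q$ strictly prefers to both $a_2$ and $b$. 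Again the SOSP inequality $f_i^G(P'_N) R_i f_i^G(\tilde P_N)$ fails.

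The main obstacle is the reachability bookkeeping: one must show that changing $p$'s, $q$'s (and the auxiliary agent's) preferences over $A_s(P_N)$ leaves the step-$s$ reduced market and its ownership intact, and that the resulting profiles all traverse the single node $v$ of $G$ — requiring a careful description of what $p$ and $q$ do at steps before $s$ and a use of dual ownership for the re-inheritance of $a_1,a_2$. A secondary nuisance is that $v$ may split the owners' preferences over $A_s(P_N)$ in several ways, from the coarse ``favourite in $A_s(P_N)$'' partition up to a full order, so each sub-argument above must be phrased to cover all of them. Once the configuration is in place, the SOSP-violation computations themselves are short finite case checks.
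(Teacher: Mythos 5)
Your ``if'' direction is fine: the paper dispatches it by reducing to the observation that any FPTTC rule with two objects is SOSP-implementable on $\mathbb{L}^n(A)$ and calling the rest straightforward, and your clinching/two-object-trading construction is a correct way of filling that in. Your ``only if'' direction also begins exactly as the paper's does: SOSP implies OSP (Remark \ref{remark SOSP implies OSP}), hence dual ownership by Theorem \ref{theorem prior results}, and failure of weak serial dictatorship then produces a profile and a step $s^*$ with exactly two owners, one of whom (your $p$) owns two of the at least three remaining objects.

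The gap is in how you get from that configuration to a contradiction. Your violation arguments at $v$ each presuppose that some specific pair of the owner's preferences diverges at $v$ (e.g.\ that ``rank $a_1$ first'' diverges from ``$b\,a_1\,a_2$'', or that ``favourite $=a_2$'' diverges from ``favourite $=a_1$''), but you never establish that any particular pair does --- you only note that \emph{some} splitting occurs and defer the case analysis over all possible partitions, which is precisely the crux. Worse, your $q$-node case hinges on $q$ (or an auxiliary agent) re-inheriting $a_1$ after $p$ departs with $a_2$; dual ownership does not give you control over who inherits $a_1$ at step $s^*+1$, and you flag this reachability bookkeeping as the ``technical core'' without supplying it. The paper sidesteps both problems at once: it freezes every agent outside the two owners at a single preference (prefixed, when $s^*>1$, by a common block $\hat P$ over the objects already assigned before step $s^*$, so that the first $s^*-1$ steps are unchanged), restricts attention to four explicit profiles, observes that the first branching node of any implementing mechanism on this subdomain must belong to one of the two owners and is passed by all four profiles, and then runs a chain of pairwise ``do not diverge'' deductions --- each a one-line application of the SOSP inequality to two of the four tabulated outcomes --- concluding that \emph{all} of that owner's admissible preferences lie on a single outgoing edge, contradicting that the node branches. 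In particular the single-object owner obtains one of the other owner's objects by a direct trade at step $s^*$ itself, so no re-inheritance and no auxiliary agent are ever needed. Your route could probably be completed, but the two items you defer are exactly where the work lies, and the second is avoidable.
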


	The proof of this theorem is relegated to Appendix \ref{appendix proof of theo SOSP FPTTC restricted}.
	
	Similarly as dual ownership, to verify whether a given FPTTC rule is a weak serial dictatorship or not, one needs to check its behavior at every step at every preference profile in the restricted domain $\mathbb{L}^n(A)$. In view of this observation, we present our next result regarding the configuration of the priority structures associated with these rules. We use the following terminology to facilitate the result. For $\succ \in \mathbb{L}(N)$ and $i \in N$, we define $rank(i, \succ) = m$ if $|\{j \in N \mid j \succ i\}| = m - 1$.

	\begin{theorem}\label{theorem equivalent condition to bipolar seq}
		An FPTTC rule $T^{\succ_A}$ is a weak serial dictatorship if and only if $\succ_A$ has the following property: for all $a, b \in A$ and all $i \in N$,
		\begin{equation*}
			rank(i, \succ_a) \leq |A| - 2 \implies rank(i, \succ_a) = rank(i, \succ_b).
		\end{equation*}
	\end{theorem}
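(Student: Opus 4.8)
The plan is to prove both directions after first recasting the rank condition into a structural one. Write $m = |A|$. If $m \le 2$ both sides are vacuous -- the hypothesis $rank(i,\succ_a) \le m-2$ never holds, and $|A_s(P_N)| > 2$ never holds -- so assume $m \ge 3$. The reformulation I would establish is: $\succ_A$ has the stated rank property if and only if there exist distinct agents $j_1, \ldots, j_q$ with $q = \min\{m-2,\,n\}$ such that for every $a \in A$ the agents occupying ranks $1, \ldots, q$ in $\succ_a$ are $j_1, \ldots, j_q$ in this order; that is, all priorities share a common top ``prefix'' of length $q$ (when $q = n$ this just says all priorities are equal). This is immediate from the definition of $rank$: for each $t \le m-2$, the rank-$t$ agent of $\succ_a$ being independent of $a$ is precisely the non-failure of the rank property at rank $t$.

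For the ``if'' direction, suppose such a prefix $j_1, \ldots, j_q$ exists and fix $P_N \in \mathbb{L}^n(A)$ and a step $s$ with $|A_s(P_N)| > 2$. Since every object is acceptable on $\mathbb{L}^n(A)$, each step of the FPTTC rule removes equally many agents and objects (each top trading cycle assigns its agents distinct real objects), so $|N_s(P_N)| - |A_s(P_N)| = n - m$ at every step; hence the number of agents removed before step $s$ equals $m - |A_s(P_N)| \le m-3$. If $q = n$, all priorities coincide and the conclusion is immediate; otherwise $q = m-2$, so strictly fewer than $q$ of the prefix agents have been removed, and I let $j^{\ast} = j_k$ with $k$ least such that $j_k \in N_s(P_N)$. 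For every $a \in A_s(P_N)$, the agents ranked above $j^{\ast}$ in $\succ_a$ all lie among $j_1, \ldots, j_{k-1}$ and are already gone, so $j^{\ast}$ is the highest-priority remaining agent for $a$; as $a$ was arbitrary, $\mathcal{T}(\succ_{A_s(P_N)}^{N_s(P_N)}) = \{j^{\ast}\}$. Thus $T^{\succ_A}$ is a weak serial dictatorship.

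For the ``only if'' direction I would argue the contrapositive. Assuming the rank property fails, let $t^{\ast} \le m-2$ be the smallest rank at which the priorities do not all agree on which agent sits there; then there are agents $j_1, \ldots, j_{t^{\ast}-1}$ occupying the top $t^{\ast}-1$ positions of every $\succ_a$, together with objects $a,b$ whose rank-$t^{\ast}$ agents $i$ and $i'$ are distinct and lie outside $\{j_1, \ldots, j_{t^{\ast}-1}\}$; note $t^{\ast}-1 \le m-3$. Pick a third object $c \notin \{a,b\}$ and distinct objects $o_1, \ldots, o_{t^{\ast}-1}$ from $A \setminus \{a,b,c\}$ (possible since $|A \setminus \{a,b,c\}| = m-3 \ge t^{\ast}-1$), and take $P_N \in \mathbb{L}^n(A)$ with $\tau(P_{j_r}) = o_r$ for each $r$ and arbitrary preferences for the remaining agents. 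By induction on $r$, at step $r$ (for $r = 1, \ldots, t^{\ast}-1$) exactly $j_1, \ldots, j_{r-1}$ have been removed, so $j_r$ -- the common rank-$r$ agent of every priority -- owns every remaining object; hence the only cycle at step $r$ is $j_r$'s self-loop, and $j_r$ is the only agent assigned, leaving with $o_r$. Consequently $N_{t^{\ast}}(P_N) = N \setminus \{j_1, \ldots, j_{t^{\ast}-1}\}$ still contains $i$ and $i'$, and $A_{t^{\ast}}(P_N) \supseteq \{a,b,c\}$; at this step $i$ owns $a$ and $i'$ owns $b$, so $|\mathcal{T}(\succ_{A_{t^{\ast}}(P_N)}^{N_{t^{\ast}}(P_N)})| \ge 2$ while $|A_{t^{\ast}}(P_N)| > 2$, contradicting the weak serial dictatorship property. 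When $t^{\ast} = 1$, the same conclusion already holds at step $1$ with no deletions needed.

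The hard part will be the ``only if'' construction: one has to pick the deleted objects $o_1, \ldots, o_{t^{\ast}-1}$ disjoint from the objects $a,b,c$ that must survive -- which is exactly where the bound $t^{\ast}-1 \le m-3$ is used -- and then verify that the FPTTC run on the chosen profile really deletes only $j_1, \ldots, j_{t^{\ast}-1}$ in its first $t^{\ast}-1$ steps. The latter rests on the observation, also used in the ``if'' direction, that a common top-prefix of length $r$ forces $j_r$ to own all objects still present at step $r$, so that step is a one-agent step.
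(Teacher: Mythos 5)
Your proof is correct and follows essentially the same route as the paper: your common-prefix reformulation is exactly the paper's ``without loss of generality'' reduction to a minimal violating rank, and your ``only if'' construction---peeling off the prefix agents $j_1,\dots,j_{t^*-1}$ using objects disjoint from $\{a,b,c\}$ so that at step $t^*$ two distinct agents own $a$ and $b$ while at least three objects remain---matches the paper's profile built from a single common preference $\hat{P}$ that places a set $A'$ of $m^*-1$ objects (disjoint from $\{a^*,b^*\}$) at the top. The only differences are cosmetic (distinct favorite objects $o_r$ for the prefix agents versus one shared preference, and you write out the ``if'' direction that the paper declares straightforward).
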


	The proof of this theorem is relegated to Appendix \ref{appendix proof of theo equivalent condition to bipolar seq}.
	
	As a corollary of Theorem \ref{theorem SOSP FPTTC restricted} and Theorem \ref{theorem equivalent condition to bipolar seq}, we obtain the configuration of the priority structures associated with the SOSP-implementable FPTTC rules on the restricted domain $\mathbb{L}^n(A)$.

	\begin{corollary}\label{corollary SOSP prio restricted}
		On the domain $\mathbb{L}^n(A)$, an FPTTC rule $T^{\succ_A}$ is SOSP-implementable if and only if $\succ_A$ has the following property: for all $a, b \in A$ and all $i \in N$,
		\begin{equation*}
			rank(i, \succ_a) \leq |A| - 2 \implies rank(i, \succ_a) = rank(i, \succ_b).
		\end{equation*}
	\end{corollary}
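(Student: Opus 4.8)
The plan is to simply chain the two equivalences that have already been established, since this corollary is a formal consequence of Theorem \ref{theorem SOSP FPTTC restricted} and Theorem \ref{theorem equivalent condition to bipolar seq}. Theorem \ref{theorem SOSP FPTTC restricted} asserts that on the domain $\mathbb{L}^n(A)$ an FPTTC rule $T^{\succ_A}$ is SOSP-implementable if and only if it is a weak serial dictatorship. Theorem \ref{theorem equivalent condition to bipolar seq} asserts that $T^{\succ_A}$ is a weak serial dictatorship if and only if $\succ_A$ satisfies the stated rank condition, namely that $rank(i,\succ_a) \leq |A| - 2$ implies $rank(i,\succ_a) = rank(i,\succ_b)$ for all $a,b \in A$ and all $i \in N$.

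First I would apply Theorem \ref{theorem SOSP FPTTC restricted} to rewrite the statement ``$T^{\succ_A}$ is SOSP-implementable on $\mathbb{L}^n(A)$'' as ``$T^{\succ_A}$ is a weak serial dictatorship.'' Then I would apply Theorem \ref{theorem equivalent condition to bipolar seq} to rewrite ``$T^{\succ_A}$ is a weak serial dictatorship'' as the rank condition on $\succ_A$. Composing the two biconditionals yields precisely the claimed equivalence, and no additional argument is required.

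Consequently there is no genuine obstacle in proving the corollary itself; the substance lies entirely in the (appendix) proofs of the two theorems it invokes. If one wished instead to prove the corollary directly, the main difficulty would be re-establishing the necessity direction of Theorem \ref{theorem SOSP FPTTC restricted}: showing that whenever an SOSP mechanism implements $T^{\succ_A}$, the definition of strong obvious dominance forces every step with more than two remaining objects to have a single owner of all of them. That argument requires relating the branching structure of the extensive-form game to the top-trading-cycle dynamics of $T^{\succ_A}$ across preference profiles in $\mathbb{L}^n(A)$, and then translating the single-owner conclusion into the rank condition via Theorem \ref{theorem equivalent condition to bipolar seq}. Given that both pieces are already available, I would present the proof as the two-line combination above.
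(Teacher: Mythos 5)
Your proposal is correct and matches the paper exactly: the paper derives this corollary by composing Theorem \ref{theorem SOSP FPTTC restricted} (SOSP-implementable $\iff$ weak serial dictatorship) with Theorem \ref{theorem equivalent condition to bipolar seq} (weak serial dictatorship $\iff$ the rank condition), with no additional argument. Nothing further is needed.
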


	We obtain the following corollary from Corollary \ref{corollary SOSP prio restricted}. It says when there are more objects than agents, SOSP-implementable FPTTC rules on the restricted domain $\mathbb{L}^n(A)$ are characterized as serial dictatorships.

	\begin{corollary}\label{corollary more objects}
		Suppose $|A| > |N|$. On the domain $\mathbb{L}^n(A)$, an FPTTC rule $T^{\succ_A}$ is SOSP-implementable if and only if it is a serial dictatorship.
	\end{corollary}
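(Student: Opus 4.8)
The plan is to derive this corollary directly from Corollary \ref{corollary SOSP prio restricted}, which already translates SOSP-implementability on $\mathbb{L}^n(A)$ into a condition phrased purely in terms of ranks in the priority structure. So the whole argument reduces to showing that, under the hypothesis $|A| > |N|$, the rank condition there is equivalent to ``$\succ_a = \succ_b$ for all $a, b \in A$'', which in turn is exactly the statement that $T^{\succ_A}$ is a serial dictatorship (as noted in the discussion preceding Remark \ref{remark sd are SOSP FPTTC}).

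For the ``only if'' direction, I would suppose $T^{\succ_A}$ is SOSP-implementable and fix two objects $a, b \in A$, aiming to conclude $\succ_a = \succ_b$. Since $|A| \geq |N| + 1$, we have $|A| - 2 \geq |N| - 1$, so every agent $i$ with $rank(i, \succ_a) \leq |N| - 1$ satisfies $rank(i, \succ_a) \leq |A| - 2$ and hence, by Corollary \ref{corollary SOSP prio restricted}, $rank(i, \succ_a) = rank(i, \succ_b)$. This pins down the ranks in $\succ_b$ of all agents except the unique agent $j$ with $rank(j, \succ_a) = |N|$. Because $rank(\cdot, \succ_b)$ is a bijection from $N$ onto $\{1, \dots, |N|\}$ and the values $1, \dots, |N| - 1$ are already assigned to agents other than $j$, the only remaining value $|N|$ must equal $rank(j, \succ_b)$; thus $rank(i, \succ_a) = rank(i, \succ_b)$ for every $i \in N$, i.e.\ $\succ_a = \succ_b$. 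As $a, b$ were arbitrary, all the priority orderings coincide, so $T^{\succ_A}$ is the serial dictatorship associated with that common ordering.

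For the ``if'' direction, if $T^{\succ_A}$ is a serial dictatorship then its associated priority structure has $\succ_a = \succ_b$ for all $a, b \in A$, so $rank(i, \succ_a) = rank(i, \succ_b)$ holds for every $i$ and the implication in Corollary \ref{corollary SOSP prio restricted} is satisfied trivially; hence $T^{\succ_A}$ is SOSP-implementable. (Alternatively this direction is immediate from Remark \ref{remark sd are SOSP FPTTC}.)

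I do not anticipate a genuine obstacle; the only point requiring care is the boundary case $|A| = |N| + 1$, where the bottom-ranked agent of $\succ_a$ has rank $|N| = |A| - 1 > |A| - 2$ and so is not directly covered by the hypothesis of Corollary \ref{corollary SOSP prio restricted}. The pigeonhole/bijection step above is precisely what closes that gap, and when $|A| \geq |N| + 2$ it is not even needed, since then $|A| - 2 \geq |N|$ and the rank condition applies to every agent outright.
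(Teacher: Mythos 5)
Your proof is correct and follows essentially the same route as the paper, which derives this corollary directly from Corollary \ref{corollary SOSP prio restricted} (the rank characterization) without spelling out the details. Your pigeonhole argument handling the bottom-ranked agent in the boundary case $|A| = |N| + 1$ is exactly the step needed to make that derivation explicit, and the ``if'' direction via Remark \ref{remark sd are SOSP FPTTC} is fine.
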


	\subsection{Results on the unrestricted domain $\mathbb{L}^n(A \cup \{a_0\})$}

	Recall that the serial dictatorships are SOSP-implementable FPTTC rules (see Remark \ref{remark sd are SOSP FPTTC}). In this subsection, we show that on the unrestricted domain $\mathbb{L}^n(A \cup \{a_0\})$, they are the \textit{only} SOSP-implementable FPTTC rules.

	\begin{theorem}\label{theorem SOSP FPTTC unrestricted}
		On the domain $\mathbb{L}^n(A \cup \{a_0\})$, an FPTTC rule $T^{\succ_A}$ is SOSP-implementable if and only if it is a serial dictatorship.
	\end{theorem}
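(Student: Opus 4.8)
The ``if'' direction is immediate from Remark~\ref{remark sd are SOSP FPTTC}, so the work is the ``only if'' direction: I would show that if $T^{\succ_A}$ is SOSP-implementable on $\mathbb{L}^n(A \cup \{a_0\})$ then the priority orders $\succ_a$, $a \in A$, all coincide, which is exactly the defining property of a serial dictatorship. The plan is to induct on $|N|$, the case $|N| = 1$ being trivial.

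For the inductive step fix an SOSP mechanism $G$ implementing $T^{\succ_A}$ and let $p$ be the agent called at the root. By \citet{pycia2019theory}'s characterization of SOSP mechanisms as sequential price mechanisms with greedy strategies, at the first stage $p$ is offered a fixed set $M \subseteq A$ of objects, picks her favorite element of $M \cup \{a_0\}$, and is assigned it; hence $T^{\succ_A}_p(P_N) = f^G_p(P_N) = \tau(P_p, M \cup \{a_0\})$ for all $P_N$, so $p$'s assignment depends on $P_p$ alone. The crucial claim is that this forces $M = A$ and $\tau(\succ_a) = p$ for every $a \in A$, i.e.\ $p$ is the sole Step-$1$ owner of every object. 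Writing $O_p = \{a \in A : \tau(\succ_a) = p\}$: if some $c \in M$ had owner $q = \tau(\succ_c) \neq p$, a profile in which $P_p$ and $P_q$ both top-rank $c$ would assign $c$ to $q$, contradicting $T^{\succ_A}_p(P_N) = c$; so $M \subseteq O_p$. If $p$ owned some $c \notin M$, a profile in which $P_p$ top-ranks $c$ would make $p$ self-loop onto $c$ and receive $c$, contradicting $T^{\succ_A}_p(P_N) = \tau(P_p, M \cup \{a_0\}) \neq c$; so $O_p \subseteq M$, giving $M = O_p$. Were $O_p = \emptyset$, $p$ would get $a_0$ at every profile, but no agent is a global dummy when $A \neq \emptyset$ (at the profile where everyone but $p$ top-ranks $a_0$, all other agents get $a_0$ at Step~$1$ and leave, after which $p$ owns and takes her top object); so $O_p \neq \emptyset$. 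Finally, if $O_p \subsetneq A$, pick $c \in A \setminus O_p$, let $q = \tau(\succ_c)$, and fix $d \in O_p$; at a profile where $P_p$ top-ranks $c$ and $P_q$ top-ranks $d$, agents $p$ and $q$ form a trading cycle and $p$ gets $c$, contradicting $T^{\succ_A}_p(P_N) = \tau(P_p, O_p \cup \{a_0\}) \neq c$. Hence $O_p = M = A$.

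Granted the claim, $p$ is offered all of $A$, chooses her overall top choice $\tau(P_p)$, and leaves; as $p$ owns every object at Step~$1$, after $p$ departs the FPTTC procedure is the FPTTC rule on $\big(N \setminus \{p\},\, A \setminus \{\tau(P_p)\}\big)$ (or on $\big(N \setminus \{p\},\, A\big)$ if $\tau(P_p) = a_0$) with the restricted priority structure. The continuation of $G$ after $p$'s move is again a sequential price mechanism with greedy strategies, hence SOSP, and implements this reduced FPTTC rule, so by the inductive hypothesis it is a serial dictatorship. Taking the branch $\tau(P_p) = a_0$ shows that the restrictions $\succ_a^{N \setminus \{p\}}$, $a \in A$, all equal one order $\sigma$ on $N \setminus \{p\}$; together with $\tau(\succ_a) = p$ for all $a$, this yields $\succ_a = (p, \sigma)$ for every $a$, so $T^{\succ_A}$ is the serial dictatorship with ordering $(p, \sigma)$.

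The main obstacle I anticipate is pinning down the first-stage structure rigorously --- either invoking \citet{pycia2019theory}'s characterization at the right generality and checking that a continuation game inherits the sequential-price/greedy structure, or deriving directly from the SOSP inequalities at the root that $p$'s assignment depends only on $P_p$ and is ``posted-price''-like, which is delicate because the SOSP condition only bounds the worst outcome along $p$'s chosen edge by the best outcome off it. A second point to check is the precise form of the reduced FPTTC rule after the top agent leaves, so that the inductive hypothesis applies to a genuine FPTTC rule on a smaller market. (An alternative, closer to the rest of the paper, would first use Theorem~\ref{theorem SOSP FPTTC restricted} to get that $T^{\succ_A}$ is a weak serial dictatorship on the restricted domain and then use profiles where high-priority agents decline every object to upgrade this to a serial dictatorship; the two-object case would still be the crux.)
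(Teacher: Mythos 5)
Your route is genuinely different from the paper's, but it has a gap exactly where you suspect it. The assertion that the root agent $p$ faces a fixed menu $M$ and receives $\tau(P_p, M \cup \{a_0\})$ --- so that her assignment depends on $P_p$ alone --- does not follow from SOSP by itself. Sequential price mechanisms admit, besides clinching actions, a non-clinching (``pass''-type) action at a node, and along that action the mover's final assignment may depend on the other agents' reports; SOSP only forces the (at most two) outcomes reachable after such an action to be the mover's best reachable outcomes. A concrete warning sign: on $\mathbb{L}^n(A)$ with $|A| = 2$, \emph{every} FPTTC rule is SOSP-implementable (Observation \ref{observation 2 objects}), including rules in which the first mover's assignment depends on the other agent; if your root-menu claim followed from SOSP alone, it would contradict Theorem \ref{theorem SOSP FPTTC restricted}. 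The claim can be rescued on $\mathbb{L}^n(A \cup \{a_0\})$, but only by first showing that every element of $A \cup \{a_0\}$ is attainable for $p$ under $T^{\succ_A}$ (via profiles in which all other agents top-rank $a_0$ --- this is where the unrestricted domain enters) and then arguing that with at least three attainable outcomes every edge at $p$'s first branching node must clinch $\tau(P_p)$. You invoke the attainability fact only later (in the $O_p = \emptyset$ step) and never feed it back into the root structure, so as written the foundation of the induction is not established. Smaller points that would also need care: $p$ should be defined as the first agent with a genuinely branching node, and the continuation game must be converted into an SOSP mechanism on the full domain $\mathbb{L}^{n-1}\big((A \setminus \{\tau(P_p)\}) \cup \{a_0\}\big)$ before the inductive hypothesis applies.

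For contrast, the paper's proof is local, elementary, and makes no use of \citet{pycia2019theory}'s characterization: if $T^{\succ_A}$ is not a serial dictatorship there are $i, j \in N$ and $a, b \in A$ with $i \succ_a j$ and $j \succ_b i$; fixing every other agent at a preference that top-ranks $a_0$, it exhibits three profiles whose outcomes for $(i,j)$ are $(a, a_0)$, $(a_0, b)$, and $(b, a)$, and shows --- first by the OSP inequality, then by the SOSP inequality applied to the two preferences already forced onto a common edge --- that neither $i$'s nor $j$'s three preferences can ever diverge, so no mechanism on this three-profile subdomain can branch at all, a contradiction. If you want to keep your inductive architecture, the cleanest fix is to prove the root-menu claim directly from the SOSP inequalities together with the attainability observation rather than citing the sequential-price characterization as a black box.
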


	The proof of this theorem is relegated to Appendix \ref{appendix proof of theo SOSP FPTTC unrestricted}.

	\section{Simple strategy-proofness}\label{section SSP}

	OSP-implementation presumes that the agents understand how their own future actions will affect outcomes (the worst and the best possible final outcomes are taken only over opponents' future actions), and consequently, they might be concerned about having time-inconsistent preferences or making a mistake while performing demanding backward induction over their own future actions. \citet{pycia2019theory} introduce SOSP-implementation as a way to resolve this issue by relaxing the assumption that the agents fully comprehend how their own future actions will affect outcomes. Another way to resolve this issue will be by calling each agent to play at most once so that they need not be worried about their own future actions. In view of this observation, we introduce the notion of \textit{simple strategy-proofness} in this section.	
	
	The concept of simple strategy-proofness is based on a \textit{simple} OSP mechanism. A mechanism is simple if every agent is called to play at most once along a path. An assignment rule is \textit{simply strategy-proof} if there exists a simple OSP mechanism that implements the assignment rule. In what follows, we present formal definitions of these. Recall the definition of a mechanism $G$ given in Section \ref{subsection simple dominance}.

	\begin{definition}
		A mechanism $G = \langle T, \eta^{LA}, \eta^{NA}, \eta^{EP} \rangle $ on $\mathcal{P}_N$ is \textbf{\textit{simple}} if $\eta^{NA}(v) \neq \eta^{NA}(v')$ for all distinct $v, v' \in V(T) \setminus L(T)$ that appear in the same path.
		
		An assignment rule $f : \mathcal{P}_N \rightarrow \mathcal{M}$ is \textit{\textbf{simply strategy-proof}} (on $\mathcal{P}_N$) if there exists a simple OSP mechanism $G$ on $\mathcal{P}_N$ such that $f = f^G$.
	\end{definition}

	By definition, simple strategy-proofness is stronger than OSP-implementability. Our next result shows that simple strategy-proofness is even stronger than SOSP-implementability.

	\begin{proposition}\label{proposition SSP implies SOSP}
		For an arbitrary domain of preference profiles $\mathcal{P}_N$, every simple OSP mechanism is SOSP, and consequently, every simply strategy-proof assignment rule is SOSP-implementable.
	\end{proposition}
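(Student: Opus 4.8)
The plan is to establish the first assertion — every simple OSP mechanism is SOSP — since the second assertion is then immediate: if $f$ is simply strategy-proof, a simple OSP mechanism $G$ with $f = f^G$ is SOSP, so $f$ is SOSP-implementable. To prove the first assertion I would fix a simple OSP mechanism $G = \langle T, \eta^{LA}, \eta^{NA}, \eta^{EP}\rangle$ on $\mathcal{P}_N$, an agent $i$, a node $v$ with $\eta^{NA}(v) = i$, and profiles $P_N, P'_N, \tilde{P}_N \in \mathcal{P}_N$ passing through $v$ such that $P_i$ and $P'_i$ do not diverge at $v$ and $P_i$ and $\tilde{P}_i$ diverge at $v$; the goal is to show $f_i^G(P'_N)\, R_i\, f_i^G(\tilde{P}_N)$.

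The heart of the argument is the claim that $f^G(P'_N) = f^G(P_i, P'_{-i})$, where $(P_i, P'_{-i}) \in \mathcal{P}_N$ because the domain is a product. I would prove this by tracing the root-to-leaf paths induced by the two profiles. Let $(v^1, \dots, v^t)$, with $v^1 = r(T)$ and $v^t = v$, be the unique root-to-$v$ path, which is the initial segment of the path determined by $P'_N$. Since $G$ is simple and $v^1, \dots, v^t$ all lie on a single path with $\eta^{NA}(v^t) = i$, no node $v^s$ with $s < t$ is assigned to agent $i$, and likewise no non-leaf node strictly below $v$ along the path of $P'_N$ is assigned to $i$. Hence the outgoing edge selected at every node of $P'_N$'s path other than $v$ depends only on the coordinates $P'_j$ with $j \neq i$; and at $v$ itself, the edge $e$ followed by $P'_N$ (the one with $P'_i \in \eta^{EP}(e)$) also contains $P_i$, because $P_i$ and $P'_i$ do not diverge at $v$. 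Therefore $(P_i, P'_{-i})$ traces exactly the same root-to-leaf path as $P'_N$, so the two profiles are mapped to the same allocation; in particular $(P_i, P'_{-i})$ passes through $v$.

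The remaining step is an invocation of OSP. The profiles $(P_i, P'_{-i})$ and $\tilde{P}_N$ both pass through $v$, and their $i$-th coordinates $P_i$ and $\tilde{P}_i$ diverge at $v$ by hypothesis, so OSP of $G$ gives $f_i^G(P_i, P'_{-i})\, R_i\, f_i^G(\tilde{P}_N)$. Combining this with the key step yields $f_i^G(P'_N) = f_i^G(P_i, P'_{-i})\, R_i\, f_i^G(\tilde{P}_N)$, which is precisely the SOSP inequality. Since $i$, $v$, and the three profiles were arbitrary, $G$ is SOSP.

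I do not expect a real obstacle; the substantive content is just the observation that in a simple mechanism the realized outcome depends on an agent's own report only through the single edge she picks at her unique decision node, so rerouting $P'_i$ to the report $P_i$ pooled with it at $v$ changes nothing. The one place that calls for care is the path-tracing in the key step: one must use simplicity carefully to guarantee that agent $i$ is called neither before nor after $v$ along the path of $P'_N$, so that the swap leaves the entire realized path — and hence the outcome — intact. The rest is bookkeeping inside the definitions of "diverge at $v$", "passes through $v$", and $f^G$.
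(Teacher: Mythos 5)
Your proposal is correct and follows essentially the same route as the paper's proof: both reduce the SOSP inequality to the identity $f^G(P'_N) = f^G(P_i, P'_{-i})$, justified by simplicity, and then apply OSP to the pair $(P_i, P'_{-i})$ and $\tilde{P}_N$ at $v$. The only difference is that you spell out the path-tracing argument behind that identity, which the paper asserts directly.
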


	The proof of this proposition is relegated to Appendix \ref{appendix proof of proposition SSP implies SOSP}.
	
	It is worth mentioning that the converse of Proposition \ref{proposition SSP implies SOSP} is not true in general. Example \ref{example sosp but not ssp} presents a domain of preference profiles and an FPTTC rule, which is SOSP-implementable but not simply strategy-proof on the given domain.

	\begin{example}\label{example sosp but not ssp}
		Consider an allocation problem with two agents $N = \{1, 2\}$ and four objects $A = \{a_1, a_2, a_3, a_4\}$. Let $\tilde{\mathcal{P}} = \{a_1 a_4 a_3 a_2 a_0, a_2 a_3 a_4 a_1 a_0, a_3 a_2 a_4 a_1 a_0, a_4 a_1 a_3 a_2 a_0\}$ be a set of preferences. Let $\succ_A$ be as follows:
		\begin{table}[H]
			\centering
			\begin{tabular}{@{}cccc@{}}
				\hline
				$\succ_{a_1}$ & $\succ_{a_2}$ & $\succ_{a_3}$ & $\succ_{a_4}$ \\ \hline
				\hline
				$1$ & $2$ & $2$ & $1$ \\
				$2$ & $1$ & $1$ & $2$ \\
				\hline
			\end{tabular}
			\caption{Priority structure for Example \ref{example sosp but not ssp}}
			\label{priority structure sosp not ssp}
		\end{table}
		
		Consider the FPTTC rule $T^{\succ_A}$ on the domain $\tilde{\mathcal{P}}^2$. It is straightforward to verify that the mechanism in Figure \ref{tree sosp not ssp} is SOSP and implements $T^{\succ_A}$ on $\tilde{\mathcal{P}}^2$.

		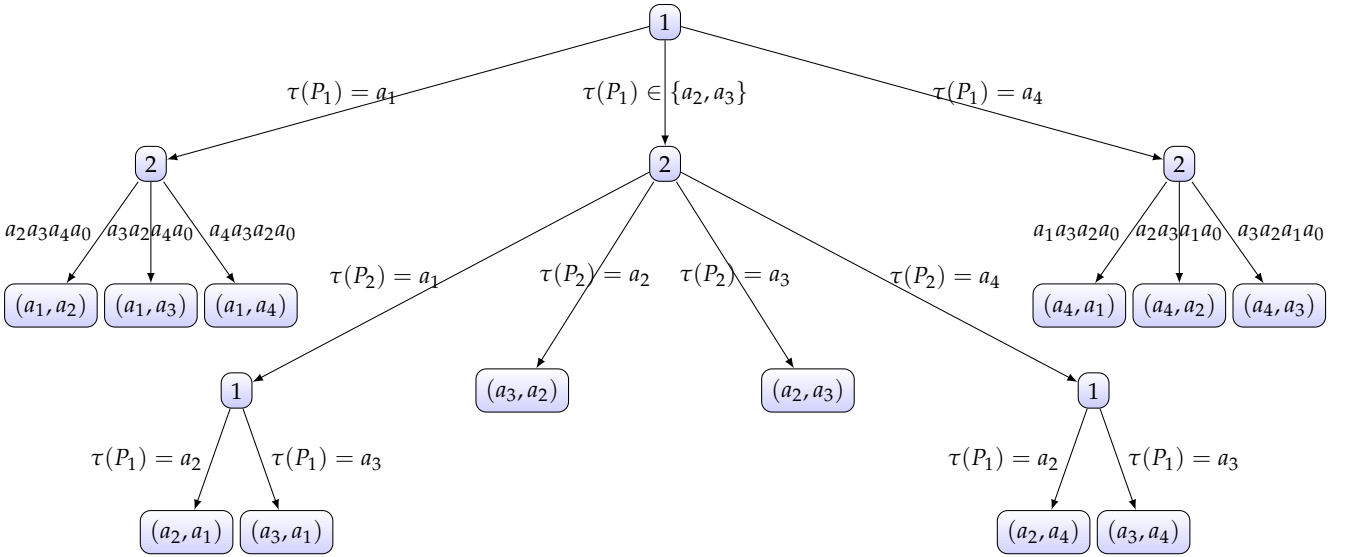
\begin{figure}[H]
			\centering
			\resizebox{\textwidth}{!}{
				\begin{tikzpicture}
					[
					grow                    = down,
					sibling distance        = 18em,
					level distance          = 5em,
					edge from parent/.style = {draw, -latex},
					every node/.style       = {font=\footnotesize}
					]
					\node [root] {1}
					child { node [root] {2}
						[
						grow                    = down,
						sibling distance        = 3.5em,
						level distance          = 5em,
						edge from parent/.style = {draw, -latex},
						every node/.style       = {font=\footnotesize}
						]
						child { node [env] {$(a_1, a_2)$}
							edge from parent node [left] {$a_2 a_3 a_4 a_0$} }
						child { node [env] {$(a_1, a_3)$}
							edge from parent node {$a_3 a_2 a_4 a_0$} }
						child { node [env] {$(a_1, a_4)$}
							edge from parent node [right] {$a_4 a_3 a_2 a_0$} }
						edge from parent node [left] {$\tau(P_1) = a_1$} }	
					child { node [root] {2}
						[
						grow                    = down,
						sibling distance        = 10em,
						level distance          = 8em,
						edge from parent/.style = {draw, -latex},
						every node/.style       = {font=\footnotesize}
						]
						child { node [root] {1}
							[
							grow                    = down,
							sibling distance        = 3.5em,
							level distance          = 5em,
							edge from parent/.style = {draw, -latex},
							every node/.style       = {font=\footnotesize}
							]
							child { node [env] {$(a_2, a_1)$}
								edge from parent node [left] {$\tau(P_1) = a_2$} }
							child { node [env] {$(a_3, a_1)$}
								edge from parent node [right] {$\tau(P_1) = a_3$} }
							edge from parent node [left] {$\tau(P_2) = a_1$} }
						child { node [env] {$(a_3, a_2)$}
							edge from parent node {$\tau(P_2) = a_2$} }
						child { node [env] {$(a_2, a_3)$}
							edge from parent node {$\tau(P_2) = a_3$} }
						child { node [root] {1}
							[
							grow                    = down,
							sibling distance        = 3.5em,
							level distance          = 5em,
							edge from parent/.style = {draw, -latex},
							every node/.style       = {font=\footnotesize}
							]
							child { node [env] {$(a_2, a_4)$}
								edge from parent node [left] {$\tau(P_1) = a_2$} }
							child { node [env] {$(a_3, a_4)$}
								edge from parent node [right] {$\tau(P_1) = a_3$} }
							edge from parent node [right] {$\tau(P_2) = a_4$} }
						edge from parent node {$\tau(P_1) \in \{a_2, a_3\}$} }
					child { node [root] {2}
						[
						grow                    = down,
						sibling distance        = 3.5em,
						level distance          = 5em,
						edge from parent/.style = {draw, -latex},
						every node/.style       = {font=\footnotesize}
						]
						child { node [env] {$(a_4, a_1)$}
							edge from parent node [left] {$a_1 a_3 a_2 a_0$} }
						child { node [env] {$(a_4, a_2)$}
							edge from parent node {$a_2 a_3 a_1 a_0$} }
						child { node [env] {$(a_4, a_3)$}
							edge from parent node [right] {$a_3 a_2 a_1 a_0$} }
						edge from parent node [right] {$\tau(P_1) = a_4$} };
			\end{tikzpicture}}
			\caption{Tree Representation for Example \ref{example sosp but not ssp}}
			\label{tree sosp not ssp}
		\end{figure}

		We argue that $T^{\succ_A}$ is not simply strategy-proof on $\tilde{\mathcal{P}}^2$. Consider the preference profiles (together with the outcomes of $T^{\succ_A}$) presented in Table \ref{preference choice not SSP}.
		\begin{table}[H]
			\centering
			\begin{tabular}{@{}c|cc|cc@{}}
				\hline
				$\mbox{Preference profiles}$ & $\mbox{Agent } 1$ & $\mbox{Agent } 2$ & $T^{\succ_A}_1$ & $T^{\succ_A}_2$ \\ \hline
				\hline
				$\tilde{P}_N^1$ & $a_2 a_3 a_4 a_1 a_0$ & $a_2 a_3 a_4 a_1 a_0$ & $a_3$ & $a_2$ \\ \hline
				$\tilde{P}_N^2$ & $a_3 a_2 a_4 a_1 a_0$ & $a_3 a_2 a_4 a_1 a_0$ & $a_2$ & $a_3$ \\ \hline
				$\tilde{P}_N^3$ & $a_3 a_2 a_4 a_1 a_0$ & $a_4 a_1 a_3 a_2 a_0$ & $a_3$ & $a_4$ \\ \hline
				$\tilde{P}_N^4$ & $a_2 a_3 a_4 a_1 a_0$ & $a_4 a_1 a_3 a_2 a_0$ & $a_2$ & $a_4$ \\ \hline
				$\tilde{P}_N^5$ & $a_1 a_4 a_3 a_2 a_0$ & $a_1 a_4 a_3 a_2 a_0$ & $a_1$ & $a_4$ \\ \hline
				$\tilde{P}_N^6$ & $a_2 a_3 a_4 a_1 a_0$ & $a_1 a_4 a_3 a_2 a_0$ & $a_2$ & $a_1$ \\ \hline
				$\tilde{P}_N^7$ & $a_4 a_1 a_3 a_2 a_0$ & $a_4 a_1 a_3 a_2 a_0$ & $a_4$ & $a_1$ \\ \hline
			\end{tabular}
			\caption{Preference profiles for Example \ref{example sosp but not ssp}}
			\label{preference choice not SSP}
		\end{table}
		
		Assume for contradiction that $T^{\succ_A}$ is simply strategy-proof on $\tilde{\mathcal{P}}^2$. So, there exists a simple OSP mechanism $\tilde{G}$ that implements $T^{\succ_A}$ on $\tilde{\mathcal{P}}_N$. Note that since $T^{\succ_A}(\tilde{P}_N^1) \neq T^{\succ_A}(\tilde{P}_N^2)$, there exists a node in the simple OSP mechanism $\tilde{G}$ that has at least two edges. Consider the first node (from the root) $v$ that has at least two edges. We distinguish the following two cases.
		\begin{enumerate}[(i)]
			\item\label{item not SSP 1} Suppose $\eta^{NA}(v) = 1$. 
			\\
			By obvious strategy-proofness of $\tilde{G}$, the facts $a_2 \tilde{P}_1^1 a_3$, $T^{\succ_A}_1(\tilde{P}_N^1) = a_3$, and $T^{\succ_A}_1(\tilde{P}_N^2) = a_2$ together imply that $\tilde{P}_1^1$ and $\tilde{P}_1^2$ do not diverge at $v$. This, together with the facts that $\tilde{P}_1^1 = \tilde{P}_1^4$, $\tilde{P}_1^2 = \tilde{P}_1^3$, $\tilde{P}_2^3 = \tilde{P}_2^4$, and $T^{\succ_A}(\tilde{P}_N^3) \neq T^{\succ_A}(\tilde{P}_N^4)$, implies that there exists a node $v'$ at which $\tilde{P}_1^1$ and $\tilde{P}_1^2$ diverge. Clearly, $v$ and $v'$ are distinct nodes appearing in the same path such that $\eta^{NA}(v) = \eta^{NA}(v') = 1$. This contradicts the fact that $\tilde{G}$ is a simple mechanism.
			
			\item\label{item not SSP 2} Suppose $\eta^{NA}(v) = 2$. 
			\\
			By obvious strategy-proofness of $\tilde{G}$, the facts $a_1 \tilde{P}_2^5 a_4$, $T^{\succ_A}_2(\tilde{P}_N^5) = a_4$, and $T^{\succ_A}_2(\tilde{P}_N^7) = a_1$ together imply that $\tilde{P}_2^5$ and $\tilde{P}_2^7$ do not diverge at $v$. This, together with the facts that $\tilde{P}_2^5 = \tilde{P}_2^6$, $\tilde{P}_2^7 = \tilde{P}_2^4$, $\tilde{P}_1^4 = \tilde{P}_1^6$, and $T^{\succ_A}(\tilde{P}_N^4) \neq T^{\succ_A}(\tilde{P}_N^6)$, implies that there exists a node $v'$ at which $\tilde{P}_2^5$ and $\tilde{P}_2^7$ diverge. Clearly, $v$ and $v'$ are distinct nodes appearing in the same path such that $\eta^{NA}(v) = \eta^{NA}(v') = 2$. This contradicts the fact that $\tilde{G}$ is a simple mechanism.
		\end{enumerate}
		Since Cases \ref{item not SSP 1} and \ref{item not SSP 2} are exhaustive, it follows that $T^{\succ_A}$ is not simply strategy-proof on $\tilde{\mathcal{P}}^2$.
		\hfill
		$\Diamond$
	\end{example}

	Example \ref{example sosp but not ssp} shows that on an arbitrary domain of preference profiles, every SOSP-implementable FPTTC rule might not be simply strategy-proof. However, our next result shows that on the restricted domain $\mathbb{L}^n(A)$ and the unrestricted domain $\mathbb{L}^n(A \cup \{a_0\})$, every SOSP-implementable FPTTC rule is simply strategy-proof.

	\begin{theorem}\label{theorem SSP SOSP FPTTC}
		Suppose $\mathcal{P}_N \in \Big\{ \mathbb{L}^n(A), \mathbb{L}^n(A \cup \{a_0\}) \Big\}$. On the domain $\mathcal{P}_N$, an FPTTC rule $T^{\succ_A}$ is simply strategy-proof if and only if it is SOSP-implementable.
	\end{theorem}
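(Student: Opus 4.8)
The \emph{only if} direction is immediate: by Proposition \ref{proposition SSP implies SOSP}, on any domain every simply strategy-proof assignment rule — in particular every simply strategy-proof FPTTC rule — is SOSP-implementable. So the content of the theorem is the \emph{if} direction, which I would prove on each of the two domains by exhibiting an explicit simple OSP mechanism, reading off the structure of the rule from the characterizations already established. On $\mathbb{L}^n(A \cup \{a_0\})$, Theorem \ref{theorem SOSP FPTTC unrestricted} says the rule is a serial dictatorship with some ordering $(k_1, \ldots, k_n)$, and the obvious sequential mechanism works: $k_1$ reports $\tau(P_{k_1})$, then $k_2$ reports her top element among the objects not taken by $k_1$ and $a_0$, and so on, an agent reporting $a_0$ being assigned $a_0$ and the next agent then called on the unchanged set. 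Each agent is called exactly once, so the mechanism is simple; and it is OSP because when $k_m$ moves the set of remaining objects is already pinned down by the path, so her final assignment is a deterministic function of her report (her reported top of that set together with $a_0$), whence her worst truthful outcome equals her best outcome and every deviation is strictly worse.

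On $\mathbb{L}^n(A)$, Theorem \ref{theorem SOSP FPTTC restricted} says the rule is a weak serial dictatorship, and Theorem \ref{theorem equivalent condition to bipolar seq} gives its priority structure a convenient form: the top $\min\{n, |A| - 2\}$ positions of $\succ_a$ are occupied by the same agents in the same order for every $a \in A$; write $j_1, j_2, \ldots$ for these agents in priority order. I would build a two-phase mechanism. Phase one asks $j_1$ for her favorite object, then $j_2$ for her favorite among the objects not taken by $j_1$, and so on through $j_{|A|-2}$. Along every path this reproduces the FPTTC exactly: while more than two objects remain, the unique owner of all the remaining objects is the highest-priority remaining agent, which by the common-prefix structure is the next $j_\ell$, and that agent takes her favorite. (If the agents run out before $|A|-2$ steps, all priorities coincide, so the rule is a serial dictatorship and we are back in the previous case.) After phase one, exactly two objects, say $\{a, b\}$, remain, and the still-unassigned agents form a fixed set $R$ disjoint from $\{j_1, \ldots, j_{|A|-2}\}$.

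Phase two is a small gadget for the two-object FPTTC on the reduced market $(R, \{a, b\})$, whose object owners are $p = \tau(\succ_a, R)$ and $q = \tau(\succ_b, R)$. Call $p$ and let her choose between $a$ and $b$. If she picks $a$: give her $a$, give $q$ the object $b$, and give $a_0$ to every other agent. If she picks $b$ and $q \neq p$: call $q$, let her choose between $a$ and $b$, and then hand out $a$ and $b$ according to the resulting top-trading cycle and $a_0$ to everyone else. If $p = q$: the single call to $p$ already fixes the outcome. A check of the (at most) four combinations of how $p$ and $q$ rank $\{a, b\}$ confirms that the gadget computes the two-object FPTTC and that obvious strategy-proofness holds at $p$'s node and at $q$'s node — the outcome for the agent called there is either deterministic or, when it is not, its worst value still weakly dominates the best value that agent could obtain by deviating. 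No agent moves in both phases and no agent moves twice within a phase, so the mechanism is simple; hence the weak serial dictatorship is simply strategy-proof on $\mathbb{L}^n(A)$, which completes the \emph{if} direction.

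I expect the main obstacle to be phase two of the $\mathbb{L}^n(A)$ construction: one must make precise that, for every way phase one can terminate, the residual market really is a two-owner top-trading-cycles problem on the fixed agent set $R$ — this is exactly where the weak-serial-dictatorship structure of Theorem \ref{theorem equivalent condition to bipolar seq} is used — and then verify obvious strategy-proofness at each decision node of the gadget while ensuring that node never shares a path with another node assigned to the same agent. The boundary cases — $|A| \leq 1$ is trivial, and $n < |A|$ forces the rule to be a serial dictatorship — are routine, and I would dispose of them at the outset.
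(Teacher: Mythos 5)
Your proposal is correct and follows essentially the same route as the paper: the only-if direction via Proposition \ref{proposition SSP implies SOSP}, the unrestricted case via the serial-dictatorship characterization (Theorem \ref{theorem SOSP FPTTC unrestricted}), and the restricted case by running the weak serial dictatorship (Theorem \ref{theorem SOSP FPTTC restricted}) as a serial-dictatorship prefix followed by a simple OSP gadget for the residual two-object market, which is precisely the content of the paper's Observation on $|A|=2$. The only difference is one of detail: the paper leaves the concatenation and the two-object gadget implicit, whereas you spell them out and verify them.
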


	The proof of this proposition is relegated to Appendix \ref{appendix proof of theorem SSP SOSP FPTTC}.
	
	We obtain the following corollary from Theorem \ref{theorem SOSP FPTTC restricted}, Theorem \ref{theorem SOSP FPTTC unrestricted}, and Theorem \ref{theorem SSP SOSP FPTTC}.

	\begin{corollary}\label{corollary SSP FPTTC}
		\begin{enumerate}[(a)]
			\item On the domain $\mathbb{L}^n(A)$, an FPTTC rule $T^{\succ_A}$ is simply strategy-proof if and only if it is a weak serial dictatorship.
			
			\item On the domain $\mathbb{L}^n(A \cup \{a_0\})$, an FPTTC rule $T^{\succ_A}$ is simply strategy-proof if and only if it is a serial dictatorship.
		\end{enumerate}
	\end{corollary}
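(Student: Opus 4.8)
By Proposition \ref{proposition SSP implies SOSP}, every simply strategy-proof rule is SOSP-implementable, so only the converse requires proof: on each of the two domains, every SOSP-implementable FPTTC rule is simply strategy-proof. The plan is to feed each such rule through the structural characterizations already in hand — Theorem \ref{theorem SOSP FPTTC restricted} identifies these rules, on $\mathbb{L}^n(A)$, with the weak serial dictatorships, and Theorem \ref{theorem SOSP FPTTC unrestricted} identifies them, on $\mathbb{L}^n(A\cup\{a_0\})$, with the serial dictatorships — and then, for each such rule, to exhibit a simple OSP mechanism implementing it.

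The serial-dictatorship cases are immediate. Given a serial dictatorship with ordering $\sigma$, I would take the mechanism in which $\sigma(1)$ is called at the root with her preferences partitioned by her favorite element of the available set, then in each resulting branch $\sigma(2)$ is called with her preferences partitioned by her favorite among the remaining objects together with $a_0$, and so on. Each agent is called exactly once, so the mechanism is simple; and at every node the acting agent's final assignment is precisely the element she selects, which is constant along every continuation, so truth-telling yields that element with certainty while any deviation yields a weakly worse element, giving OSP. This settles $\mathbb{L}^n(A\cup\{a_0\})$ outright, and it also settles $\mathbb{L}^n(A)$ whenever $|A|>|N|$, since by Corollary \ref{corollary more objects} a weak serial dictatorship is then a serial dictatorship.

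The substantive case is a weak serial dictatorship $T^{\succ_A}$ on $\mathbb{L}^n(A)$ with $|A|\le|N|$ (the case $|A|=1$ being trivial, as the rule is then constant). Here I would first invoke Theorem \ref{theorem equivalent condition to bipolar seq}: the top $|A|-2$ ranks are common to all priorities, yielding dictators $d_1,\dots,d_{|A|-2}$ with $rank(d_j,\succ_a)=j$ for every $a\in A$. The mechanism is built in two phases. In the \emph{dictator phase} the agents $d_1,\dots,d_{|A|-2}$ are called in turn, each splitting on her favorite among the still-available objects; since in $T^{\succ_A}$ the acting $d_j$ owns all of them and leaves with the one she picks, this phase is simple, is OSP by the reasoning above, and reproduces $T^{\succ_A}$ while more than two objects remain. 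After it, exactly two objects $\{a,b\}$ remain and the surviving agents are $M=N\setminus\{d_1,\dots,d_{|A|-2}\}$, with $|M|=|N|-|A|+2\ge 2$. In the \emph{endgame} I would splice, below every leaf of the dictator phase, a short sub-mechanism for the two-object reduced FPTTC rule on $(M,\{a,b\})$: if a single agent owns both objects, a single node for her (splitting on whether she ranks $a$ or $b$ higher) suffices, because the residual owner of the remaining object is then forced; if two distinct agents $x,y$ own $a$ and $b$, a brief check over the four top-preference patterns shows that the FPTTC outcome gives $x$ object $a$ and $y$ object $b$ whenever $x$ ranks $a$ above $b$, and is otherwise pinned down by $y$'s ranking of $a$ versus $b$ — so a root node for $x$ with one leaf branch and one branch leading to a single node for $y$ implements it. Only agents of $M$ are called in the endgame, each at most once, so the concatenated mechanism stays simple; OSP at the endgame nodes holds because whenever an owner is not ``blocked'' by the other wanting the same object she obtains exactly her selected object, so her worst truthful outcome weakly dominates her best deviation.

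The main obstacle is precisely this endgame: one must verify that the two-object reduced rule — which, unlike the dictator phase, may involve a genuine swap between the two residual owners — is reproduced faithfully by a mechanism that is OSP and calls each owner only once, and that splicing it beneath the dictator phase respects the edge-labelling conditions in the definition of a mechanism while still implementing $T^{\succ_A}$ on all of $\mathbb{L}^n(A)$. The remainder is bookkeeping: checking well-definedness of the concatenation, confirming that no agent recurs along a path, and carrying out a node-by-node OSP verification.
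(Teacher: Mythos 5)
Your proposal is correct and follows essentially the same route as the paper: the paper obtains this corollary by combining Theorem \ref{theorem SOSP FPTTC restricted}, Theorem \ref{theorem SOSP FPTTC unrestricted}, and Theorem \ref{theorem SSP SOSP FPTTC}, and your argument re-derives Theorem \ref{theorem SSP SOSP FPTTC} inline --- Proposition \ref{proposition SSP implies SOSP} for one direction, and for the other the dictator-phase construction plus your two-object endgame, which is precisely the content of the paper's observation that every FPTTC rule with $|A|=2$ is simply strategy-proof on $\mathbb{L}^n(A)$. Your endgame case analysis correctly supplies the details the paper leaves as ``straightforward.''
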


	Our results show that while the class of simply strategy-proof FPTTC rules is generally a subset of the class of SOSP-implementable FPTTC rules, the ``richness'' of the domains $\mathbb{L}^n(A)$ and $\mathbb{L}^n(A \cup \{a_0\})$ makes these two classes equivalent. In view of this observation, one may wonder whether the richness drives this equivalence beyond FPTTC rules. In Appendix \ref{appendix examples}, we show that on both of these domains, not every SOSP-implementable assignment rule is simply strategy-proof.

	\appendixtocon
	\appendixtitletocon

	\renewcommand{\theequation}{\Alph{section}.\arabic{equation}}
	\renewcommand{\thetable}{\Alph{section}.\arabic{table}}
	\renewcommand{\thefigure}{\Alph{section}.\arabic{figure}}

	\begin{appendices}

		\section{Some additional notations}
		
		\setcounter{equation}{0}
		\setcounter{table}{0}
		\setcounter{figure}{0}

		It will be convenient to introduce some additional notations for the proofs. Following our notational terminology in Section \ref{subsection notions}, for a preference $P \in \mathbb{L}(A \cup \{a_0\})$ and two disjoint subsets $A'$ and $\hat{A}$ of $A \cup \{a_0\}$, we write $A' P \hat{A}$ to mean that $a P b$ for all $a \in A'$ and all $b \in \hat{A}$. Furthermore, for a preference profile $P_N$ and an FPTTC rule, let $I_s(P_N)$ be the set of assigned agents at Step $s$, $I^s(P_N)$ be the set of assigned agents up to Step $s$ (including Step $s$), $X_s(P_N)$ be the set of assigned objects at Step $s$, $X^s(P_N)$ be the set of assigned objects up to Step $s$ (including Step $s$), and $O_s(i,P_N)$ be the set of objects owned by agent $i$ at Step $s$.

		\section{Proof of Theorem \ref{theorem FPTTC is dual iff acyclic}}\label{appendix proof of theo FPTTC is dual iff acyclic}
		
		\setcounter{equation}{0}
		\setcounter{table}{0}
		\setcounter{figure}{0}

		\textbf{\textit{(If part)}} Suppose $T^{\succ_A}$ does not satisfy dual ownership on $\mathbb{L}^n(A)$. We show that $\succ_A$ contains a priority cycle. Since $T^{\succ_A}$ does not satisfy dual ownership on $\mathbb{L}^n(A)$, there exist a preference profile $\tilde{P}_N \in \mathbb{L}^n(A)$ and a step $s^*$ of $T^{\succ_A}$ at $\tilde{P}_N$ such that $|\mathcal{T}(\succ_{A_{s^*}(\tilde{P}_N)}^{N_{s^*}(\tilde{P}_N)})| > 2$. This implies that there exist three agents $i_1, i_2, i_3 \in N_{s^*}(\tilde{P}_N)$ and three objects $a_1, a_2, a_3 \in A_{s^*}(\tilde{P}_N)$ such that for all $h = 1,2,3$, agent $i_h$ owns the object $a_h$ at Step $s^*$. We proceed to show that $[(i_1, i_2, i_3), (a_1, a_2, a_3)]$ is a priority cycle in $\succ_A$. We distinguish the following two cases.\medskip
		\\
		\noindent\textbf{\textsc{Case} 1}: Suppose $s^* = 1$.
		
		Since for all $h = 1, 2, 3$, agent $i_h$ owns the object $a_h$ at Step $1$, by the definition of $T^{\succ_A}$, it follows that $\tau(\succ_{a_h}) = i_h$ for all $h = 1,2,3$. This means $[(i_1, i_2, i_3), (a_1, a_2, a_3)]$ is a priority cycle in $\succ_A$.\medskip
		\\
		\noindent\textbf{\textsc{Case} 2}: Suppose $s^* > 1$.
		
		Let $\{i_4, \ldots, i_t\} \subseteq N \setminus \{i_1,i_2,i_3\}$ and $\{a_4,\ldots,a_t\} \subseteq A \setminus \{a_1,a_2,a_3\}$ be as follows. 
		\begin{enumerate}[(i)]
			\item $\{i_4, \ldots, i_t\} = I^{s^*-1}(\tilde{P}_N)$.
			
			\item For all $h = 4, \ldots, t$, $\{a_h\} = \big( X_s(\tilde{P}_N) \cap O_s(i_h,\tilde{P}_N) \big)$ where $i_h \in I_s(\tilde{P}_N)$ for some $s < s^*$. To see that this is well-defined note that by the definition of $T^{\succ_A}$ and the fact $\tilde{P}_N \in \mathbb{L}^n(A)$, (a) for every $i_h \in I^{s^*-1}(\tilde{P}_N)$, there exists exactly one step $s$ with $s < s^*$ such that $i_h \in I_s(\tilde{P}_N)$, and (b) $O_s(i_h, \tilde{P}_N) \cap X_s(\tilde{P}_N)$ is a singleton set for all $i_h \in I_s(\tilde{P}_N)$ with $s < s^*$. 
		\end{enumerate}	
		
		It follows from the definition of $T^{\succ_A}$ and the construction of $\{i_4, \ldots, i_t\}$ and $\{a_4,\ldots,a_t\}$ that $U(i_h, \succ_{a_h}) \subseteq \{i_4,\ldots,i_t\}$ for all $h = 1, \ldots, t$.
		This implies that $[(i_1, i_2, i_3), (a_1, a_2, a_3)]$ is a priority cycle in $\succ_A$, which completes the proof of the ``if'' part of Theorem \ref{theorem FPTTC is dual iff acyclic}.\bigskip
		\\
		\textbf{\textit{(Only-if part)}} Suppose $\succ_A$ contains a priority cycle $[(i_1, i_2, i_3), (a_1, a_2, a_3)]$. We show that $T^{\succ_A}$ does not satisfy dual ownership on $\mathbb{L}^n(A)$. By the definition of a priority cycle, one of the following two statements must hold.
		\begin{enumerate}[(1)]
			\item\label{item theorem FPTTC is dual iff acyclic if 1} $\tau(\succ_{a_h}) = i_h$ for all $h = 1,2,3$.
			
			\item\label{item theorem FPTTC is dual iff acyclic if 2} There exist distinct agents $i_4, \ldots, i_t \in N \setminus \{i_1,i_2,i_3\}$ and distinct objects $a_4, \ldots, a_t \in A \setminus \{a_1,a_2,a_3\}$ such that for all $h = 1, \ldots, t$, we have $U(i_h, \succ_{a_h}) \subseteq \{i_4,\ldots,i_t\}$.
		\end{enumerate}
		We distinguish the following two cases.\medskip
		\\
		\noindent\textbf{\textsc{Case} 1}: Suppose \ref{item theorem FPTTC is dual iff acyclic if 1} holds.
		
		Since $\tau(\succ_{a_h}) = i_h$ for all $h = 1,2,3$, it must be that for any preference profile, agents $i_1, i_2$, and $i_3$ own objects $a_1, a_2$, and $a_3$, respectively, at Step $1$ of $T^{\succ_A}$ at that preference profile. Therefore $T^{\succ_A}$ does not satisfy dual ownership on $\mathbb{L}^n(A)$.\medskip
		\\
		\noindent\textbf{\textsc{Case} 2}: Suppose \ref{item theorem FPTTC is dual iff acyclic if 2} holds.
		
		Consider the preference profile $\tilde{P}_N \in \mathbb{L}^n(A)$ defined as follows. Each $i_h \in \{i_4,\ldots,i_t\}$ has a preference $\tilde{P}_{i_h}$ such that $\tau(\tilde{P}_{i_h}) = a_h$ and each $j \in N \setminus \{i_4,\ldots,i_t\}$ has a preference $\tilde{P}_j$ such that $\{a_4,\ldots,a_t\} \tilde{P}_j (A \setminus \{a_4,\ldots,a_t\})$. The next claim establishes some properties of the outcome of $T^{\succ_A}$ at $\tilde{P}_N$ at Step 1.

		\begin{claim}\label{claim for step 1}
			(a) $I_1(\tilde{P}_N) \subseteq \{i_4,\ldots,i_t\}$, and (b) $T^{\succ_A}_{i_h}(\tilde{P}_N) = a_h$ for all $h = 4, \ldots, t$ with $i_h \in I_1(\tilde{P}_N)$.
		\end{claim}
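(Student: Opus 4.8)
The plan is to compute Step~1 of $T^{\succ_A}$ at the profile $\tilde{P}_N$ directly from the definitions. At Step~1 each object $a \in A$ is owned by $\tau(\succ_a)$, and the TTC procedure is run on the full market $(N, A)$; since $\tilde{P}_N \in \mathbb{L}^n(A)$, every object is acceptable to every agent, so in the associated directed graph each agent $i$ has a single outgoing edge pointing to the owner of $\tau(\tilde{P}_i)$. The one fact I would record first is that for each $h \in \{4,\ldots,t\}$ the Step~1 owner of $a_h$ lies in $\{i_4,\ldots,i_t\}$: the top agent $\tau(\succ_{a_h})$ is either $i_h$ or an agent ranked strictly above $i_h$ in $\succ_{a_h}$, so $\tau(\succ_{a_h}) \in \{i_h\} \cup U(i_h, \succ_{a_h}) \subseteq \{i_4,\ldots,i_t\}$ by the defining property of the priority cycle (statement~\ref{item theorem FPTTC is dual iff acyclic if 2}) together with $i_h \in \{i_4,\ldots,i_t\}$.

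For part~(a), I would next observe that by construction of $\tilde{P}_N$ every agent's most preferred object lies in $\{a_4,\ldots,a_t\}$: for $i_h \in \{i_4,\ldots,i_t\}$ because $\tau(\tilde{P}_{i_h}) = a_h$, and for $j \in N \setminus \{i_4,\ldots,i_t\}$ because $\tilde{P}_j$ ranks every element of $\{a_4,\ldots,a_t\}$ above every other object (using $\{a_1,a_2,a_3\} \cap \{a_4,\ldots,a_t\} = \emptyset$, the objects $a_1,a_2,a_3$ are thus nobody's top object at Step~1). Combining this with the first fact, every edge of the Step~1 TTC graph points into $\{i_4,\ldots,i_t\}$. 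Since every agent assigned at Step~1 lies on a top trading cycle and hence has an incoming edge, it follows that $I_1(\tilde{P}_N) \subseteq \{i_4,\ldots,i_t\}$, which is part~(a).

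Part~(b) is then immediate: if $i_h \in I_1(\tilde{P}_N)$, then $i_h$ lies on a top trading cycle at Step~1 and is assigned her most preferred object, which is $\tau(\tilde{P}_{i_h}) = a_h$; hence $T^{\succ_A}_{i_h}(\tilde{P}_N) = a_h$. I do not expect a genuine obstacle here, since everything is a direct unwinding of the FPTTC definition; the only point requiring care is the bookkeeping in the first fact --- reading $\tau(\succ_{a_h}) \in \{i_h\} \cup U(i_h, \succ_{a_h})$ correctly off the priority-cycle hypothesis, and keeping track of the disjointness of $\{i_1,i_2,i_3\}$ from $\{i_4,\ldots,i_t\}$ and of $\{a_1,a_2,a_3\}$ from $\{a_4,\ldots,a_t\}$ so that the agents and objects indexed by $1,2,3$ play no role at Step~1.
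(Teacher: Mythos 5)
Your proof is correct and follows essentially the same route as the paper's: both arguments rest on the two facts that each $a_h$ ($h = 4,\ldots,t$) is owned at Step~1 by an agent in $\{i_4,\ldots,i_t\}$ (since $\tau(\succ_{a_h}) \in \{i_h\} \cup U(i_h,\succ_{a_h}) \subseteq \{i_4,\ldots,i_t\}$) and that every agent's top choice lies in $\{a_4,\ldots,a_t\}$, from which both (a) and (b) follow by inspecting the Step~1 TTC graph. Your write-up merely makes the graph-theoretic step (every edge points into $\{i_4,\ldots,i_t\}$, and assigned agents have incoming edges) more explicit than the paper does.
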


		\begin{claimproof}[\textbf{Proof of Claim \ref{claim for step 1}.}]
			By the assumptions for Case 2, it follows that $\{a_4, \ldots, a_t\} \subseteq \underset{h = 4}{\overset{t}{\cup}} \hspace{1 mm} O_1(i_h,\tilde{P}_N)$. Moreover, by the construction of $\tilde{P}_N$, we have $\tau(\tilde{P}_i) \in \{a_4, \ldots, a_t\}$ for all $i \in N$. Since $\{a_4, \ldots, a_t\} \subseteq \underset{h = 4}{\overset{t}{\cup}} \hspace{1 mm} O_1(i_h,\tilde{P}_N)$ and $\tau(\tilde{P}_i) \in \{a_4, \ldots, a_t\}$ for all $i \in N$, it follows from the definition of $T^{\succ_A}$ that $I_1(\tilde{P}_N) \subseteq \{i_4,\ldots,i_t\}$ and $T^{\succ_A}_{i}(\tilde{P}_N) = \tau(\tilde{P}_i)$ for all $i \in I_1(\tilde{P}_N)$. These two facts, along with the construction of $\tilde{P}_N$, complete the proof of Claim \ref{claim for step 1}.
		\end{claimproof}

		By Claim \ref{claim for step 1}, $I_1(\tilde{P}_N) \subseteq \{i_4,\ldots,i_t\}$ and $T^{\succ_A}_{i_h}(\tilde{P}_N) = a_h$ for all $h = 4, \ldots, t$ with $i_h \in I_1(\tilde{P}_N)$. We proceed to show that there will be a step $s^*$ such that $I^{s^*}(\tilde{P}_N) = \{i_4,\ldots,i_t\}$ and $T^{\succ_A}_{i_h}(\tilde{P}_N) = a_h$ for all $h = 4, \ldots, t$. If $I_1(\tilde{P}_N) = \{i_4,\ldots,i_t\}$, then $s^* = 1$ and we are done. Suppose $I_1(\tilde{P}_N) \subsetneq \{i_4,\ldots,i_t\}$, that is, $I_1(\tilde{P}_N)$ is a proper subset of $\{i_4,\ldots,i_t\}$. Since $I_1(\tilde{P}_N) \subsetneq \{i_4,\ldots,i_t\}$ and $T^{\succ_A}_{i_h}(\tilde{P}_N) = a_h$ for all $h = 4, \ldots, t$ with $i_h \in I_1(\tilde{P}_N)$, using similar argument as for Claim \ref{claim for step 1}, it follows from the assumptions for Case 2 and the construction of $\tilde{P}_N$ that $I_2(\tilde{P}_N) \subseteq \big( \{i_4,\ldots,i_t\} \setminus I_1(\tilde{P}_N) \big)$ and $T^{\succ_A}_{i_h}(\tilde{P}_N) = a_h$ for all $h = 4, \ldots, t$ with $i_h \in I_2(\tilde{P}_N)$. If $I_1(\tilde{P}_N) \cup I_2(\tilde{P}_N) = \{i_4,\ldots,i_t\}$, then $s^* = 2$ and we are done. Otherwise, continuing in this manner, we obtain a step $s^* > 2$ of $T^{\succ_A}$ at $\tilde{P}_N$ such that $I^{s^*}(\tilde{P}_N) = \{i_4,\ldots,i_t\}$ and $T^{\succ_A}_{i_h}(\tilde{P}_N) = a_h$ for all $h = 4, \ldots, t$.
		
		Since $I^{s^*}(\tilde{P}_N) = \{i_4,\ldots,i_t\}$ and $T^{\succ_A}_{i_h}(\tilde{P}_N) = a_h$ for all $h = 4, \ldots, t$, by the assumptions for Case 2, we have $a_h \in O_{s^*+1}(i_h, \tilde{P}_N)$ for all $h = 1,2,3$. This implies that agents $i_1$, $i_2$, and $i_3$ own the objects $a_1$, $a_2$, and $a_3$, respectively, at Step $s^* + 1$ of $T^{\succ_A}$ at $\tilde{P}_N$. Therefore $T^{\succ_A}$ does not satisfy dual ownership on $\mathbb{L}^n(A)$, which completes the proof of the ``only-if'' part of Theorem \ref{theorem FPTTC is dual iff acyclic}.
		\hfill
		\qed

		\section{Proof of Theorem \ref{theorem dual strong acyclicity equivalent}}\label{appendix proof of theo dual strong acyclicity equivalent}
		
		\setcounter{equation}{0}
		\setcounter{table}{0}
		\setcounter{figure}{0}

		Before we start proving Theorem \ref{theorem dual strong acyclicity equivalent}, to facilitate the proof we present the notion of \textit{dual dictatorship} \citep{troyan2019obviously}.

		\begin{definition}
			On a domain $\mathcal{P}_N$, an FPTTC rule $T^{\succ_A}$ satisfies \textbf{\textit{dual dictatorship}} if, for all $N' \subseteq N$ and all $A' \subseteq A$, we have $|\mathcal{T}(\succ_{A'}^{N'})| \leq 2$.
		\end{definition}

		Notice that unlike dual ownership, dual dictatorship does not depend on the choice of the domain.

		\begin{proof}[\textbf{Completion of the proof of Theorem \ref{theorem dual strong acyclicity equivalent}.}]
			\citet{mandal2022outside} show that dual ownership and dual dictatorship are equivalent properties of an FPTTC rule on the unrestricted domain (see Theorem 4.1 in \citet{mandal2022outside}).\footnote{For an arbitrary domain of preference profiles $\mathcal{P}_N$, the set of FPTTC rules satisfying dual ownership is a superset of those satisfying dual dictatorship. See \citet{mandal2022outside} for a detailed discussion about the relation between dual ownership and dual dictatorship.} 
			In a setting with \textit{equal} number of agents and objects, \citet{troyan2019obviously} shows that dual dictatorship and strong acyclicity are equivalent properties of an FPTTC rule on the restricted domain $\mathbb{L}^n(A)$ (see Theorem 2 in \citet{troyan2019obviously}). 
			His proof works verbatim on the unrestricted domain $\mathbb{L}^n(A \cup \{a_0\})$ in our setting (that is, with arbitrary values of the number of agents and the number of objects), and his result still holds. Combining all these facts, we obtain that dual ownership, dual dictatorship, and strong acyclicity are equivalent properties of an FPTTC rule on $\mathbb{L}^n(A \cup \{a_0\})$. This completes the proof of Theorem \ref{theorem dual strong acyclicity equivalent}.
		\end{proof}

		\section{Proof of Proposition \ref{proposition OSP is strong}}\label{appendix proof of proposition OSP is strong}
		
		\setcounter{equation}{0}
		\setcounter{table}{0}
		\setcounter{figure}{0}

		Let $D^{\succ_A}$ be an OSP-implementable APDA rule on the unrestricted domain $\mathbb{L}^n(A \cup \{a_0\})$. Assume for contradiction that $\succ_A$ contains a weak cycle, say $[(i_1,i_2,i_3),(a_1,a_2,a_3)]$. By the definition of a weak cycle, we have $i_1 \succ_{a_1} \{i_2, i_3\}$, $i_2 \succ_{a_2} \{i_1, i_3\}$, and $i_3 \succ_{a_3} \{i_1, i_2\}$.
		
		Fix a preference $\hat{P} \in \mathbb{L}(A \cup \{a_0\})$ such that $\tau(\hat{P}) = a_0$. We distinguish the following cases.\medskip
		\\
		\noindent\textbf{\textsc{Case} 1}: Suppose $i_1 \succ_{a_1} i_2 \succ_{a_1} i_3$ and $i_2 \succ_{a_2} i_1 \succ_{a_2} i_3$.
		
		Consider the domain $\tilde{\mathcal{P}}_N \subseteq \mathbb{L}^n(A \cup \{a_0\})$ with only four preference profiles presented (together with the outcomes of $D^{\succ_A}$) in Table \ref{preference choice OSP implies strong case 1}. Here, $l$ denotes an agent (might be empty) other than $i_1, i_2$, and $i_3$. Note that such an agent does not change her preference across the mentioned preference profiles.
		\begin{table}[H]
			\centering
			\begin{tabular}{@{}c|ccccc|ccc@{}}
				\hline
				$\mbox{Preference profiles}$ & $\mbox{Agent } i_1$ & $\mbox{Agent } i_2$ & $\mbox{Agent } i_3$ & $\ldots$ & $\mbox{Agent } l$ & $D^{\succ_A}_{i_1}$ & $D^{\succ_A}_{i_2}$ & $D^{\succ_A}_{i_3}$ \\ \hline
				\hline
				$\tilde{P}_N^1$ & $a_2 a_1 a_3 a_0\ldots$ & $a_3 a_2 a_1 a_0\ldots$ & $a_2 a_1 a_3 a_0\ldots$ & $\ldots$ & $\hat{P}$ & $a_2$ & $a_3$ & $a_1$ \\ \hline
				$\tilde{P}_N^2$ & $a_3 a_1 a_2 a_0\ldots$ & $a_1 a_3 a_2 a_0\ldots$ & $a_1 a_3 a_2 a_0\ldots$ & $\ldots$ & $\hat{P}$ & $a_1$ & $a_2$ & $a_3$ \\ \hline
				$\tilde{P}_N^3$ & $a_3 a_2 a_1 a_0\ldots$ & $a_1 a_3 a_2 a_0\ldots$ & $a_2 a_1 a_3 a_0\ldots$ & $\ldots$ & $\hat{P}$ & $a_3$ & $a_1$ & $a_2$ \\ \hline
				$\tilde{P}_N^4$ & $a_3 a_2 a_1 a_0\ldots$ & $a_3 a_2 a_1 a_0\ldots$ & $a_2 a_3 a_1 a_0\ldots$ & $\ldots$ & $\hat{P}$ & $a_1$ & $a_2$ & $a_3$ \\ \hline
			\end{tabular}
			\caption{Preference profiles of $\tilde{\mathcal{P}}_N$ for Case 1}
			\label{preference choice OSP implies strong case 1}
		\end{table}
		
		Since $D^{\succ_A}$ is OSP-implementable on $\mathbb{L}^n(A\cup \{a_0\})$, it must be OSP-implementable on the domain $\tilde{\mathcal{P}}_N$. Let $\tilde{G}$ be an OSP mechanism that implements $D^{\succ_A}$ on $\tilde{\mathcal{P}}_N$. 
		
		Note that since $D^{\succ_A}(\tilde{P}_N^1) \neq D^{\succ_A}(\tilde{P}_N^2)$, there exists a node in the OSP mechanism $\tilde{G}$ that has at least two edges. Also, note that since each agent $l \in N \setminus \{i_1, i_2, i_3\}$ has exactly one preference in $\tilde{\mathcal{P}}_l$, whenever there are at least two outgoing edges from a node, that node must be assigned to some agent in $\{i_1, i_2, i_3\}$. Consider the first node (from the root) $v$ that has at least two edges.
		\begin{enumerate}[(i)]
			\item Suppose $\eta^{NA}(v) = i_1$. 
			\\
			By OSP-implementability, the facts $a_2 \tilde{P}_{i_1}^4 a_1$, $D^{\succ_A}_{i_1}(\tilde{P}_N^4) = a_1$, and $D^{\succ_A}_{i_1}(\tilde{P}_N^1) = a_2$ together imply that $\tilde{P}_{i_1}^4$ and $\tilde{P}_{i_1}^1$ do not diverge at $v$. 
			Similarly, by OSP-implementability, the facts $a_3 \tilde{P}_{i_1}^2 a_1$, $D^{\succ_A}_{i_1}(\tilde{P}_N^2) = a_1$, and $D^{\succ_A}_{i_1}(\tilde{P}_N^3) = a_3$ together imply that $\tilde{P}_{i_1}^2$ and $\tilde{P}_{i_1}^3$ do not diverge at $v$.
			Since $\tilde{P}_{i_1}^3 = \tilde{P}_{i_1}^4$, combining all these observations, we have a contradiction to the fact that $v$ has at least two edges.
			
			\item Suppose $\eta^{NA}(v) = i_2$. 
			\\
			By OSP-implementability, the facts $a_3 \tilde{P}_{i_2}^2 a_2$, $D^{\succ_A}_{i_2}(\tilde{P}_N^2) = a_2$, and $D^{\succ_A}_{i_2}(\tilde{P}_N^1) = a_3$ together imply that $\tilde{P}_{i_2}^2$ and $\tilde{P}_{i_2}^1$ do not diverge at $v$, a contradiction to the fact that $v$ has at least two edges.	
			
			\item Suppose $\eta^{NA}(v) = i_3$. 
			\\
			By OSP-implementability, the facts $a_1 \tilde{P}_{i_3}^2 a_3$, $D^{\succ_A}_{i_3}(\tilde{P}_N^2) = a_3$, and $D^{\succ_A}_{i_3}(\tilde{P}_N^1) = a_1$ together imply that $\tilde{P}_{i_3}^2$ and $\tilde{P}_{i_3}^1$ do not diverge at $v$. 
			Similarly, by OSP-implementability, the facts $a_2 \tilde{P}_{i_3}^4 a_3$, $D^{\succ_A}_{i_3}(\tilde{P}_N^4) = a_3$, and $D^{\succ_A}_{i_3}(\tilde{P}_N^3) = a_2$ together imply that $\tilde{P}_{i_3}^4$ and $\tilde{P}_{i_3}^3$ do not diverge at $v$.
			Since $\tilde{P}_{i_3}^1 = \tilde{P}_{i_3}^3$, combining all these observations, we have a contradiction to the fact that $v$ has at least two edges.				
		\end{enumerate}
		
		\noindent\textbf{\textsc{Case} 2}: Suppose $i_2 \succ_{a_2} i_3 \succ_{a_2} i_1$ and $i_3 \succ_{a_3} i_2 \succ_{a_3} i_1$.
		
		By renaming agents $i_1, i_2, i_3$ as $i'_3, i'_2, i'_1$, respectively, and renaming objects $a_1, a_2, a_3$ as $a'_3, a'_2, a'_1$, respectively, we obtain an identical situation to Case 1.\medskip
		\\
		\noindent\textbf{\textsc{Case} 3}: Suppose $i_1 \succ_{a_1} i_3 \succ_{a_1} i_2$ and $i_3 \succ_{a_3} i_1 \succ_{a_3} i_2$.
		
		By renaming agents $i_1, i_2, i_3$ as $i'_1, i'_3, i'_2$, respectively, and renaming objects $a_1, a_2, a_3$ as $a'_1, a'_3, a'_2$, respectively, we obtain an identical situation to Case 1.\medskip
		\\
		\noindent\textbf{\textsc{Case} 4}: Suppose $i_1 \succ_{a_1} i_2 \succ_{a_1} i_3$, $i_2 \succ_{a_2} i_3 \succ_{a_2} i_1$, and $i_3 \succ_{a_3} i_1 \succ_{a_3} i_2$.
		
		Consider the domain $\tilde{\mathcal{P}}_N \subseteq \mathbb{L}^n(A \cup \{a_0\})$ with only three preference profiles presented (together with the outcomes of $D^{\succ_A}$) in Table \ref{preference choice OSP implies strong case 4}.
		\begin{table}[H]
			\centering
			\begin{tabular}{@{}c|ccccc|ccc@{}}
				\hline
				$\mbox{Preference profiles}$ & $\mbox{Agent } i_1$ & $\mbox{Agent } i_2$ & $\mbox{Agent } i_3$ & $\ldots$ & $\mbox{Agent } l$ & $D^{\succ_A}_{i_1}$ & $D^{\succ_A}_{i_2}$ & $D^{\succ_A}_{i_3}$ \\ \hline
				\hline
				$\tilde{P}_N^1$ & $a_2 a_1 a_3 a_0\ldots$ & $a_1 a_3 a_2 a_0\ldots$ & $a_2 a_1 a_3 a_0\ldots$ & $\ldots$ & $\hat{P}$ & $a_1$ & $a_3$ & $a_2$ \\ \hline
				$\tilde{P}_N^2$ & $a_3 a_2 a_1 a_0\ldots$ & $a_1 a_3 a_2 a_0\ldots$ & $a_1 a_3 a_2 a_0\ldots$ & $\ldots$ & $\hat{P}$ & $a_2$ & $a_1$ & $a_3$ \\ \hline
				$\tilde{P}_N^3$ & $a_3 a_2 a_1 a_0\ldots$ & $a_3 a_2 a_1 a_0\ldots$ & $a_2 a_1 a_3 a_0\ldots$ & $\ldots$ & $\hat{P}$ & $a_3$ & $a_2$ & $a_1$ \\ \hline
			\end{tabular}
			\caption{Preference profiles of $\tilde{\mathcal{P}}_N$ for Case 4}
			\label{preference choice OSP implies strong case 4}
		\end{table}
		
		Since $D^{\succ_A}$ is OSP-implementable on $\mathbb{L}^n(A\cup \{a_0\})$, it must be OSP-implementable on the domain $\tilde{\mathcal{P}}_N$. Let $\tilde{G}$ be an OSP mechanism that implements $D^{\succ_A}$ on $\tilde{\mathcal{P}}_N$. 
		
		Note that since $D^{\succ_A}(\tilde{P}_N^1) \neq D^{\succ_A}(\tilde{P}_N^2)$, there exists a node in the OSP mechanism $\tilde{G}$ that has at least two edges. Also, note that since each agent $l \in N \setminus \{i_1, i_2, i_3\}$ has exactly one preference in $\tilde{\mathcal{P}}_l$, whenever there are at least two outgoing edges from a node, that node must be assigned to some agent in $\{i_1, i_2, i_3\}$. Consider the first node (from the root) $v$ that has at least two edges.
		\begin{enumerate}[(i)]
			\item Suppose $\eta^{NA}(v) = i_1$. 
			\\
			By OSP-implementability, the facts $a_2 \tilde{P}_{i_1}^1 a_1$, $D^{\succ_A}_{i_1}(\tilde{P}_N^1) = a_1$, and $D^{\succ_A}_{i_1}(\tilde{P}_N^2) = a_2$ together imply that $\tilde{P}_{i_1}^1$ and $\tilde{P}_{i_1}^2$ do not diverge at $v$, a contradiction to the fact that $v$ has at least two edges.
			
			\item Suppose $\eta^{NA}(v) = i_2$. 
			\\
			By OSP-implementability, the facts $a_3 \tilde{P}_{i_2}^3 a_2$, $D^{\succ_A}_{i_2}(\tilde{P}_N^3) = a_2$, and $D^{\succ_A}_{i_2}(\tilde{P}_N^1) = a_3$ together imply that $\tilde{P}_{i_2}^3$ and $\tilde{P}_{i_2}^1$ do not diverge at $v$, a contradiction to the fact that $v$ has at least two edges.	
			
			\item Suppose $\eta^{NA}(v) = i_3$. 
			\\
			By OSP-implementability, the facts $a_1 \tilde{P}_{i_3}^2 a_3$, $D^{\succ_A}_{i_3}(\tilde{P}_N^2) = a_3$, and $D^{\succ_A}_{i_3}(\tilde{P}_N^3) = a_1$ together imply that $\tilde{P}_{i_3}^2$ and $\tilde{P}_{i_3}^3$ do not diverge at $v$, a contradiction to the fact that $v$ has at least two edges.				
		\end{enumerate}
		
		\noindent\textbf{\textsc{Case} 5}: Suppose $i_1 \succ_{a_1} i_3 \succ_{a_1} i_2$, $i_2 \succ_{a_2} i_1 \succ_{a_2} i_3$, and $i_3 \succ_{a_3} i_2 \succ_{a_3} i_1$.
		
		By renaming agents $i_1, i_2, i_3$ as $i'_1, i'_3, i'_2$, respectively, and renaming objects $a_1, a_2, a_3$ as $a'_1, a'_3, a'_2$, respectively, we obtain an identical situation to Case 4.\medskip
		
		Since Cases 1 - 5 are exhaustive, this completes the proof of Proposition \ref{proposition OSP is strong}.		
		\hfill
		\qed

		\section{Proof of Theorem \ref{theorem SOSP FPTTC restricted}}\label{appendix proof of theo SOSP FPTTC restricted}
		
		\setcounter{equation}{0}
		\setcounter{table}{0}
		\setcounter{figure}{0}

		We first make a straightforward observation to facilitate the proof.

		\begin{obs}\label{observation 2 objects}
			Suppose $|A| = 2$. On the domain $\mathbb{L}^n(A)$, every FPTTC rule is SOSP-implementable.
		\end{obs}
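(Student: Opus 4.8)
The plan is to establish the statement by direct construction. Fix a priority structure $\succ_A$ with $A = \{a_1, a_2\}$; I would exhibit an explicit mechanism on $\mathbb{L}^n(A)$ that implements $T^{\succ_A}$ and check that it is SOSP. The crucial simplification is that on $\mathbb{L}^n(A)$ with $|A| = 2$, each $\mathcal{P}_i = \mathbb{L}(A)$ has exactly two elements, $a_1 a_2 a_0$ and $a_2 a_1 a_0$, so an agent's report is completely described by $\tau(P_i) \in \{a_1, a_2\}$; moreover the TTC procedure of $T^{\succ_A}$ stops after at most two steps, since each step removes at least one object. Let $d = \tau(\succ_{a_1})$ and $e = \tau(\succ_{a_2})$ be the Step-$1$ owners of $a_1$ and $a_2$. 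I would split on whether $d = e$.

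If $d = e$, then $d$ owns both objects at Step $1$; running TTC gives: if $\tau(P_d) = a_1$ then $d$ gets $a_1$ and then $a_2$ goes to $\tau(\succ_{a_2}, N \setminus \{d\})$, while if $\tau(P_d) = a_2$ then $d$ gets $a_2$ and then $a_1$ goes to $\tau(\succ_{a_1}, N \setminus \{d\})$; in both cases every other agent gets $a_0$. The mechanism here is a single decision node at which $d$ moves, with its two outgoing edges (labelled by the two preferences) leading to the two corresponding leaves. If $d \neq e$, then $d$ owns $a_1$ and $e$ owns $a_2$ at Step $1$; running TTC gives: if $\tau(P_d) = a_1$ then $d$ gets $a_1$ and $a_2$ goes to $e$ (regardless of $P_e$, via a Step-$2$ self-cycle if necessary); if $\tau(P_d) = a_2$ and $\tau(P_e) = a_1$ then $d$ gets $a_2$ and $e$ gets $a_1$; and if $\tau(P_d) = a_2 = \tau(P_e)$ then $e$ gets $a_2$ and $a_1$ goes to $\tau(\succ_{a_1}, N \setminus \{e\}) = d$; in every case all other agents get $a_0$. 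The mechanism here first asks $d$; the edge $\tau(P_d) = a_1$ leads straight to the leaf $[d \mapsto a_1, e \mapsto a_2]$, and the edge $\tau(P_d) = a_2$ leads to a second node where $e$ moves, whose edges $\tau(P_e) = a_1$ and $\tau(P_e) = a_2$ lead to the leaves $[d \mapsto a_2, e \mapsto a_1]$ and $[d \mapsto a_1, e \mapsto a_2]$. One reads off that this mechanism implements $T^{\succ_A}$, that it is a legitimate mechanism (at each decision node the two edge-labels are disjoint and union to $\mathbb{L}(A)$, since neither $d$ nor $e$ has moved earlier on the path), and that it is \emph{simple} because $d$ and $e$ each move at most once.

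It then remains to verify SOSP at the one or two decision nodes. At a node where $d$ moves with true preference having $a_1$ on top, following the prescribed edge guarantees $d$ the outcome $a_1$, which is weakly preferred to every outcome ($a_1$ or $a_2$) reachable off the prescribed edge; with $a_2$ on top, every outcome reachable on the prescribed edge lies in $\{a_1, a_2\}$ and is weakly preferred to $a_1$, the sole outcome of the off-edge. The check at the node where $e$ moves is identical. The only point that takes a second of thought is the case $d \neq e$ with $d$ reporting $a_2$ on top: there the worst continuation for $d$ gives $a_1$ while the best outcome on the off-edge is also $a_1$, so the SOSP inequality holds with equality; everything else is immediate, and this is the (very mild) crux of the argument.
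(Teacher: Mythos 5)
Your proof is correct. The paper states this Observation without proof, presenting it as straightforward, so there is no written argument to compare against; your explicit two-case construction ($d=e$ versus $d\neq e$), with at most two decision nodes and the verification that the worst outcome on the truthful edge coincides with the best outcome on the deviating edge in the one non-trivial case ($d\neq e$, $\tau(P_d)=a_2$, both equal to $a_1$), is exactly the intended argument and fills in the details cleanly. Note that your mechanism is in fact \emph{simple} (each agent moves at most once), which is consistent with --- and slightly stronger than --- what the Observation asserts, and matches the companion Observation the paper uses later for simple strategy-proofness.
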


		\begin{proof}[\textbf{Completion of the proof of Theorem \ref{theorem SOSP FPTTC restricted}.}]
			\textbf{\textit{(If part)}} Using Observation \ref{observation 2 objects}, it is straightforward to verify that every weak serial dictatorship is SOSP-implementable on the restricted domain $\mathbb{L}^n(A)$. This completes the proof of the ``if'' part of Theorem \ref{theorem SOSP FPTTC restricted}.\medskip
			\\
			\textbf{\textit{(Only-if part)}} Let $T^{\succ_A}$ be an SOSP-implementable FPTTC rule on the restricted domain $\mathbb{L}^n(A)$. Since SOSP-implementability is stronger than OSP-implementability (see Remark \ref{remark SOSP implies OSP}), by Theorem \ref{theorem prior results}, $T^{\succ_A}$ satisfies dual ownership. Assume for contradiction that $T^{\succ_A}$ is not a weak serial dictatorship. Since $T^{\succ_A}$ satisfies dual ownership but is not a weak serial dictatorship, there exist a preference profile $P'_N \in \mathbb{L}^n(A)$ and a step $s^*$ of $T^{\succ_A}$ at $P'_N$ such that there are two agents $i, j$ and three objects $a, b, c$ in the reduced market at Step $s^*$ with the property that agent $i$ owns the object $a$, and agent $j$ owns the objects $b$ and $c$ at Step $s^*$. We distinguish the following two cases.\medskip
			\\
			\noindent\textbf{\textsc{Case} 1}: Suppose $s^* = 1$.
			
			Consider the domain $\tilde{\mathcal{P}}_N \subseteq \mathbb{L}^n(A)$ with only four preference profiles presented in Table \ref{preference choice only if case 1}. Here, $l$ denotes an agent (might be empty) other than $i$ and $j$.
			\begin{table}[H]
				\centering
				\begin{tabular}{@{}c|cccc@{}}
					\hline
					$\mbox{Preference profiles}$ & $\mbox{Agent } i$ & $\mbox{Agent } j$ & $\ldots$ & $\mbox{Agent } l$ \\ \hline
					\hline
					$\tilde{P}_N^1$ & $abc\ldots$ & $acb\ldots$ & $\ldots$ & $P'_l$ \\ \hline
					$\tilde{P}_N^2$ & $bac\ldots$ & $bac\ldots$ & $\ldots$ & $P'_l$ \\ \hline
					$\tilde{P}_N^3$ & $bca\ldots$ & $abc\ldots$ & $\ldots$ & $P'_l$ \\ \hline
					$\tilde{P}_N^4$ & $cab\ldots$ & $abc\ldots$ & $\ldots$ & $P'_l$ \\ \hline
				\end{tabular}
				\caption{Preference profiles of $\tilde{\mathcal{P}}_N$}
				\label{preference choice only if case 1}
			\end{table}
			
			In Table \ref{Outcome of hex only if part case 1}, we present some facts regarding the outcome of $T^{\succ_A}$ on the domain $\tilde{\mathcal{P}}_N$. These facts are deduced by the construction of $\tilde{\mathcal{P}}_N$ along with the assumptions for Case 1.
			\begin{table}[H]
				\centering
				\begin{tabular}{@{}c|cc|cc@{}}
					\hline
					Preference profiles & $\mbox{Agent } i$ & $\mbox{Agent } j$ & $T^{\succ_A}_i$ & $T^{\succ_A}_j$ \\ \hline
					\hline
					$\tilde{P}_N^1$ & $abc\ldots$ & $acb\ldots$ & $a$ & $c$ \\ \hline
					$\tilde{P}_N^2$ & $bac\ldots$ & $bac\ldots$ & $a$ & $b$ \\ \hline
					$\tilde{P}_N^3$ & $bca\ldots$ & $abc\ldots$ & $b$ & $a$ \\ \hline
					$\tilde{P}_N^4$ & $cab\ldots$ & $abc\ldots$ & $c$ & $a$ \\ \hline
				\end{tabular}
				\caption{Partial outcome of $T^{\succ_A}$ on $\tilde{\mathcal{P}}_N$}
				\label{Outcome of hex only if part case 1}
			\end{table}
			
			Since $T^{\succ_A}$ is SOSP-implementable on $\mathbb{L}^n(A)$, it must be SOSP-implementable on the domain $\tilde{\mathcal{P}}_N$. Let $\tilde{G}$ be an SOSP mechanism that implements $T^{\succ_A}$ on $\tilde{\mathcal{P}}_N$. 
			
			Note that since $T^{\succ_A}(\tilde{P}_N^1) \neq T^{\succ_A}(\tilde{P}_N^2)$, there exists a node in the SOSP mechanism $\tilde{G}$ that has at least two edges. Also, note that since each agent $l \in N \setminus \{i,j\}$ has exactly one preference in $\tilde{\mathcal{P}}_l$, whenever there are at least two outgoing edges from a node, that node must be assigned to some agent in $\{i,j\}$. Consider the first node (from the root) $v$ that has at least two edges. 
			\begin{enumerate}[(i)]
				\item Suppose $\eta^{NA}(v) = i$. 
				\\
				By SOSP-implementability, the facts $b \tilde{P}_i^2 a$, $T^{\succ_A}_i(\tilde{P}_N^2) = a$, and $T^{\succ_A}_i(\tilde{P}_N^3) = b$ together imply that $\tilde{P}_i^2$ and $\tilde{P}_i^3$ do not diverge at $v$. Since $\tilde{P}_i^2$ and $\tilde{P}_i^3$ do not diverge at $v$, by SOSP-implementability, the facts $c \tilde{P}_i^3 a$, $T^{\succ_A}_i(\tilde{P}_N^2) = a$, and $T^{\succ_A}_i(\tilde{P}_N^4) = c$ together imply that $\tilde{P}_i^3$ and $\tilde{P}_i^4$ do not diverge at $v$. Moreover, since $\tilde{P}_i^3$ and $\tilde{P}_i^4$ do not diverge at $v$, by SOSP-implementability, the facts $a \tilde{P}_i^4 b$, $T^{\succ_A}_i(\tilde{P}_N^1) = a$, and $T^{\succ_A}_i(\tilde{P}_N^3) = b$ together imply that $\tilde{P}_i^1$ and $\tilde{P}_i^4$ do not diverge at $v$. Combining all these observations, we have a contradiction to the fact that $v$ has at least two edges.
				
				\item Suppose $\eta^{NA}(v) = j$. 
				\\
				By SOSP-implementability, the facts $a \tilde{P}_j^1 c$, $T^{\succ_A}_j(\tilde{P}_N^1) = c$, and $T^{\succ_A}_j(\tilde{P}_N^3) = a$ together imply that $\tilde{P}_j^1$ and $\tilde{P}_j^3$ do not diverge at $v$. Since $\tilde{P}_j^1$ and $\tilde{P}_j^3$ do not diverge at $v$, by SOSP-implementability, the facts $b \tilde{P}_j^3 c$, $T^{\succ_A}_j(\tilde{P}_N^1) = c$, and $T^{\succ_A}_j(\tilde{P}_N^2) = b$ together imply that $\tilde{P}_j^2$ and $\tilde{P}_j^3$ do not diverge at $v$. Combining all these observations, we have a contradiction to the fact that $v$ has at least two edges.	
			\end{enumerate}
			
			\noindent \textbf{\textsc{Case} 2}: Suppose $s^* > 1$.
			
			Recall that $X^{s^*-1}(P'_N)$ is the set of assigned objects up to Stage $s^* - 1$ (including Stage $s^* - 1$) of $T^{\succ_A}$ at $P'_N$. Fix a preference $\hat{P} \in \mathbb{L}(X^{s^*-1}(P'_N))$ over these objects. Consider the domain $\tilde{\mathcal{P}}_N \subseteq \mathbb{L}^n(A)$ with only four preference profiles presented in Table \ref{preference choice only if case 2}.\footnote{For instance, $\hat{P} a b c \ldots$ denotes a preference where objects in $X^{s^*-1}(P'_N)$ are ranked at the top according to the preference $\hat{P}$, objects $a$, $b$, and $c$ are ranked consecutively after that (in that order), and the ranking of the rest of the objects is arbitrarily.}
			\begin{table}[H]
				\centering
				\begin{tabular}{@{}c|cccc@{}}
					\hline
					$\mbox{Preference profiles}$ & $\mbox{Agent } i$ & $\mbox{Agent } j$ & $\ldots$ & $\mbox{Agent } l$ \\ \hline
					\hline
					$\tilde{P}_N^1$ & $\hat{P}abc\ldots$ & $\hat{P}acb\ldots$ & $\ldots$ & $P'_l$ \\ \hline
					$\tilde{P}_N^2$ & $\hat{P}bac\ldots$ & $\hat{P}bac\ldots$ & $\ldots$ & $P'_l$ \\ \hline
					$\tilde{P}_N^3$ & $\hat{P}bca\ldots$ & $\hat{P}abc\ldots$ & $\ldots$ & $P'_l$ \\ \hline
					$\tilde{P}_N^4$ & $\hat{P}cab\ldots$ & $\hat{P}abc\ldots$ & $\ldots$ & $P'_l$ \\ \hline
				\end{tabular}
				\caption{Preference profiles of $\tilde{\mathcal{P}}_N$}
				\label{preference choice only if case 2}
			\end{table}	
			
			In Table \ref{Outcome of hex only if part case 2}, we present some facts regarding the outcome of $T^{\succ_A}$ on the domain $\tilde{\mathcal{P}}_N$ that can be deduced by the construction of the domain $\tilde{\mathcal{P}}_N$ along with the assumptions for Case 2. The verification of these facts is left to the reader.
			\begin{table}[H]
				\centering
				\begin{tabular}{@{}c|cc|cc@{}}
					\hline
					Preference profiles & $\mbox{Agent } i$ & $\mbox{Agent } j$ & $T^{\succ_A}_i$ & $T^{\succ_A}_j$ \\ \hline
					\hline
					$\tilde{P}_N^1$ & $\hat{P}abc\ldots$ & $\hat{P}acb\ldots$ & $a$ & $c$ \\ \hline
					$\tilde{P}_N^2$ & $\hat{P}bac\ldots$ & $\hat{P}bac\ldots$ & $a$ & $b$ \\ \hline
					$\tilde{P}_N^3$ & $\hat{P}bca\ldots$ & $\hat{P}abc\ldots$ & $b$ & $a$ \\ \hline
					$\tilde{P}_N^4$ & $\hat{P}cab\ldots$ & $\hat{P}abc\ldots$ & $c$ & $a$ \\ \hline
				\end{tabular}
				\caption{Partial outcome of $T^{\succ_A}$ on $\tilde{\mathcal{P}}_N$}
				\label{Outcome of hex only if part case 2}
			\end{table}
			
			Using a similar argument as for Case 1, we get a contradiction. This completes the proof of the ``only-if'' part of Theorem \ref{theorem SOSP FPTTC restricted}.		
		\end{proof}

		\section{Proof of Theorem \ref{theorem equivalent condition to bipolar seq}}\label{appendix proof of theo equivalent condition to bipolar seq}
		
		\setcounter{equation}{0}
		\setcounter{table}{0}
		\setcounter{figure}{0}

		The ``if'' part of the theorem is straightforward. We proceed to prove the ``only-if'' part. To do so, we prove the contrapositive. Suppose there exist an agent $i^* \in N$ and two objects $a^*, b^* \in A$ such that $rank(i^*, \succ_{a^*}) \leq |A| - 2$ and $rank(i^*, \succ_{a^*}) \neq rank(i^*, \succ_{b^*})$. Without loss of generality, assume that for all $l \in U(i^*, \succ_{a^*})$, $rank(l, \succ_{a^*}) = rank(l, \succ_{b})$ for all $b \in A$. Let $rank(i^*, \succ_{a^*}) = m^*$ and let $A' \subseteq A \setminus \{a^*, b^*\}$ be such that $|A'| = m^* - 1$. Clearly, $m^* \leq |A| - 2$. Furthermore, $A'$ is well-defined since $m^* \leq |A| - 2$.
		
		Fix a preference $\hat{P} \in \mathbb{L}(A)$ such that $A' \hat{P} (A \setminus A')$. Consider the preference profile $\tilde{P}_N \in \mathbb{L}^n(A)$ such that $\tilde{P}_i = \hat{P}$ for all $i \in N$. Since $rank(l, \succ_{a^*}) = rank(l, \succ_{b})$ for all $l \in U(i^*, \succ_{a^*})$ and all $b \in A$, and $rank(i^*, \succ_{a^*}) = m^*$, it follows from the construction of $\tilde{P}_N$ that $I^{m^*-1}(\tilde{P}_N) = U(i^*, \succ_{a^*})$ and $X^{m^*-1}(\tilde{P}_N) = A'$. The facts $A' \subseteq A \setminus \{a^*, b^*\}$ and $X^{m^*-1}(\tilde{P}_N) = A'$ together imply $a^*, b^* \in A_{m^*}(\tilde{P}_N)$. Since $I^{m^*-1}(\tilde{P}_N) = U(i^*, \succ_{a^*})$, $rank(l, \succ_{a^*}) = rank(l, \succ_{b^*})$ for all $l \in U(i^*, \succ_{a^*})$, $rank(i^*, \succ_{a^*}) \neq rank(i^*, \succ_{b^*})$, and $a^*, b^* \in A_{m^*}(\tilde{P}_N)$, it follows that $|N_{m^*}(\tilde{P}_N)| \geq 2$. Moreover, since $X^{m^*-1}(\tilde{P}_N) = A'$, $|A'| = m^* - 1$, and $m^* \leq |A| - 2$, we have $|A_{m^*}(\tilde{P}_N)| \geq 3$. However, the facts $|N_{m^*}(\tilde{P}_N)| \geq 2$ and $|A_{m^*}(\tilde{P}_N)| \geq 3$ together imply that $T^{\succ_A}$ is not a weak serial dictatorship. This completes the proof of the ``only-if'' part of Theorem \ref{theorem equivalent condition to bipolar seq}.	
		\hfill
		\qed

		\section{Proof of Theorem \ref{theorem SOSP FPTTC unrestricted}}\label{appendix proof of theo SOSP FPTTC unrestricted}
		
		\setcounter{equation}{0}
		\setcounter{table}{0}
		\setcounter{figure}{0}

		The ``if'' part of the theorem is straightforward. We proceed to prove the ``only-if'' part. Let $T^{\succ_A}$ be an SOSP-implementable FPTTC rule on the unrestricted domain $\mathbb{L}^n(A \cup \{a_0\})$. Assume for contradiction that $T^{\succ_A}$ is not a serial dictatorship. Then, there exist two agents $i,j \in N$ and two objects $a,b \in A$ such that $i \succ_a j$ and $j \succ_b i$.
		
		Fix a preference $\hat{P} \in \mathbb{L}(A \cup \{a_0\})$ such that $\tau(\hat{P}) = a_0$. Consider the domain $\tilde{\mathcal{P}}_N \subseteq \mathbb{L}^n(A \cup \{a_0\})$ with only three preference profiles presented (together with the outcomes of $T^{\succ_A}$) in Table \ref{preference choice only if unrestricted}. Here, $l$ denotes an agent (might be empty) other than $i$ and $j$.
		\begin{table}[H]
			\centering
			\begin{tabular}{@{}c|cccc|cc@{}}
				\hline
				$\mbox{Preference profiles}$ & $\mbox{Agent } i$ & $\mbox{Agent } j$ & $\ldots$ & $\mbox{Agent } l$ & $T^{\succ_A}_i$ & $T^{\succ_A}_j$ \\ \hline
				\hline
				$\tilde{P}_N^1$ & $a b a_0\ldots$ & $a a_0\ldots$ & $\ldots$ & $\hat{P}$ & $a$ & $a_0$ \\ \hline
				$\tilde{P}_N^2$ & $b a_0\ldots$ & $b a a_0\ldots$ & $\ldots$ & $\hat{P}$ & $a_0$ & $b$ \\ \hline
				$\tilde{P}_N^3$ & $b a a_0\ldots$ & $a b a_0\ldots$ & $\ldots$ & $\hat{P}$ & $b$ & $a$ \\ \hline
			\end{tabular}
			\caption{Preference profiles of $\tilde{\mathcal{P}}_N$}
			\label{preference choice only if unrestricted}
		\end{table}
		
		Since $T^{\succ_A}$ is SOSP-implementable on $\mathbb{L}^n(A\cup \{a_0\})$, it must be SOSP-implementable on the domain $\tilde{\mathcal{P}}_N$. Let $\tilde{G}$ be an SOSP mechanism that implements $T^{\succ_A}$ on $\tilde{\mathcal{P}}_N$. 
		
		Note that since $T^{\succ_A}(\tilde{P}_N^1) \neq T^{\succ_A}(\tilde{P}_N^2)$, there exists a node in the SOSP mechanism $\tilde{G}$ that has at least two edges. Also, note that since each agent $l \in N \setminus \{i,j\}$ has exactly one preference in $\tilde{\mathcal{P}}_l$, whenever there are at least two outgoing edges from a node, that node must be assigned to some agent in $\{i,j\}$. Consider the first node (from the root) $v$ that has at least two edges. We distinguish the following two cases.\medskip
		\\
		\noindent\textbf{\textsc{Case} 1}: Suppose $\eta^{NA}(v) = i$.
		
		By SOSP-implementability, the facts $b \tilde{P}_i^2 a_0$, $T^{\succ_A}_i(\tilde{P}_N^2) = a_0$, and $T^{\succ_A}_i(\tilde{P}_N^3) = b$ together imply that $\tilde{P}_i^2$ and $\tilde{P}_i^3$ do not diverge at $v$. Since $\tilde{P}_i^2$ and $\tilde{P}_i^3$ do not diverge at $v$, by SOSP-implementability, the facts $a \tilde{P}_i^3 a_0$, $T^{\succ_A}_i(\tilde{P}_N^2) = a_0$, and $T^{\succ_A}_i(\tilde{P}_N^1) = a$ together imply that $\tilde{P}_i^1$ and $\tilde{P}_i^3$ do not diverge at $v$. Combining all these observations, we have a contradiction to the fact that $v$ has at least two edges.\medskip
		\\
		\noindent\textbf{\textsc{Case} 2}: Suppose $\eta^{NA}(v) = j$.
		
		By SOSP-implementability, the facts $a \tilde{P}_j^1 a_0$, $T^{\succ_A}_j(\tilde{P}_N^1) = a_0$, and $T^{\succ_A}_j(\tilde{P}_N^3) = a$ together imply that $\tilde{P}_j^1$ and $\tilde{P}_j^3$ do not diverge at $v$. Since $\tilde{P}_j^1$ and $\tilde{P}_j^3$ do not diverge at $v$, by SOSP-implementability, the facts $b \tilde{P}_j^3 a_0$, $T^{\succ_A}_j(\tilde{P}_N^1) = a_0$, and $T^{\succ_A}_j(\tilde{P}_N^2) = b$ together imply that $\tilde{P}_j^2$ and $\tilde{P}_j^3$ do not diverge at $v$. Combining all these observations, we have a contradiction to the fact that $v$ has at least two edges.\medskip
		
		Since Cases 1 and 2	are exhaustive, this completes the proof of the ``only-if'' part of Theorem \ref{theorem SOSP FPTTC unrestricted}.		
		\hfill
		\qed

		\section{Proofs of Proposition \ref{proposition SSP implies SOSP}}\label{appendix proof of proposition SSP implies SOSP}
		
		\setcounter{equation}{0}
		\setcounter{table}{0}
		\setcounter{figure}{0}

		Fix an arbitrary domain of preference profiles $\mathcal{P}_N$. Let $G$ be a simple OSP mechanism on $\mathcal{P}_N$. Consider the assignment rule $f^G$ on $\mathcal{P}_N$ implemented by $G$. Consider an agent $i \in N$, a node $v$ such that $\eta^{NA}(v) = i$, and preference profiles $P_N, P'_N, \tilde{P}_N \in \mathcal{P}_N$ passing through $v$ such that (i) $P_i$ and $P'_i$ do not diverge at $v$ and (ii) $P_i$ and $\tilde{P}_i$ diverge at $v$. We show that $f^G_i(P'_N) R_i f^G_i(\tilde{P}_N)$.
		
		Since $P_i$ and $P'_i$ do not diverge at $v$, the fact that $G$ is a simple mechanism implies that $f^G(P'_N) = f^G(P_i, P'_{-i})$. This, in particular, means 
		\begin{equation}\label{equation simple implies SOSP}
			f^G_i(P'_N) = f^G_i(P_i, P'_{-i}).
		\end{equation}
		The fact that both $P_N$ and $P'_N$ pass through $v$ implies that $(P_i, P'_{-i})$ passes through $v$. Consider the preference profiles $(P_i, P'_{-i})$ and $\tilde{P}_N$. Since both of them pass through $v$ at which $P_i$ and $\tilde{P}_i$ diverge, by obvious strategy-proofness of $G$, we have $f^G_i(P_i, P'_{-i}) R_i f^G_i(\tilde{P}_N)$. This, together with \eqref{equation simple implies SOSP}, implies $f^G_i(P'_N) R_i f^G_i(\tilde{P}_N)$. This completes the proof of Proposition \ref{proposition SSP implies SOSP}.
		\hfill
		\qed

		\section{Proof of Theorem \ref{theorem SSP SOSP FPTTC}}\label{appendix proof of theorem SSP SOSP FPTTC}
		
		\setcounter{equation}{0}
		\setcounter{table}{0}
		\setcounter{figure}{0}

		We first make a straightforward observation to facilitate the proof.

		\begin{obs}\label{observation 2 objects SSP}
			Suppose $|A| = 2$. On the domain $\mathbb{L}^n(A)$, every FPTTC rule is simply strategy-proof.
		\end{obs}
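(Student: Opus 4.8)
The plan is to produce, for every priority structure $\succ_A$ with $|A| = 2$, an explicit \emph{simple} OSP mechanism on $\mathbb{L}^n(A)$ that implements $T^{\succ_A}$. Write $A = \{a,b\}$, and note that on $\mathbb{L}^n(A)$ each agent has exactly one of two preferences (the one ranking $a$ first or the one ranking $b$ first), so very shallow trees will suffice. I would split into two cases according to whether the two objects share their top-priority agent.

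\textit{Case $\tau(\succ_a) = \tau(\succ_b) = i$.} Here agent $i$ owns both objects at Step $1$ of $T^{\succ_A}$; she is assigned her favorite object and leaves, after which the remaining object is owned --- and, being acceptable, immediately received --- by the highest-$\succ$ agent still present (if any), while every other agent gets $a_0$. So $T^{\succ_A}$ is implemented by the mechanism whose root is assigned to $i$, with two outgoing edges (one per possible preference of $i$), each leading to a leaf carrying the corresponding allocation. Only $i$ is ever called, and only once, so the mechanism is simple; and since following her true preference gives $i$ her most preferred object, no deviation can do better, so it is OSP.

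\textit{Case $\tau(\succ_a) = i \neq j = \tau(\succ_b)$.} A short run of the TTC procedure shows that at any $P_N \in \mathbb{L}^n(A)$, agents outside $\{i,j\}$ receive $a_0$, and: $i$ receives $b$ and $j$ receives $a$ exactly when $\tau(P_i) = b$ and $\tau(P_j) = a$; in every other case $i$ receives $a$ and $j$ receives $b$. I would then implement this by the mechanism with $j$ at the root: the edge for $\tau(P_j) = b$ goes directly to the leaf $[\,i \mapsto a,\ j \mapsto b\,]$, while the edge for $\tau(P_j) = a$ leads to a node assigned to $i$ with two edges, to the leaves $[\,i \mapsto a,\ j \mapsto b\,]$ (if $\tau(P_i) = a$) and $[\,i \mapsto b,\ j \mapsto a\,]$ (if $\tau(P_i) = b$). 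No path repeats an agent, so the mechanism is simple; checking OSP reduces to a couple of comparisons at the $j$-node and a couple at the $i$-node, each of which holds because in its branch the acting agent effectively just reports which object she ranks first, and so obtains (at least weakly) that object.

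The only point requiring care --- and where I would be most careful --- is verifying that these small trees genuinely reproduce $T^{\succ_A}$, i.e., correctly tracking ownership of the leftover object between Steps $1$ and $2$ of the TTC procedure: for instance, that deleting $i$ does not change who is $\succ_b$-maximal, because the globally $\succ_b$-maximal agent $j$ differs from $i$, and symmetrically. Once that bookkeeping is settled, simplicity is read directly off the tree shape and OSP follows from the observation that, in every branch, the acting agent obtains her reported favorite object, so truth-telling is obviously dominant. Since the two cases are exhaustive, this establishes the observation.
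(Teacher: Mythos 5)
Your proposal is correct, and it supplies exactly the routine two-case construction the paper has in mind: the paper states this observation as ``straightforward'' and gives no proof, and your explicit depth-one and depth-two trees (together with the ownership bookkeeping you flag, which indeed checks out because the leftover object's $\succ$-maximal remaining agent is unaffected by removing the other top-priority agent) are the natural way to fill that gap. Nothing is missing.
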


		\begin{proof}[\textbf{Completion of the proof of Theorem \ref{theorem SSP SOSP FPTTC}.}]
			The ``only-if'' part of the theorem follows from Proposition \ref{proposition SSP implies SOSP}, we proceed
			to prove the ``if'' part. Let $T^{\succ_A}$ be an SOSP-implementable FPTTC rule on $\mathcal{P}_N$. We distinguish the following two cases.\medskip
			\\
			\noindent\textbf{\textsc{Case} 1}: Suppose $\mathcal{P}_N = \mathbb{L}^n(A)$.
			
			From Theorem \ref{theorem SOSP FPTTC restricted}, it follows that $T^{\succ_A}$ is a weak serial dictatorship. This, along with Observation \ref{observation 2 objects SSP}, implies that $T^{\succ_A}$ is simply strategy-proof.\medskip
			\\
			\noindent\textbf{\textsc{Case} 2}: Suppose $\mathcal{P}_N = \mathbb{L}^n(A \cup \{a_0\})$.
			
			The proof for this case is straightforward.
		\end{proof}

		\section{Relation between SOSP-implementability and simple strategy-proofness beyond FPTTC rules}\label{appendix examples}
		
		\setcounter{equation}{0}
		\setcounter{table}{0}
		\setcounter{figure}{0}

		Example \ref{example sosp but not ssp restricted} shows that not every SOSP-implementable assignment rule is simply strategy-proof on the restricted domain $\mathbb{L}^n(A)$. A similar result for the unrestricted domain $\mathbb{L}^n(A \cup \{a_0\})$ is shown in Example \ref{example sosp but not ssp unrestricted}. It should be noted that the assignment rules constructed in these two examples are not Pareto efficient.

		\begin{example}\label{example sosp but not ssp restricted}
			Consider an allocation problem with three agents $N = \{1, 2, 3\}$ and three objects $A = \{a_1, a_2, a_3\}$. In Figure \ref{tree sosp not ssp restricted}, we provide an SOSP mechanism $G$ on $\mathbb{L}^3(A)$.

			\begin{figure}[H]
				\centering
				\begin{tikzpicture}
					[
					grow                    = down,
					sibling distance        = 18em,
					level distance          = 5em,
					edge from parent/.style = {draw, -latex},
					every node/.style       = {font=\footnotesize}
					]
					\node [root] {1}
					child { node [root] {2}
						[
						grow                    = down,
						sibling distance        = 6em,
						level distance          = 5em,
						edge from parent/.style = {draw, -latex},
						every node/.style       = {font=\footnotesize}
						]
						child { node [env] {$(a_1,a_2,a_3)$}
							edge from parent node [left] {$\cdot a_2 \cdot a_3 \cdot$} }
						child { node [env] {$(a_1,a_3,a_2)$}
							edge from parent node [right] {$\cdot a_3 \cdot a_2 \cdot$} }
						edge from parent node [left] {$\cdot a_1 \cdot a_2 \cdot$} }
					child { node [root] {2}
						[
						grow                    = down,
						sibling distance        = 6em,
						level distance          = 5em,
						edge from parent/.style = {draw, -latex},
						every node/.style       = {font=\footnotesize}
						]
						child { node [env] {$(a_2,a_1,a_3)$}
							edge from parent node [left] {$\tau(P_2) = a_1$} }
						child { node [env] {$(a_1,a_2,a_3)$}
							edge from parent node {$\tau(P_2) = a_2$} }
						child { node [root] {1}
							[
							grow                    = down,
							sibling distance        = 5em,
							level distance          = 5em,
							edge from parent/.style = {draw, -latex},
							every node/.style       = {font=\footnotesize}
							]
							child { node [env] {$(a_2, a_3, a_1)$}
								edge from parent node [left] {$\tau(P_1) = a_2$} }
							child { node [env] {$(a_2, a_3, a_0)$}
								edge from parent node [right] {$a_3 a_2 a_1 a_0$} }
							edge from parent node [right] {$\tau(P_2) = a_3$} }
						edge from parent node [right] {$\cdot a_2 \cdot a_1 \cdot$} };
				\end{tikzpicture}
				\caption{SOSP mechanism $G$ for Example \ref{example sosp but not ssp restricted}}
				\label{tree sosp not ssp restricted}
			\end{figure}

			Consider the assignment rule $f^G$ on $\mathbb{L}^3(A)$ implemented by $G$. By definition, $f^G$ is SOSP-implementable on $\mathbb{L}^3(A)$. We argue that $f^G$ is not simply strategy-proof on $\mathbb{L}^3(A)$. To do that, it is enough to show that $f^G$ is not simply strategy-proof on some subdomain of $\mathbb{L}^3(A)$.
			
			Consider the domain $\tilde{\mathcal{P}}_N \subseteq \mathbb{L}^3(A)$ with only five preference profiles presented (together with the outcomes of $f^G$) in Table \ref{preference choice not SSP restricted}.
			\begin{table}[H]
				\centering
				\begin{tabular}{@{}c|ccc|ccc@{}}
					\hline
					$\mbox{Preference profiles}$ & $\mbox{Agent } 1$ & $\mbox{Agent } 2$ & $\mbox{Agent } 3$ & $f^G_1$ & $f^G_2$ & $f^G_3$ \\ \hline
					\hline
					$\tilde{P}_N^1$ & $a_1 a_2 a_3 a_0$ & $a_1 a_2 a_3 a_0$ & $a_1 a_2 a_3 a_0$ & $a_1$ & $a_2$ & $a_3$ \\ \hline
					$\tilde{P}_N^2$ & $a_2 a_1 a_3 a_0$ & $a_1 a_3 a_2 a_0$ & $a_1 a_2 a_3 a_0$ & $a_2$ & $a_1$ & $a_3$ \\ \hline
					$\tilde{P}_N^3$ & $a_2 a_1 a_3 a_0$ & $a_3 a_1 a_2 a_0$ & $a_1 a_2 a_3 a_0$ & $a_2$ & $a_3$ & $a_1$ \\ \hline
					$\tilde{P}_N^4$ & $a_3 a_2 a_1 a_0$ & $a_2 a_1 a_3 a_0$ & $a_1 a_2 a_3 a_0$ & $a_1$ & $a_2$ & $a_3$ \\ \hline
					$\tilde{P}_N^5$ & $a_3 a_2 a_1 a_0$ & $a_3 a_1 a_2 a_0$ & $a_1 a_2 a_3 a_0$ & $a_2$ & $a_3$ & $a_0$ \\ \hline
				\end{tabular}
				\caption{Preference profiles for Example \ref{example sosp but not ssp restricted}}
				\label{preference choice not SSP restricted}
			\end{table}
			
			Assume for contradiction that $f^G$ is simply strategy-proof on $\tilde{\mathcal{P}}_N$. So, there exists a simple OSP mechanism $\tilde{G}$ that implements $f^G$ on $\tilde{\mathcal{P}}_N$. Note that since $f^G(\tilde{P}_N^1) \neq f^G(\tilde{P}_N^2)$, there exists a node in the simple OSP mechanism $\tilde{G}$ that has at least two edges. Also, note that since agent 3 has exactly one preference in $\tilde{\mathcal{P}}_3$, whenever there are at least two outgoing edges from a node, the node must be assigned either to agent 1 or to agent 2. Consider the first node (from the root) $v$ that has at least two edges. We distinguish the following two cases.
			\begin{enumerate}[(i)]
				\item\label{item SOSP not SSP restricted 1} Suppose $\eta^{NA}(v) = 1$. 
				\\
				By obvious strategy-proofness of $\tilde{G}$, the facts $a_2 \tilde{P}_1^4 a_1$, $f^G_1(\tilde{P}_N^4) = a_1$, and $f^G_1(\tilde{P}_N^3) = a_2$ together imply that $\tilde{P}_1^3$ and $\tilde{P}_1^4$ do not diverge at $v$. This, together with the facts that $\tilde{P}_1^4 = \tilde{P}_1^5$, $\tilde{P}_2^3 = \tilde{P}_2^5$, and $f^G(\tilde{P}_N^3) \neq f^G(\tilde{P}_N^5)$, implies that there exists a node $v'$ at which $\tilde{P}_1^3$ and $\tilde{P}_1^4$ diverge. Clearly, $v$ and $v'$ are distinct nodes appearing in the same path such that $\eta^{NA}(v) = \eta^{NA}(v') = 1$. This contradicts the fact that $\tilde{G}$ is a simple mechanism.
				
				\item\label{item SOSP not SSP restricted 2} Suppose $\eta^{NA}(v) = 2$. 
				\\
				First note that since $\tilde{G}$ is a simple OSP mechanism on $\tilde{\mathcal{P}}_N$, by Proposition \ref{proposition SSP implies SOSP}, $\tilde{G}$ is SOSP on $\tilde{\mathcal{P}}_N$. By obvious strategy-proofness of $\tilde{G}$, the facts $a_1 \tilde{P}_2^1 a_2$, $f^G_2(\tilde{P}_N^1) = a_2$, and $f^G_2(\tilde{P}_N^2) = a_1$ together imply that $\tilde{P}_2^1$ and $\tilde{P}_2^2$ do not diverge at $v$. Since $\tilde{P}_2^1$ and $\tilde{P}_2^2$ do not diverge at $v$, by strongly obvious strategy-proofness of $\tilde{G}$, the facts $a_3 \tilde{P}_2^2 a_2$, $f^G_2(\tilde{P}_N^1) = a_2$, and $f^G_2(\tilde{P}_N^3) = a_3$ together imply that $\tilde{P}_2^2$ and $\tilde{P}_2^3$ do not diverge at $v$. This, together with the facts that $\tilde{P}_1^2 = \tilde{P}_1^3$ and $f^G(\tilde{P}_N^2) \neq f^G(\tilde{P}_N^3)$, implies that there exists a node $v'$ at which $\tilde{P}_2^2$ and $\tilde{P}_2^3$ diverge. Clearly, $v$ and $v'$ are distinct nodes appearing in the same path such that $\eta^{NA}(v) = \eta^{NA}(v') = 2$. This contradicts the fact that $\tilde{G}$ is a simple mechanism.
			\end{enumerate}
			Since Cases \ref{item SOSP not SSP restricted 1} and \ref{item SOSP not SSP restricted 2} are exhaustive, it follows that $f^G$ is not simply strategy-proof on $\tilde{\mathcal{P}}_N$.
			\hfill
			$\Diamond$
		\end{example}

		\begin{example}\label{example sosp but not ssp unrestricted}
			Consider an allocation problem with three agents $N = \{1, 2, 3\}$ and three objects $A = \{a_1, a_2, a_3\}$. In Figure \ref{tree sosp not ssp unrestricted}, we provide an SOSP mechanism $G$ on $\mathbb{L}^3(A \cup \{a_0\})$. We use the following notation in Figure \ref{tree sosp not ssp unrestricted}: by $\cdot a_1 \cdot \{a_2, a_3\} \cdot$, we denote the set of preferences where $a_1$ is preferred to both $a_2$ and $a_3$.

			\begin{figure}[H]
				\centering
				\begin{tikzpicture}
					[
					grow                    = down,
					sibling distance        = 18em,
					level distance          = 5em,
					edge from parent/.style = {draw, -latex},
					every node/.style       = {font=\footnotesize}
					]
					\node [root] {1}
					child { node [root] {2}
						[
						grow                    = down,
						sibling distance        = 6em,
						level distance          = 5em,
						edge from parent/.style = {draw, -latex},
						every node/.style       = {font=\footnotesize}
						]
						child { node [env] {$(a_0,a_0,a_0)$}
							edge from parent node [left] {$\cdot a_0 \cdot a_2 \cdot$} }
						child { node [env] {$(a_0,a_2,a_0)$}
							edge from parent node [right] {$\cdot a_2 \cdot a_0 \cdot$} }
						edge from parent node [left] {$\cdot a_0 \cdot a_1 \cdot$} }
					child { node [root] {2}
						[
						grow                    = down,
						sibling distance        = 6em,
						level distance          = 5em,
						edge from parent/.style = {draw, -latex},
						every node/.style       = {font=\footnotesize}
						]
						child { node [env] {$(a_1,a_0,a_0)$}
							edge from parent node [left] {$\cdot a_0 \cdot \{a_1, a_2\} \cdot$} }
						child { node [env] {$(a_0,a_1,a_0)$}
							edge from parent node {$\cdot a_1 \cdot \{a_0, a_2\} \cdot$} }
						child { node [root] {3}
							[
							grow                    = down,
							sibling distance        = 5em,
							level distance          = 5em,
							edge from parent/.style = {draw, -latex},
							every node/.style       = {font=\footnotesize}
							]
							child { node [root] {1}
								child { node [env] {$(a_1, a_2, a_3)$}
									edge from parent node [left] {$\cdot a_3 \cdot a_0 \cdot$} }
								child { node [env] {$(a_1, a_2, a_0)$}
									edge from parent node [right] {$\cdot a_0 \cdot a_3 \cdot$} }
								edge from parent node [left] {$\cdot a_3 \cdot a_0 \cdot$} }
							child { node [env] {$(a_1, a_2, a_0)$}
								edge from parent node [right] {$\cdot a_0 \cdot a_3 \cdot$} }
							edge from parent node [right] {$\cdot a_2 \cdot \{a_0, a_1\} \cdot$} }
						edge from parent node [right] {$\cdot a_1 \cdot a_0 \cdot$} };
				\end{tikzpicture}
				\caption{SOSP mechanism $G$ for Example \ref{example sosp but not ssp unrestricted}}
				\label{tree sosp not ssp unrestricted}
			\end{figure}

			Consider the assignment rule $f^G$ on $\mathbb{L}^3(A \cup \{a_0\})$ implemented by $G$. By definition, $f^G$ is SOSP-implementable on $\mathbb{L}^3(A \cup \{a_0\})$. We argue that $f^G$ is not simply strategy-proof on $\mathbb{L}^3(A \cup \{a_0\})$. To do that, it is enough to show that $f^G$ is not simply strategy-proof on some subdomain of $\mathbb{L}^3(A \cup \{a_0\})$.
			
			Consider the domain $\tilde{\mathcal{P}}_N \subseteq \mathbb{L}^3(A \cup \{a_0\})$ with only five preference profiles presented (together with the outcomes of $f^G$) in Table \ref{preference choice not SSP unrestricted}.
			\begin{table}[H]
				\centering
				\begin{tabular}{@{}c|ccc|ccc@{}}
					\hline
					$\mbox{Preference profiles}$ & $\mbox{Agent } 1$ & $\mbox{Agent } 2$ & $\mbox{Agent } 3$ & $f^G_1$ & $f^G_2$ & $f^G_3$ \\ \hline
					\hline
					$\tilde{P}_N^1$ & $a_0 a_1 a_2 a_3$ & $a_1 a_0 a_2 a_3$ & $a_3 a_2 a_1 a_0$ & $a_0$ & $a_0$ & $a_0$ \\ \hline
					$\tilde{P}_N^2$ & $a_1 a_3 a_0 a_2$ & $a_1 a_2 a_0 a_3$ & $a_3 a_2 a_1 a_0$ & $a_0$ & $a_1$ & $a_0$ \\ \hline
					$\tilde{P}_N^3$ & $a_1 a_3 a_0 a_2$ & $a_2 a_0 a_1 a_3$ & $a_3 a_2 a_1 a_0$ & $a_1$ & $a_2$ & $a_3$ \\ \hline
					$\tilde{P}_N^4$ & $a_2 a_1 a_0 a_3$ & $a_1 a_0 a_2 a_3$ & $a_3 a_2 a_1 a_0$ & $a_0$ & $a_1$ & $a_0$ \\ \hline
					$\tilde{P}_N^5$ & $a_2 a_1 a_0 a_3$ & $a_2 a_0 a_1 a_3$ & $a_3 a_2 a_1 a_0$ & $a_1$ & $a_2$ & $a_0$ \\ \hline
				\end{tabular}
				\caption{Preference profiles for Example \ref{example sosp but not ssp unrestricted}}
				\label{preference choice not SSP unrestricted}
			\end{table}
			
			Using a similar argument as for Example \ref{example sosp but not ssp restricted}, it follows from Table \ref{preference choice not SSP unrestricted} that $f^G$ is not simply strategy-proof on $\tilde{\mathcal{P}}_N$.
			\hfill
			$\Diamond$
		\end{example}

	\end{appendices}

	\setcitestyle{numbers}
	\bibliographystyle{plainnat}
	\bibliography{mybib}

\begin{thebibliography}{23}
\providecommand{\natexlab}[1]{#1}
\providecommand{\url}[1]{\texttt{#1}}
\expandafter\ifx\csname urlstyle\endcsname\relax
  \providecommand{\doi}[1]{doi: #1}\else
  \providecommand{\doi}{doi: \begingroup \urlstyle{rm}\Url}\fi

\bibitem[Ashlagi and Gonczarowski(2018)]{ashlagi2018stable}
Itai Ashlagi and Yannai~A Gonczarowski.
\newblock Stable matching mechanisms are not obviously strategy-proof.
\newblock \emph{Journal of Economic Theory}, 177:\penalty0 405--425, 2018.

\bibitem[Bade and Gonczarowski(2017)]{bade2017gibbardsatterthwaite}
Sophie Bade and Yannai~A. Gonczarowski.
\newblock Gibbard-satterthwaite success stories and obvious strategyproofness.
\newblock \emph{arXiv preprint arXiv:1610.04873}, 2017.

\bibitem[Chen and S{\"o}nmez(2006)]{chen2006school}
Yan Chen and Tayfun S{\"o}nmez.
\newblock School choice: an experimental study.
\newblock \emph{Journal of Economic theory}, 127\penalty0 (1):\penalty0
  202--231, 2006.

\bibitem[Ehlers and Klaus(2004)]{ehlers2004resource}
Lars Ehlers and Bettina Klaus.
\newblock Resource-monotonicity for house allocation problems.
\newblock \emph{International Journal of Game Theory}, 32\penalty0
  (4):\penalty0 545--560, 2004.

\bibitem[Ehlers et~al.(2002)Ehlers, Klaus, and P{\'a}pai]{ehlers2002strategy}
Lars Ehlers, Bettina Klaus, and Szilvia P{\'a}pai.
\newblock Strategy-proofness and population-monotonicity for house allocation
  problems.
\newblock \emph{Journal of Mathematical Economics}, 38\penalty0 (3):\penalty0
  329--339, 2002.

\bibitem[Ergin(2002)]{ergin2002efficient}
Haluk~I Ergin.
\newblock Efficient resource allocation on the basis of priorities.
\newblock \emph{Econometrica}, 70\penalty0 (6):\penalty0 2489--2497, 2002.

\bibitem[Gale and Shapley(1962)]{gale1962college}
David Gale and Lloyd~S Shapley.
\newblock College admissions and the stability of marriage.
\newblock \emph{The American Mathematical Monthly}, 69\penalty0 (1):\penalty0
  9--15, 1962.

\bibitem[Hassidim et~al.(2017)Hassidim, Marciano, Romm, and
  Shorrer]{hassidim2017mechanism}
Avinatan Hassidim, D{\'e}borah Marciano, Assaf Romm, and Ran~I Shorrer.
\newblock The mechanism is truthful, why aren't you?
\newblock \emph{American Economic Review}, 107\penalty0 (5):\penalty0 220--24,
  2017.

\bibitem[Hassidim et~al.(2018)Hassidim, Romm, and
  Shorrer]{hassidim2018strategic}
Avinatan Hassidim, Assaf Romm, and Ran~I Shorrer.
\newblock 'strategic' behavior in a strategy-proof environment.
\newblock \emph{Available at SSRN 2784659}, 2018.

\bibitem[Kesten(2006)]{kesten2006two}
Onur Kesten.
\newblock On two competing mechanisms for priority-based allocation problems.
\newblock \emph{Journal of Economic Theory}, 127\penalty0 (1):\penalty0
  155--171, 2006.

\bibitem[Li(2017)]{li2017obviously}
Shengwu Li.
\newblock Obviously strategy-proof mechanisms.
\newblock \emph{American Economic Review}, 107\penalty0 (11):\penalty0
  3257--87, 2017.

\bibitem[Mandal and Roy(2022a)]{mandal2022obviously}
Pinaki Mandal and Souvik Roy.
\newblock Obviously strategy-proof implementation of assignment rules: A new
  characterization.
\newblock \emph{International Economic Review}, 63\penalty0 (1):\penalty0
  261--290, 2022a.

\bibitem[Mandal and Roy(2022b)]{mandal2022outside}
Pinaki Mandal and Souvik Roy.
\newblock On obviously strategy-proof implementation of fixed priority top
  trading cycles with outside options.
\newblock \emph{Economics Letters}, 211:\penalty0 110239, 2022b.

\bibitem[P{\'a}pai(2000)]{papai2000strategyproof}
Szilvia P{\'a}pai.
\newblock Strategyproof assignment by hierarchical exchange.
\newblock \emph{Econometrica}, 68\penalty0 (6):\penalty0 1403--1433, 2000.

\bibitem[Pathak(2017)]{pathak2017really}
Parag~A Pathak.
\newblock What really matters in designing school choice mechanisms.
\newblock \emph{Advances in Economics and Econometrics}, 1:\penalty0 176--214,
  2017.

\bibitem[Pycia and Troyan(2019)]{pycia2019theory}
Marek Pycia and Peter Troyan.
\newblock A theory of simplicity in games and mechanism design.
\newblock \emph{Available at SSRN 2853563}, 2019.

\bibitem[Pycia and {\"U}nver(2017)]{pycia2017incentive}
Marek Pycia and M~Utku {\"U}nver.
\newblock Incentive compatible allocation and exchange of discrete resources.
\newblock \emph{Theoretical Economics}, 12\penalty0 (1):\penalty0 287--329,
  2017.

\bibitem[Rees-Jones(2018)]{rees2018suboptimal}
Alex Rees-Jones.
\newblock Suboptimal behavior in strategy-proof mechanisms: Evidence from the
  residency match.
\newblock \emph{Games and Economic Behavior}, 108:\penalty0 317--330, 2018.

\bibitem[Satterthwaite and Sonnenschein(1981)]{satterthwaite1981strategy}
Mark~A Satterthwaite and Hugo Sonnenschein.
\newblock Strategy-proof allocation mechanisms at differentiable points.
\newblock \emph{The Review of Economic Studies}, 48\penalty0 (4):\penalty0
  587--597, 1981.

\bibitem[Shapley and Scarf(1974)]{shapley1974cores}
Lloyd Shapley and Herbert Scarf.
\newblock On cores and indivisibility.
\newblock \emph{Journal of mathematical economics}, 1\penalty0 (1):\penalty0
  23--37, 1974.

\bibitem[Shorrer and S{\'o}v{\'a}g{\'o}(2018)]{shorrer2018obvious}
Ran~I Shorrer and S{\'a}ndor S{\'o}v{\'a}g{\'o}.
\newblock Obvious mistakes in a strategically simple college admissions
  environment: Causes and consequences.
\newblock \emph{Available at SSRN 2993538}, 2018.

\bibitem[Thomas(2020)]{thomas2020classification}
Clayton Thomas.
\newblock Classification of priorities such that deferred acceptance is
  obviously strategyproof.
\newblock \emph{arXiv preprint arXiv:2011.12367}, 2020.

\bibitem[Troyan(2019)]{troyan2019obviously}
Peter Troyan.
\newblock Obviously strategy-proof implementation of top trading cycles.
\newblock \emph{International Economic Review}, 60\penalty0 (3):\penalty0
  1249--1261, 2019.

\end{thebibliography}

\end{document}